\titleformat*{\paragraph}{\bfseries}
	\def\l@section{\@tocline{1}{0pt}{0pc}{1pc}{}} 
	\def\l@subsection{\@tocline{2}{0pt}{1pc}{2pc}{}} 
\def\@tocline#1#2#3#4#5#6#7{\relax
	\ifnum #1>\c@tocdepth 
	\else
	\par \addpenalty\@secpenalty\addvspace{#2}%
	\begingroup \hyphenpenalty\@M
	\@ifempty{#4}{%
		\@tempdima\csname r@tocindent\number#1\endcsname\relax
	}{%
		\@tempdima#4\relax
	}%
	\parindent\z@ \leftskip#3\relax \advance\leftskip\@tempdima\relax
	\rightskip\@pnumwidth plus4em \parfillskip-\@pnumwidth
	#5\leavevmode\hskip-\@tempdima
	\ifcase #1
	\or\or \hskip 1em \or \hskip 2em \else \hskip 3em \fi%
	#6\nobreak\relax
	\dotfill\hbox to\@pnumwidth{\@tocpagenum{#7}}\par
	\nobreak
	\endgroup
	\fi}
\pgfplotsset{compat=1.18}
\numberwithin{equation}{section}
\definecolor{webbrown}{rgb}{0.65, 0.16, 0.16}
\def\csname ver@etex.sty\endcsname{3000/12/31}
\Crefname{lemma}{Lemma}{Lemmata}
\Crefname{subsection}{Subsection}{Subsections}
\Crefname{conjecture}{Conjecture}{Conjectures}
\Crefname{question}{Question}{Questions}
\Crefname{warning}{Warning}{Warnings}
\newcommand{\breakcell}[2][c]{%
	\begin{tabular}[#1]{@{}c@{}}#2\end{tabular}}
\newcommand{\mf}[1]{\mathfrak{#1}}
\newcommand{\mc}[1]{\mathcal{#1}}
\newcommand{\mr}[1]{\mathrm{#1}}
\newcommand{\mb}[1]{\mathbb{#1}}
\newcommand{\ms}[1]{\mathsf{#1}}
\newcommand{\bbraket}[1]{\llbracket #1 \rrbracket}
\newcommand{\id}{\mathord{\mr{id}}}
\newcommand{\ord}{\mathord{\mr{ord}}}
\newcommand{\del}{\partial}
\newcommand{\dd}{\mr{d}}
\DeclareMathOperator*{\Res}{Res}
\newcommand{\Z}{\mb{Z}}
\newcommand{\R}{\mb{R}}
\newcommand{\C}{\mb{C}}
\renewcommand{\P}{\mb{P}}
\newcommand{\Mbar}{\overline{\mc{M}}}
\newcommand{\End}{\mathord{\mr{End}}}
\newcommand{\Spec}{\mathord{\mr{Spec}}}
\newcommand{\PIL}[1]{{}^{\textup{PI}}{#1}}
\newcommand{\PIR}[1]{{#1}^{\textup{PI}}}
\newcommand{\AiL}[1]{{}^{\textup{Ai}}{#1}}
\newcommand{\AiR}[1]{{#1}^{\textup{Ai}}}
\theoremstyle{plain}
\newtheorem{theorem}{Theorem}[section]
\newtheorem{proposition}[theorem]{Proposition}
\newtheorem{lemma}[theorem]{Lemma}
\newtheorem{corollary}[theorem]{Corollary}
\newtheorem{remark}[theorem]{Remark}
\newcommand{\thistheoremname}{}
\newtheorem*{genericthm*}{\thistheoremname}
\newenvironment{namedthm*}[1]
	{\renewcommand{\thistheoremname}{#1}%
		\begin{genericthm*}}
	{\end{genericthm*}}
\theoremstyle{definition}
\newtheorem{definition}[theorem]{Definition}
\title{The factorial growth of topological recursion}
\author[G.~Borot]{Ga{\"e}tan Borot}%
\address[G.~Borot]{
	Humboldt-Universit{\"a}t zu Berlin, Institut f{\"u}r Mathematik und Institut f{\"u}r Physik, Unter den Linden 6, 10099 Berlin, Germany}%
\email{gaetan.borot@hu-berlin.de}
\author[B.~Eynard]{Bertrand Eynard}%
\address[B.~Eynard]{
	Universit{\'e} Paris-Saclay, CNRS, CEA, Institut de Physique Th{\'e}orique, Gif-sur-Yvette, France %
	\& %
	Centre de Recherches Math{\'e}matiques de Montr{\'e}al, Universit{\'e} de Montr{\'e}al, QC, Canada
}%
\email{bertrand.eynard@ipht.fr}
\author[A.~Giacchetto]{Alessandro Giacchetto}%
\address[A.~Giacchetto]{
	ETH Z{\"u}rich, Departement Mathematik, Z{\"u}rich, Switzerland
}%
\email{alessandro.giacchetto@math.ethz.ch}
\subjclass[2020]{Primary 14H10, 14H70; Secondary 37K20, 05A16}
\begin{document}
\renewcommand{\hbar}{\hslash}

\begin{abstract}
	We show that the $n$-point, genus-$g$ correlation functions of topological recursion on any regular spectral curve with simple ramifications grow at most like $(2g - 2 + n)!$ as $g \rightarrow \infty$, which is the expected growth rate. This provides, in particular, an upper bound for many curve counting problems in large genus and serves as a preliminary step for a resurgence analysis.
\end{abstract}

\maketitle
\tableofcontents

\thispagestyle{empty}

\section{Introduction}
\label{sec:intro}

\subsection{Incipit}
The enumerative theory of Riemann surfaces spans a rich landscape of disciplines, including algebraic geometry, combinatorics, and theoretical physics. Problems such as computing volumes of moduli spaces of curves and intersection numbers, enumerating covers and triangulations of Riemann surfaces, and computing correlation functions in 2D quantum gravity and topological string theory, among others, share many interrelations and common structures. In particular, their complexity escalates rapidly with the genus. In quantum field theory, the growth is typically factorial in the number of loops, whereas in string theory, the contribution of worldsheets of genus $g$ is expected to be of order $(2g)!$ \cite{She91}. In the same complexity class as the latter, one encounters the problem of computing topological expansions of correlation functions in random matrix theory, various supersymmetric gauge theories, minimal models, and more generally, the semiclassical expansion of solutions to Lax-integrable PDEs. This was intensively discussed in theoretical physics in the early 1990s; see, for example, \cite{DS90,GM90,BK90,GZJ90} and the review \cite{DGZ95}.

In mathematics, the computation of Weil--Petersson volumes of the moduli space of hyperbolic surfaces of genus $g$ serves as a model case. In the early 2000s, Grushevsky and Schumacher--Trapani \cite{Gru01,ST01} established an upper bound of $(2g)!$ for the large genus Weil--Petersson volumes. Subsequent advancements by Mirzakhani and Zograf \cite{Zog,MZ15} improved the large genus bound to an asymptotic equivalent, up to an overall constant, which remains conjectural. The Mirzakhani--Zograf asymptotics have gained significant attention due to their profound implications in random hyperbolic geometry, where some quantities are only accessible in the large genus limit. Examples include the average value of the systole function \cite{MP19} and the (still conjectural) typical near-optimal spectral gap \cite{Wri20}. Similar asymptotic results have been obtained beyond Weil--Petersson volumes, such as for Masur--Veech volumes and Siegel--Veech constants \cite{Agg20,CMSZ20}, intersection numbers of $\psi$-classes on the moduli space of curves \cite{Agg21,DGZZ21,EGGGL}, and the number of cubic maps \cite{GK20}.

In fact, most of these problems are governed by the topological recursion formalism, originating in the work of Chekhov, Eynard, and Orantin. This formalism provides a recursion based on $2g - 2 + n$, which is equivalent to a set of Virasoro constraints for the quantities of interest (where $n$ is the number of boundaries or marked points). The initial data and coefficients of the recursion are determined by the spectral curve. The Virasoro constraints for the Airy spectral curve were the starting point for the large genus asymptotics of $\psi$-class intersection numbers derived by Aggarwal \cite{Agg21}. A natural question is how the large genus asymptotics of the topological recursion associated with a general spectral curve can be determined by a uniform method.

A possible way to approach the problem is through resurgence theory. Resurgence addresses the large genus asymptotic problem by analysing the singularities of the Borel transform of the all-genera generating series and aims to produce not only an asymptotic equivalent but also asymptotic expansions, including exponentially small corrections (transseries). We refer to \cite{MS16,ABS19,Mar} for recent introductions to resurgence and its applications. In theoretical physics, numerous problems of this kind have been treated (see, e.g., \cite{MSW08,ASV12,GIKM12,CESV16,BSSV23,GM23,GKKM24}), offering fine predictions supported by numerical studies of large genus behaviour, which often remain challenging to justify mathematically. The emerging picture suggests that all the necessary information concerning these singularities can be deciphered from the geometry of the spectral curve, following the strategy of \cite{EGGLS24,EGGGL}. Yet, a prerequisite for the resurgence analysis is determining the factorial growth rate (also called the Gevrey index) of the genus $g$ coefficient, or at least establishing an upper bound for it. This allows for the appropriate definition of the Borel transform and establishes its analyticity in a small disc around the origin in the Borel plane. While topological recursion is generally expected to exhibit a factorial growth of $(2g)!$, only a sub-optimal $(5g)!$ bound has been established so far \cite{Eyn}. The present work justifies the $(2g)!$ expectation in the form of an upper bound and provides stronger upper bounds that are uniform in $g$ and $n$.

\subsection{Main results}
Consider a spectral curve $\mc{S} = (\Sigma, x, y, \omega_{0,2})$, where $\Sigma$ is a smooth complex curve, $x$ has finitely many simple ramification points $a \in \mf{a}$ at which $\dd y$ is holomorphic, and $\omega_{0,2}$ is a symmetric meromorphic bi-differential having a double pole on the diagonal and no residue. The output of topological recursion on $\mc{S}$ is a collection of meromorphic symmetric multidifferentials $(\omega_{g,n})_{(g,n)}$ on $\Sigma^n$, called \emph{correlators}, having poles at ramification points and a collection of numbers $(\omega_{g,0})_{g \geq 2}$, called \emph{free energies}. All definitions will be reviewed in \cref{subsec:TR:SC}.

\begin{namedthm*}{Main theorem} \label{thm:main}
	Let $Z$ be local coordinate in an open set $U$ of $\Sigma \setminus \mf{a}$. For any compact $K \subset U$, there exist constants $\ms{T},\ms{B},\ms{C} > 0$ depending on the spectral curve, the local coordinate, and the compact set, such that the correlators obey an uniform upper bound for any $g \geq 0$ and $n > 0$ such that $2g - 2 + n > 0$:
	\begin{equation}
		\forall z_1,\ldots,z_n \in K
		\qquad
		\bigg|\frac{\omega_{g,n}(z_1,\ldots,z_n)}{\dd Z(z_1) \cdots \dd Z_n(z_n)}\bigg|
		\leq
		\ms{T} \, \frac{(3g - 3 + 2n)!}{\mathsf{B}^g \, \ms{C}^n \, g! \, n!} \,.
	\end{equation}
	If we fix $n > 0$, there exist constants $\ms{S}_{n},\ms{A} > 0$ such that
	\begin{equation}
		\bigg|\frac{\omega_{g,n}(z_1,\ldots,z_n)}{\dd Z(z_1) \cdots \dd Z_n(z_n)}\bigg|
		\leq
		\ms{S}_{n} \frac{\Gamma(2g - 2 + n)}{\ms{A}^{2g - 2 + n}} \,.
	\end{equation}
	Likewise, for the free energies, there exist constants $\ms{S}_0,\ms{A}_{0}$ depending only on the spectral curve, such that for any $g \geq 2$
	\begin{equation}
		|\omega_{g,0}| \leq \ms{S}_0 \frac{\Gamma(2g - 2)}{\ms{A}_0^{2g - 2}} \,.
	\end{equation}
\end{namedthm*}

This result appears in the text as \cref{cor:cor} and \cref{thm:free}. In applications of topological recursion to enumerative problems, the quantities of interest can be extracted by computing \emph{generalised periods} of the correlators, i.e. $(I_1 \otimes \cdots \otimes I_n)[\omega_{g,n}]$ for specific linear forms $I_i$ on the space of meromorphic differentials on $\Sigma$ with poles at the ramification points. Under weak assumptions on these linear forms, the generalised periods satisfy similar bounds (cf. \cref{thm:bound:periods}). A remarkable consequence is that the generating series in $\hbar$ 
\begin{equation}
	\sum_{g \geq 0} \hbar^{2g - 2 + n} (I_1 \otimes \cdots \otimes I_n)[\omega_{g,n}]
	\qquad\quad\text{or}\qquad\quad
	\sum_{\substack{g \geq 0 \\ n > 0}} \frac{\hbar^{2g - 2 + n}}{n!} I^{\otimes n}[\omega_{g,n}]
\end{equation}
are Gevrey-$1$. In particular the wave function, obtained when $I$ is the integration along a path in $\Sigma \setminus \mf{a}$, is Gevrey-$1$ (\cref{thm:wave}).

The values we obtain for the exponential growth rate of the upper bounds (the constants $\ms{A}, \ms{B}, \ms{C}$) can be computed from the geometry of the spectral curve, but they are not optimal. As an illustration, a non-exhaustive list of applications is presented in \cref{tbl:exmpls}, and we describe the corresponding upper bounds with explicit exponential growth rates in \cref{sec:exmpls}.

\begin{table}
\centering
\begin{tabular}[t]{cc}
	\toprule
	Spectral curve &
	Enumerative problem \\
	\bottomrule
		$x(z) = z^2, \quad y(z) = -z/2$ &
		\breakcell{$\psi$-class intersection numbers, \\ metric ribbon graphs} \\
	\midrule 
		$x(z) = z^2, \quad y(z) = -\sin(2\pi z)/4\pi$ &
		Weil--Petersson volumes \\
	\midrule
		$x(z) = z + z^{-1}, \quad y(z) = -z$ &
		\breakcell{Euler characteristic of $\mc{M}_{g,n}$, \\ integral metric ribbon graphs} \\
	\midrule
		\breakcell{
			$x(z) = z^2, \quad y(z) = - z/2$ \\[1ex]
			$\omega_{0,2}(z_1,z_2) = \Bigl( \frac{1}{(z_1 - z_2)^2} + \frac{\pi^2}{\sin^2(\pi(z_1 - z_2))} \Bigr) \frac{\dd z_1 \dd z_2}{2}$
		} &
		\breakcell{Masur--Veech volumes \\ of quadratic differentials} \\
	\midrule
		\breakcell{
			$x(z) = \alpha + \gamma(z + z^{-1})$, \\[1ex]
			$y(z) = \sum_{k=1}^{d-1} u_k z^{-k}$
		} &
		Maps \\
	\midrule
		$x(z) = z + z^{-1}, \quad y(z) = \log(z)$ &
		\breakcell{stationary Gromov--Witten \\ invariants of $\P^1$} \\
	\bottomrule
\end{tabular}
\caption{
	Examples of spectral curves with underlying Riemann surface (open subsets of) $\Sigma = \C$ and bi-differential $\omega_{0,2}(z_1,z_2) = \frac{\dd z_1 \dd z_2}{(z_1 - z_2)^2}$, unless otherwise stated. On the right, the geometric interpretation of the associated amplitudes or generalised periods.
}
\label{tbl:exmpls}
\end{table}

\subsection{Proof strategy}
Our starting point is the expansion of the topological recursion correlators on a natural basis of meromorphic differential $1$-forms:
\begin{equation}
	\omega_{g,n}(z_1,\dots,z_n)
	=
	\sum_{\substack{
		k_1,\dots,k_n \geq 0 \\
		a_1,\dots,a_n \in \mf{a} \\
		k_1+\cdots+k_n \leq 3g-3+n
	}}
		F_{g;(a_1,k_1), \dots, (a_n,k_n)}
		\prod_{i=1}^n \xi^{(a_i,k_i)}(z_i) \,.
\end{equation}
The scalars $F_{g;(a_1,k_1),\dots,(a_n,k_n)}$, called \emph{amplitudes} or $n$-point correlators, satisfy a recursion that takes the form of an Airy structure and corresponds to Virasoro constraints. The amplitudes can be explicitly computed from the germs of $x, y, \omega_{0,2}$ near the ramification points (\cref{sec:TR:basis}). In order to bound such amplitudes, we rely on exponential bounds on the expansion coefficients of $x, y, \omega_{0,2}$. This ``boundedness'' property is identified in \cref{def:boundedness}; it is automatically satisfied by spectral curves, but is an additional assumption if one works with local spectral curves (i.e. $\Sigma$ is only a formal neighbourhood of $\mathfrak{a}$). It holds, for instance, for local spectral curves associated with semi-simple points of Frobenius manifolds.

A central role in our analysis is played by the Painlev\'e~I spectral curve, whose associated amplitudes have a simple exponential dependence on the indices. For an arbitrary spectral curve or a bounded local spectral curve, the general amplitudes are bounded from above by exponentials, thus implying an upper bound by the Painlev\'e~I amplitudes (\cref{sec:comparing}).

By \cite{IZ92} or by the well-known relation between topological recursion and intersection theory on $\overline{\mathcal{M}}_{g,n}$, the Painlev\'e~I amplitudes can be expressed in terms of $\psi$-class intersections (\cref{sec:TR:CohFT}). For the latter, we employed a uniform upper bound provided by Aggarwal which implies a $\frac{(3g - 3 + 2n)!}{g!n!}$-type bound for the Painlev\'e~I amplitudes. For fixed $n$, this can be recast as a $\Gamma(2g - 2 + n)$ bound. The proof can be adapted to obtain the same type of bounds for geometric series of the form $\sum_{k_1, \dots, k_n \geq 0} v^{k_1 + \cdots + k_n} F_{g;k_1,\dots,k_n}^{\text{PI}}$. The evaluation of correlators in local coordinates or the computation of generalised periods can be bounded from above by analogous geometric series. This yields the final results of \cref{sec:proof}.

\subsection{Comments}
A similar proof strategy can be implemented to obtain a lower bound. More precisely, for spectral curves producing non-negative amplitudes, we can provide a lower bound using the Airy or Painlev\'e~I amplitudes (cf. \cref{subsec:lower:bound}). This gives $(2g)!$ lower bounds, but with much less uniformity and an exponential growth rate that does not match the upper bound.

Regarding the optimality of our upper bounds, we have not attempted (and it seems rather difficult with our method) to derive upper bounds with a completely optimal exponential growth rate (the constants $\ms{A}$ in \cref{sec:proof}). Power-law factors like $g^{\varkappa}$ have been ignored, sometimes at the cost of increasing the exponential growth rate and the constant prefactor. Besides, when there are several ramification points, the bounds we use are rather crude and the method may be improved by analysing directly the sum over stable graphs that expresses $F_{g;\alpha_1,\ldots,\alpha_n}$ in terms of amplitudes of spectral curves with a single ramification point. Accordingly, the constant prefactor we obtain (the constant $\ms{S}$ in \cref{sec:proof}) has little significance. Nonetheless, the value of the constant $\ms{A}$ provides a lower bound for the absolute value of the closest singularity to the origin in the Borel plane of the corresponding $n$-point function.

It is worth mentioning that, if an optimal upper bound for the Painlev\'e~I general $n$-point correlators were known, this would slightly improve our \cref{lemma:bound:PI,lem:bound:sums:PI} and thus the upper bounds for general spectral curves. Currently, this is only known in the form of an asymptotic equivalence for the Painlev\'e~I free energies (i.e. $0$-point correlators) due to the work of Kapaev \cite{Kap04} for the leading asymptotic expansion, with subleading asymptotic transseries addressed in \cite{GIKM12,ASV12,BSSV23}. We show in \cref{sec:PIfree} that the upper bound we obtained is not far from the asymptotic equivalent in this case.

To conclude, our method could be adapted to treat irregular spectral curves if one had a sufficiently good uniform upper bound for the intersection numbers
\begin{equation}
	\int_{\overline{\mc{M}}_{g,n}} \Theta_{g,n} \, \psi_1^{k_1} \cdots \psi_n^{k_n}
\end{equation}
where $\Theta_{g,n}$ is Norbury's class \cite{Nor23,CGG}. This would be the analogue for the $\Theta$ class of the uniform bound \eqref{eq:Agg:upper} proved in \cite[Proposition 1.2]{Agg21}. However, new techniques would be needed to handle spectral curves with non-simple ramifications, such as intersections with Witten $r$-spin. Indeed, in this case, we do not know how to upper bound the associated amplitudes (corresponding to W-constraints rather than Virasoro constraints) by those of reference spectral curves, as the latter do not have constant sign. Note that the exact asymptotics of $\Theta$-class or Witten $r$-spin class intersections with $\psi$-classes found in \cite{EGGGL} would not be sufficient in either case, since we require uniform upper bounds instead.

\subsection*{Acknowledgements}
We would like to thank Veronica~Fantini, Stavros~Garoufalidis and Mingkun~Liu for helpful discussions and Ricardo~Schiappa for pointing out important references. This work was initiated when G.B. and A.G. were affiliated with the Max-Planck-Institut f\"ur Mathematik, Bonn, and benefited at this time from the support of the Max-Planck-Gesellschaft. It was completed during a stay of G.B. at IH\'ES, which he thanks for the excellent working conditions. A.G. was supported by an ETH Fellowship (22-2~FEL-003) and a Hermann-Weyl-Instructorship from the Forschungsinstitut für Mathematik at ETH Zürich. It has been supported in part by the ERC-SyG project ``Recursive and Exact New Quantum Theory'' (ReNewQuantum) which received funding from the European Research Council (ERC) under the European Union's Horizon 2020 research and innovation programme under grant agreement No 810573.

\section{Topological recursion: correlators and amplitudes}
\label{sec:TR:basis}

\subsection{Spectral curves and topological recursion}
\label{subsec:TR:SC}
\begin{definition}[\cite{EO07}]\label{def:spectral:curve}
	For us, a \emph{spectral curve} is a quadruple $\mc{S} = (\Sigma,x,y,\omega_{0,2})$ consisting of
	\begin{itemize}
		\item a smooth complex curve $\Sigma$;

		\item two (possibly multi-valued and ill-defined at isolated points) functions $x$ and $y$ such that $\dd x$ and $\dd y$ are meromorphic;

		\item a symmetric bidifferential $\omega_{0,2}$ on $\Sigma \times \Sigma$, whose only singularity is a double pole on the diagonal with biresidue $1$.
	\end{itemize}
	We denote by $\mf{a}$ the set of zeroes of $\dd x$, which we assume to be finite. We also assume that all zeroes are simple. A particularly nice class of spectral curves is that of \emph{regular} spectral curves, for which $y$ is holomorphic at every $a \in \mf{a}$ and $\dd y(a) \neq 0$.
\end{definition}

The assumptions on the functions $x$ and $y$ cover for instance the case they have logarithmic singularities. We do not consider the case of log-vital singularities treated in \cite{ABDKS24}. We also do not consider the case of irregular spectral curves \cite{DN18}, nor ramification points of higher order \cite{BHLMR14,BE13}.
	
Given a regular spectral curve, the \emph{topological recursion} produces a sequence of meromorphic symmetric multidifferentials $\omega_{g,n}$ on $\Sigma^n$, indexed by $(g,n) \in \Z_{\geq 0} \times \Z_{> 0}$ and called \emph{correlators}.
	
\begin{definition}
	We set $\omega_{0,1} = y \dd x$, while $\omega_{0,2}$ is part of the datum of a spectral curve, and the remaining correlators are defined by induction on $2g-2+n > 0$ via the formula
	\begin{multline}\label{eq:TR}
		\omega_{g,n}(z_1,\dots,z_n) \coloneqq \sum_{a \in \mf{a}} \Res_{z = a} \ K_a(z_1,z) \bigg(
			\omega_{g-1,n+1}(z,\sigma_a(z),z_2,\dots,z_n) \\
			+
			\sum_{\substack{g_1+g_2 = g \\ J_1 \sqcup J_2 = \{2,\dots,n\}}}^{\text{no $(0,1)$}}
				\omega_{g_1,1+|J_1|}(z,z_{J_1})
				\omega_{g_2,1+|J_2|}(\sigma_a(z),z_{J_2})\,
		\bigg),
	\end{multline}
	The $K_a$ are called \emph{recursion kernels}, they are locally defined in a neighbourhood $U_a$ of $a \in \mf{a}$ as
	\begin{equation}\label{eq:TR:kernel}
		K_a(z_1,z) \coloneqq \frac{\frac{1}{2} \int_{\sigma_a(z)}^z \omega_{0,2}(z_1,\cdot)}{\bigl( y(z) - y(\sigma_a(z)) \bigr) \dd x(z)}\,,
	\end{equation}
	where $\sigma_a \colon U_a \to U_a$ is the Galois involution near the simple ramification point $a \in U_a$, i.e. the non-trivial holomorphic map such that $x \circ \sigma_a = x$.
\end{definition}

It can be shown that $\omega_{g,n}$ is symmetric in its $n$ variables with vanishing residues, and for $2g - 2 + n > 0$ it has poles only at the ramification points of order at most $6g-4+2n$. In other words:
\begin{equation}
	\omega_{g,n} \in H^0 \Bigl( \Sigma^n,K_{\Sigma}\bigl( (6g-4+2n)\mf{a} \bigr)^{\boxtimes n} \Bigr)^{\mf{S}_n} \,,
\end{equation}
where $K_{\Sigma}$ is the canonical divisor and $\mf{a}$ is interpreted as the ramification divisor of the meromorphic function $x$. Besides, the correlators have no residues at the ramification points.

It is also possible to define correlators for $n = 0$, which are usually denoted as $F_g = \omega_{g,0}$.

\begin{definition}\label{def:Fg}
	We define a collection of scalars $(F_{g})_{g \geq 2}$, called \emph{free energies}, by the formula
	\begin{equation}\label{eq:Fg}
		F_g
		=
		\omega_{g,0}
		\coloneqq
		\frac{1}{2 - 2g} \sum_{a \in \mf{a}} \Res_{z = a} \left( \int_{a}^{z} y \, \dd x \right) \omega_{g,1}(z) \,.
	\end{equation}
	We will not consider $F_0$ and $F_1$ here.
\end{definition}

The free energies are examples of \emph{generalised periods}. In applications of topological recursion to enumerative geometry, the enumerative information is often stored in generalised periods. These are obtained by application of a multilinear form
\begin{equation}
	I_1 \otimes \cdots \otimes I_n
	\in
	\left( H^0(\Sigma,K_{\Sigma}(*\mf{a}))^* \right)^{\otimes n}
\end{equation}
to the correlators $\omega_{g,n}$. Besides free energies, which correspond to the linear form
\begin{equation}
	\mc{F} \colon
	\omega
	\longmapsto
	\sum_{a \in \mf{a}} \Res_{z = a}
		\left( \int_{a}^{z} y \, \dd x \right) \omega(z) \,,
\end{equation}
other prominent examples of generalised periods include monomials in the following linear operators.
\begin{itemize}
	\item
	The expansion coefficients near a point $p \in \Sigma \setminus\mf{a}$. These are obtained by applying the linear form
	\begin{equation}\label{eq:gen:period}
		I_{(p,k)} \colon
		\omega
		\longmapsto
		\Res_{z = p} \ \frac{\omega(z)}{X^{k}(z)} \,.
	\end{equation}
	Here $X$ is a local coordinate centred at $p$ such that
	\begin{equation}\label{eq:coord:X}
		\dd x =
		\begin{cases}
			X^{-d_p} \dd X & d_p \neq 1 \\
			c_p \frac{\dd X}{X} & d_p = 1
		\end{cases} \,,
		\quad\text{ where }\quad
		d_p \coloneqq -\ord_p \dd x
		\text{ and }
		c_p \coloneqq \Res_p \ \dd x \,.
	\end{equation}

	\item
	The evaluation in a local coordinate $Z$ around a point $p \in \Sigma \setminus \mf{a}$, i.e.
	\begin{equation}\label{eq:gen:period:ev}
		\mr{ev}_{(Z,p)} \colon
		\omega
		\longmapsto
		\Res_{z = p} \ \frac{\omega(z)}{Z(z) - Z(p)}
		\eqqcolon
		\frac{\omega(z)}{\dd Z(z)}\bigg|_{z = p}\,,
	\end{equation}

	\item
	A generalisation of both $I_{(p,k)}$ and the evaluation $\mr{ev}_{(Z,p)}$ is
	\begin{equation}\label{eq:gen:period2}
		I_{(Z,p,k)} \colon
		\omega
		\longmapsto
		\Res_{z = p} \ \frac{\omega(z)}{(Z(z) - Z(p))^{k}} \,,
	\end{equation}
	where $Z$ is a local coordinate around a point $p \in \Sigma \setminus \mf{a}$. It extracts the $(k - 1)$-th coefficient of expansion in the local coordinate $Z$.

	\item
	The integration along a path $\gamma$ in $\Sigma \setminus \mf{a}$ between two points:
	\begin{equation}\label{eq:gen:period:int}
		\int_{\gamma} \colon
		\omega
		\longmapsto
		\int_{\gamma} \omega \,.
	\end{equation}
\end{itemize}
The linear operator $I_{(p,k)}$ appears naturally in various applications, such as Hurwitz theory. The integration along a path is used to define the wave function.

\begin{definition}\label{def:wave}
	Given a base point $z_0 \in \Sigma \setminus \mf{a}$ and $\chi \geq -1$, we introduce the (possibly multivalued) meromorphic functions of $z \in \Sigma$:
	\begin{equation}\label{eq:S:chi}
	\begin{split}
		f_{z_0,-1}(z)
		&\coloneqq
			\int_{z_0}^z y \, \dd x \,, \\
		f_{z_0,0}(z)
		&\coloneqq
			\frac{1}{2} \int_{w_1 = z_0}^z \int_{w_2 = z_0}^z \left(
				\omega_{0,2}(w_1,w_2) - \frac{\dd x(w_1)\dd x(w_2)}{(x(w_1) - x(w_2))^2}
			\right) \,,\\
		f_{z_0,\chi}(z)
		&\coloneqq
		\sum_{\substack{g \geq 0, \ n > 0 \\ 2g - 2 + n = \chi}} \frac{1}{n!} \underbrace{\int_{z_0}^{z} \cdots \int_{z_0}^z}_{n \ \text{times}} \omega_{g,n} \,,
		\qquad
		\text{for $\chi > 0$.}
	\end{split}
	\end{equation}
	In the formula for $f_{z_0,-1}$, a suitable regularisation of $\int_{z_0}^{z} y \, \dd x$ should be used if one chooses $z_0$ to be a non-integrable singularity of $y\dd x$. The \emph{wave function} is the formal series of exponential type in $\hbar$ defined as
	\begin{equation}\label{eq:wave:fnct}
		\psi_{z_0}(z;\hbar) \coloneqq \exp{\Biggl( \sum_{\chi \geq -1} \hbar^{\chi} \, f_{z_0,\chi}(z) \Biggr)} \,.
	\end{equation}
\end{definition} 

Equivalently, since the correlators for $2g - 2 + n > 0$ have no residues, the expressions in \eqref{eq:S:chi} can be considered as meromorphic functions of points $z_0,z$ in the universal cover of $\Sigma$. If $\pi_1(\Sigma) = \set{\id}$ (for instance, if $\Sigma = \P^1)$ the functions $f_{z_0,\chi}$ for $\chi \geq 1$ are actually meromorphic functions of $z_0,z \in \Sigma$. If $\pi_1(\Sigma) \neq \set{0}$, they are multivalued and in fact \eqref{eq:wave:fnct} is not the `right' function to look at. If $\Sigma$ is a Riemann surface of positive genus, a better-behaved quantity is the `non-perturbative wave function' \cite{EM11,BE12,EGMO24}. Discussing it is beyond the scope of this article.

\subsection{Quantum Airy structures}
\label{subsec:TR:QAS}
The approach to topological recursion proposed by Kontsevich and Soibelman \cite{KS18} (see also \cite{ABCO24}) starts from a collection of at most quadratic differential operators $(L_{\alpha})_{\alpha \in \mf{A}}$ forming a Lie subalgebra of the Weyl algebra, and constructs a unique formal power series in $\hbar$ that is annihilated simultaneously by all these differential operators.

\begin{definition}\label{def:QAS}
	Let $V$ be a (possibly infinite-dimensional) vector space over $\C$. Fix a basis $(e_{\alpha})_{\alpha \in \mf{A}}$ of $V$ and let $(x^{\alpha})_{\alpha \in \mf{A}}$ be the dual basis. Define the \emph{Weyl algebra} as
	\begin{equation}
		\mc{W}_{\hbar}(V)
		\coloneqq \C[\hbar]\braket{(x^{\alpha},\del_{\alpha})_{\alpha \in \mf{A}}} / \braket{[\del_{\alpha},x^{\beta}] = \hbar\delta_{\alpha}^{\beta}}\,.
	\end{equation}
	A \emph{quantum Airy structure} on $V$ is a collection $(L_{\alpha})_{\alpha \in \mf{A}}$ of elements of $\mc{W}_{\hbar}(V)$ of the form
	\begin{equation}\label{eq:QAS:L}
		L_{\alpha}
		=
		\hbar \del_{\alpha} - \sum_{\mu,\nu \in \mf{A}} \left(
			\frac{1}{2} A_{\alpha,\mu,\nu} x^{\mu} x^{\nu}
			+
			B^{\nu}_{\alpha,\mu} x^{\mu} \del_{\nu}
			+
			\frac{1}{2} C^{\mu,\nu}_{\alpha} \del_{\mu} \del_{\nu}
		\right)
		-
		\hbar D_{\alpha}
	\end{equation}
	that forms a Lie subalgebra of $\mc{W}_{\hbar}(V)$, that is
	\begin{equation}
		\bigl[ L_{\alpha}, L_{\beta} \bigr]
		=
		\hbar \sum_{\mu \in \mf{A}} f^{\mu}_{\alpha,\beta} L_{\mu}\,.
	\end{equation}
	for some scalars $f^{\mu}_{\alpha,\beta}$. In this definition, we can always assume that $A_{\alpha,\beta,\gamma} = A_{\alpha,\gamma,\beta}$ and $C^{\beta,\gamma}_{\alpha} = C^{\gamma,\beta}_{\alpha}$. Moreover, if $V$ is infinite-dimensional, the coefficients have to satisfy some vanishing properties for the next result to make sense and be valid. For all the applications we have in mind, it suffices to require that the set $\mf{A}$ comes with a surjection with finite fibres $N \colon \mf{A} \rightarrow \Z_{\geq 0}$ and $d \in \Z_{\geq 0}$ such that
	\begin{equation}\label{eq:vanish:N}
	\begin{split}
		N(\alpha) + N(\beta) + N(\gamma) > d &\qquad \Longrightarrow \qquad A_{\alpha,\beta,\gamma} = 0\,, \\
		N(\alpha) + N(\beta) - N(\gamma) > d & \qquad \Longrightarrow \qquad B_{\alpha,\beta}^{\gamma} = 0\,, \\
		N(\alpha) - N(\beta) - N(\gamma) > d & \qquad \Longrightarrow \qquad C_{\alpha}^{\beta,\gamma} = 0\,, \\
		N(\alpha) > d & \qquad \Longrightarrow \qquad D_{\alpha} = 0\,.
	\end{split}
	\end{equation}
\end{definition}

As stressed above, the main feature of quantum Airy structures is the existence of a unique partition function annihilated by the operators $(L_{\alpha})_{\alpha \in \mf{A}}$.

\begin{theorem}[{\cite{KS18}}] \label{thm:KS}
	There exists a unique formal series in $\hbar$ of exponential type of the form
	\begin{equation}
		Z(\bm{x};\hbar) \coloneqq \exp\left(
		\sum_{\substack{g \geq 0 , \, n > 0 \\ 2g-2+n > 0}} \frac{\hbar^{g-1}}{n!} \sum_{\alpha_1,\dots,\alpha_n \in \mf{A}}
			F_{g;\alpha_1,\dots,\alpha_n} x^{\alpha_1} \cdots x^{\alpha_n}
		\right)
	\end{equation}
	such that the scalars $F_{g;\alpha_1,\dots,\alpha_n}$, called \emph{amplitudes}, are symmetric under the permutation of the indices $\alpha_1,\dots,\alpha_n$ and
	\begin{equation}
		\forall \alpha \in \mf{A}\qquad	L_{\alpha} \cdot Z(\bm{x};\hbar) = 0\,.
	\end{equation}
	Moreover, the amplitudes are uniquely determined by the following recursion on $2g - 2 + n > 0$:
	\begin{equation}\label{eq:TR:QAS}
	\begin{split}
		F_{g;\alpha_1,\dots,\alpha_n}
		& =
		\sum_{m=2}^n \sum_{\mu \in \mf{A}}
			B_{\alpha_1,\alpha_m}^{\mu} F_{g;\mu,\alpha_2,\dots,\widehat{\alpha_m},\dots,\alpha_n} \\
		&\quad +
		\frac{1}{2} \sum_{\mu,\nu \in \mf{A}}
			C_{\alpha_1}^{\mu,\nu} \Biggl(
				F_{g-1;\mu,\nu,\alpha_2\dots,\alpha_n}
				+
				\!\!\!
				\sum_{\substack{g_1 + g_2 = g \\ I_1 \sqcup I_2 = \{\alpha_2,\dots,\alpha_n\}}}
				\!\!\!
				F_{g_1;\mu,I_1} \, F_{g_2;\nu,I_2}
			\Biggr)\,,
	\end{split}
	\end{equation}
	together with the initial conditions $F_{0;\alpha} = F_{0;\alpha,\beta} = 0$, $F_{0;\alpha,\beta,\gamma} = A_{\alpha,\beta,\gamma}$, and $F_{1;\alpha} = D_{\alpha}$. In the infinite-dimensional setting of \cref{def:QAS}, one can show that for each $(g,n)$ there exists $d_{g,n}$ such that
	\begin{equation}
		N(\alpha_1) + \cdots + N(\alpha_n) > d_{g,n} \qquad \Longrightarrow \qquad F_{g;\alpha_1,\dots,\alpha_n} = 0\,.
	\end{equation}
	In other words, for a fixed $(g,n)$ there are only finitely many non-vanishing amplitudes.
\end{theorem}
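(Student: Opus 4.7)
\emph{Proof plan.} The strategy is to apply each constraint $L_\alpha$ directly to the ansatz $Z = \exp(F)$ and to extract the recursion from the resulting equation. Using the identity $\del_\mu \del_\nu e^F = (\del_\mu \del_\nu F + \del_\mu F \, \del_\nu F) e^F$ together with the form \eqref{eq:QAS:L} of $L_\alpha$, the constraint $L_\alpha Z = 0$ becomes, after dividing by $Z$, the nonlinear scalar equation
\[
\hbar \del_\alpha F \;=\; \tfrac{1}{2} A_{\alpha,\mu,\nu} x^\mu x^\nu + B_{\alpha,\mu}^\nu x^\mu \del_\nu F + \tfrac{1}{2} C_\alpha^{\mu,\nu}\bigl( \del_\mu \del_\nu F + \del_\mu F \, \del_\nu F \bigr) + \hbar D_\alpha\,.
\]
Expanding $F$ as in the statement of the theorem, extracting the coefficient of the monomial $x^{\alpha_2} \cdots x^{\alpha_n}$ at the appropriate order in $\hbar$, and using the symmetry in $\alpha_2, \ldots, \alpha_n$, yields precisely the recursion \eqref{eq:TR:QAS} for every $(g,n)$ with $2g - 2 + n > 0$. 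The base cases $(g,n) = (0,3)$ and $(1,1)$ are the unique orders at which the inhomogeneous sources $A_{\alpha,\mu,\nu} x^\mu x^\nu$ and $\hbar D_\alpha$ contribute, giving $F_{0;\alpha,\beta,\gamma} = A_{\alpha,\beta,\gamma}$ and $F_{1;\alpha} = D_\alpha$; at $(0,1)$ and $(0,2)$ no constraint arises and the conventional values $F_{0;\alpha} = F_{0;\alpha,\beta} = 0$ are consistent. Since $2g - 2 + n$ strictly decreases from left to right in \eqref{eq:TR:QAS}, the recursion determines all amplitudes uniquely, which settles the uniqueness of $Z$.

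Existence then reduces to checking that the recursively defined amplitudes obey two consistency requirements: (i) symmetry under all permutations of $\alpha_1, \ldots, \alpha_n$, and (ii) the vanishing threshold from \eqref{eq:vanish:N}. Symmetry in $\alpha_2, \ldots, \alpha_n$ is manifest from \eqref{eq:TR:QAS} granting the same inductively at smaller height, but symmetry under exchange of $\alpha_1$ with another index is \emph{not} built in, and this is the main obstacle. This is precisely the point at which the Lie-subalgebra hypothesis $[L_\alpha, L_\beta] = \hbar \sum_\mu f^\mu_{\alpha,\beta} L_\mu$ is indispensable. My plan is to induct on $2g - 2 + n$: assuming full symmetry of all amplitudes of height up to $2g - 3 + n$, compute $L_\alpha L_\beta Z - L_\beta L_\alpha Z$ in two ways, i.e.\ using the commutator relation together with the already-established relations $L_\mu Z = 0$ on one hand, and applying each operator in turn to $Z$ using the recursion on the other. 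Extracting the coefficient of $x^{\alpha_3}\cdots x^{\alpha_n}$ at the relevant order in $\hbar$ from the identity $(L_\alpha L_\beta - L_\beta L_\alpha)Z = \hbar \sum_\mu f_{\alpha,\beta}^\mu L_\mu Z$, and substituting the recursion wherever possible, all terms involving strictly lower-height amplitudes pair off by the inductive hypothesis, leaving the identity $F_{g;\alpha,\beta,\alpha_3,\ldots,\alpha_n} - F_{g;\beta,\alpha,\alpha_3,\ldots,\alpha_n} = 0$, as required.

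The vanishing property \eqref{eq:vanish:N} is then a routine induction on the same invariant. The base cases $F_{0;\alpha,\beta,\gamma} = A_{\alpha,\beta,\gamma}$ and $F_{1;\alpha} = D_\alpha$ satisfy the bound with $d_{0,3}, d_{1,1} \leq d$ by the hypotheses on $A$ and $D$. For the inductive step, the hypotheses on $B$ and $C$ in \eqref{eq:vanish:N} force each summand on the right-hand side of \eqref{eq:TR:QAS} to vanish as soon as $\sum_i N(\alpha_i)$ exceeds a threshold built additively from $d$ and the lower $d_{g',n'}$. Tracking this additive structure yields an explicit linear bound of the form $d_{g,n} \leq (2g-2+n)\,d + \text{const}$, which in particular guarantees that for each fixed $(g,n)$ only finitely many amplitudes are non-zero, concluding the proof.
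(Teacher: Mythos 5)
The paper gives no proof of this theorem; it is quoted from \cite{KS18} (see also \cite{ABCO24}), and your plan reproduces the standard argument from those references: the recursion is the coefficient extraction of $L_\alpha Z = 0$, uniqueness is immediate because $2g-2+n$ strictly decreases, and the only non-trivial point --- symmetry in the first index --- is exactly where the Lie-subalgebra condition enters, via induction on $2g-2+n$. The one spot to phrase more carefully is ``using the already-established relations $L_\mu Z = 0$'': at the inductive step $Z$ is only partially determined, so the identity $(L_\alpha L_\beta - L_\beta L_\alpha)Z = \hbar \sum_\mu f^\mu_{\alpha,\beta} L_\mu Z$ must be compared order by order, the right-hand side landing at strictly lower height thanks to the extra factor of $\hbar$, so that the inductive hypothesis applies to it; with that bookkeeping your argument closes.
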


Consider now a regular spectral curve $\mc{S}$. Following \cite{ABCO24} (see also \cite{KS18}) we describe a quantum Airy structure whose amplitudes give the decomposition coefficients of the topological recursion correlators on a suitable basis of differentials. The characterisation of these decomposition coefficients by \cref{thm:KS} is then equivalent to Virasoro constraints.

Let $\mf{A} \coloneqq \mf{a} \times \Z_{\geq 0}$. Choose local coordinates $\zeta_a$ on a neighbourhood $U_a \subset \Sigma$ of $\mf{a}$ such that
\begin{equation}
	\zeta_a(a) = 0\qquad \text{and} \qquad x(z) = x(a) + \zeta_a(z)^2\,.
\end{equation}
We consider the meromorphic differentials $(\xi^{\alpha})_{\alpha \in \mf{A}}$ on $\Sigma$ defined for $\alpha = (a,i)$ by
\begin{equation}\label{eq:basis:xi}
	\xi^{(a,i)}(z)
	\coloneqq
	\Res_{w = a} \left(
		\int_{a}^{w} \omega_{0,2}(z, \,\cdot\,)
	\right) \frac{(2i+1)!!}{(\zeta_a(w))^{2i+2}} \dd\zeta_a(w) \,.
\end{equation}
Define also in a small enough neighbourhood $U_a$ of $a$ the meromorphic functions and the inverse of a $1$-form
\begin{equation}
	e_{(a,i)}(z)
	\coloneqq
	\frac{\zeta_a(z)^{2i+1}}{(2i+1)!!}\,,
	\qquad\qquad
	\theta_a(z)
	\coloneqq
	\frac{-2}{\bigl( y(z) - y(\sigma_a(z)) \bigr) \dd x(z)}\,.
\end{equation}
We extend $e_{(a,i)}(z)$ to zero on $U_{b}$ for $b \ne a$. We can then define the tensors
\begin{equation}\label{eq:ABCD:from:TR}
	\begin{split}
		A_{\alpha,\beta,\gamma}
			& \coloneqq
			\Res_{z = a} \ \theta_a(z) \, e_{\alpha}(z) \, \dd e_{\beta}(z) \, \dd e_{\gamma}(z)\,,
			\\
		B_{\alpha,\beta}^{\gamma}
			& \coloneqq
			\Res_{z = a} \ \theta_a(z) \, e_{\alpha}(z) \, \dd e_{\beta}(z) \, \xi^{\gamma}(z)\,,
			\\
		C_{\alpha}^{\beta,\gamma}
			& \coloneqq
			\Res_{z = a} \ \theta_a(z) \, e_{\alpha}(z) \, \xi^{\beta}(z) \, \xi^{\gamma}(z)\,,
			\\
		D_{\alpha}
			& \coloneqq
			\delta_{i,0} \left(
				\frac{\theta_{(a,0)}}{2} \phi_{(a,0),(a,0)} + \frac{\theta_{(a,1)}}{8}
			\right)
			+
			\delta_{i,1} \frac{\theta_{(a,0)}}{24}\,,
	\end{split}
\end{equation}
for $\alpha = (a,i)$. In the definition of $D_{\alpha}$, the coefficients are given by the expansion of $\theta_a$ and $\omega_{0,2}$ as
\begin{equation}\label{eq:expns:coeff:y:B}
\begin{split}
	\theta_a(z)
	& \underset{z \to a}{\approx}
	\Biggl( \sum_{m \geq 0} \theta_{(a,m)} \, \zeta_{a}(z)^{2m - 2} \Biggr) \frac{1}{\dd\zeta_a(z)} \,, \\
	\omega_{0,2}(z_1,z_2)
	& \underset{\substack{z_1 \to a_1 \\ z_2 \to a_2}}{\approx}
		\Bigg(
			\frac{\delta_{a_1,a_2}}{(\zeta_{a_1}(z_1) - \zeta_{a_2}(z_2))^2} + \text{non-even holomorphic} \\
	& \qquad\quad
			+
			\sum_{m_1,m_2 \geq 0} \phi_{(a_1,m_1),(a_2,m_2)}
				\zeta_{a_1}(z_1)^{2m_1} \, \zeta_{a_2}(z_2)^{2m_2} 
		\Bigg) \dd\zeta_{a_1}(z_1) \, \dd\zeta_{a_2}(z_2)\,.
\end{split}
\end{equation}
The symbol $\approx$ means an all-order equality of the series expansion in the specified regime. In the last formula, `non-even holomorphic' refer to terms with non-negative powers of $\zeta_{a_1}(z_1)$ or $\zeta_{a_2}(z_2)$, at least one of them being odd. One can check that with the natural projection $N \colon \mf{A} \rightarrow \Z_{\geq 0}$ and choosing $d = 2$, the vanishing conditions \eqref{eq:vanish:N} are met.
 
\begin{proposition}[{\cite{ABCO24}}]\label{prop:ABCDWGN}
	The tensors $(A,B,C,D)$ defined by \cref{eq:ABCD:from:TR} form a quantum Airy structure on the vector space with basis $(e_{\alpha})_{\alpha \in \mf{A}}$. Besides, the topological recursion correlators satisfy
	\begin{equation}\label{eq:EO:KS:TR}
		\omega_{g,n}(z_1,\dots,z_n)
		=
		\sum_{\alpha_1,\dots,\alpha_n \in \mf{A}}
			F_{g;\alpha_1,\dots,\alpha_n} \prod_{i=1}^n \xi^{\alpha_i}(z_i)\,,
	\end{equation}
	and the sum is finite: $F_{g;\alpha_1,\dots,\alpha_n}$ vanishes if $k_1 + \cdots + k_n > d_{g,n} \coloneqq 3g-3+n$, with $\alpha_i = (a_i,k_i)$.
\end{proposition}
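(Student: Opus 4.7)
The plan is to establish both claims simultaneously by induction on $2g-2+n \ge 1$. The key observation is the residue duality
\begin{equation*}
	\Res_{z=a}\, \xi^{(a,i)}(z)\, e_{(a',j)}(z) = \delta_{a,a'}\,\delta_{i,j},
\end{equation*}
which follows from the local behaviour $\xi^{(a,i)}(z) = \tfrac{(2i+1)!!}{\zeta_a(z)^{2i+2}}\,\dd \zeta_a(z) + (\text{holomorphic at } a)$ together with the parity of $e_{(a,j)}$ under $\sigma_a$. This pairing shows that \emph{if} an expansion of the form \eqref{eq:EO:KS:TR} holds, then its coefficients are extracted by the iterated residue
\begin{equation*}
	F_{g;\alpha_1,\dots,\alpha_n} = \prod_{i=1}^n \Res_{z_i = a_i}\, e_{\alpha_i}(z_i)\cdot \omega_{g,n}(z_1,\dots,z_n),
\end{equation*}
which reduces the inductive step to a manipulation of residues in the local coordinates $\zeta_a$.

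For the base cases $(g,n) = (0,3)$ and $(1,1)$, I would plug the topological recursion \eqref{eq:TR} directly, compute the residues at each ramification point using the expansions \eqref{eq:expns:coeff:y:B}, and match the result with the ansatz. A direct computation identifies $F_{0;\alpha,\beta,\gamma} = A_{\alpha,\beta,\gamma}$ and $F_{1;\alpha} = D_{\alpha}$, matching both the explicit residue expressions in \eqref{eq:ABCD:from:TR} and the initial data required by \cref{thm:KS}.

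For the inductive step, substitute \eqref{eq:EO:KS:TR} into the right-hand side of \eqref{eq:TR}. Near $a$, the recursion kernel $K_a(z_1,z)$ factors into two ingredients: the quantity $-2/\bigl[(y(z) - y(\sigma_a(z)))\,\dd x(z)\bigr] = \theta_a(z)$, and a primitive of $\omega_{0,2}(z_1,\cdot)$ whose Taylor expansion in $\zeta_a$ produces precisely the basis differentials $\xi^\gamma(z_1)$ in the outgoing variable, by the very definition \eqref{eq:basis:xi}. After applying the extraction residues $\prod_{i}\Res_{z_i = a_i} e_{\alpha_i}(z_i)$ and using that $\xi^{(a,k)}(\sigma_a(z)) = -\xi^{(a,k)}(z) + O(\zeta_a^0)$, the contributions in which one of the variables $z_m$ (for $m \ge 2$) sits inside a $\xi$-factor of the inductively expanded correlator produce the $B$-term of \eqref{eq:TR:QAS} with coefficient $B^{\mu}_{\alpha_1,\alpha_m}$ as in \eqref{eq:ABCD:from:TR}, whereas the contributions from $\omega_{g-1,n+1}(z,\sigma_a(z),\ldots)$ and from the splitting sum produce the $C$-term with coefficient $C^{\mu,\nu}_{\alpha_1}$. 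This matches \eqref{eq:TR:QAS} on the nose.

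To conclude, the symmetry of $A$ (resp.\ $C$) in the last (resp.\ upper) two indices is manifest from \eqref{eq:ABCD:from:TR} and the symmetry of $\omega_{0,2}$, while the vanishing conditions \eqref{eq:vanish:N} with $N(a,k) = k$ and $d = 2$ follow by a pole-order count at $a$ of the integrands in \eqref{eq:ABCD:from:TR}. The Lie-algebra relation $[L_\alpha, L_\beta] = \hbar\sum_\mu f^\mu_{\alpha,\beta}\, L_\mu$ can be invoked from \cite{ABCO24,KS18}; alternatively, by \cref{thm:KS} it is automatic once one exhibits a common annihilator, and the symmetry in the indices of the candidate partition function assembled from the $F_{g;\alpha_1,\ldots,\alpha_n}$ just derived reflects the symmetry of $\omega_{g,n}$ in its arguments. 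Finally, the sharp finiteness $k_1 + \cdots + k_n \le 3g - 3 + n$ propagates through the recursion \eqref{eq:TR:QAS} by induction on $2g - 2 + n$: the base cases at $(0,3)$ and $(1,1)$ saturate the bound, and a straightforward degree analysis using the vanishing conditions on $B$ and $C$ preserves it at every step. \textbf{The main obstacle} is the inductive step: one must carefully track which ramification point contributes to which factor and ensure that the symmetrisation over the distinguished variable $z_1$ inside the recursion correctly reproduces the combinatorial coefficients of the $B$- and $C$-terms; everything else reduces either to elementary residue identities or to direct appeals to \cref{thm:KS} and the known pole structure of $\omega_{g,n}$.
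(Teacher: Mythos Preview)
The paper does not supply its own proof of this proposition; it is quoted from \cite{ABCO24}. Your sketch follows exactly the strategy used there: establish the expansion \eqref{eq:EO:KS:TR} and the recursion \eqref{eq:TR:QAS} simultaneously by induction on $2g-2+n$, using the residue pairing between $\xi^{\alpha}$ and $e_{\alpha}$ to extract coefficients, the factorisation of the kernel $K_a$ into $\theta_a$ and a primitive of $\omega_{0,2}$ whose odd Taylor coefficients in $\zeta_a$ are precisely the $\xi^{\gamma}(z_1)$, and a degree count for the sharp vanishing $|k| \le 3g-3+n$. The base cases and the inductive propagation of the degree bound are correct.

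One caveat on your alternative route to the Lie-algebra closure: \cref{thm:KS} as stated in the paper assumes the subalgebra property and produces symmetric amplitudes, not the other way around. What you need is the converse --- that symmetric amplitudes from the recursion \eqref{eq:TR:QAS} force the closure relations on $(A,B,C,D)$ --- which is indeed proved in \cite{KS18,ABCO24} but is not the content of \cref{thm:KS}. Your primary option of simply invoking \cite{ABCO24,KS18} for the closure is the safe one and is exactly what the paper does.
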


The quantum Airy structure \eqref{eq:ABCD:from:TR} has an explicit presentation in terms of the expansion coefficients \eqref{eq:expns:coeff:y:B}. Indeed, the expansion of the differentials $\xi^{(a,i)}(z)$ around an arbitrary ramification point $z = b$ is given by (cf. \cite[Equation~(4.8)]{Eyn14})
 \begin{equation}\label{eq:xi:expansion}
	\xi^{(a,i)}(z)
	\underset{z \to b}{\approx}
	\Biggl(
		\delta_{a,b} \frac{(2i+1)!!}{\zeta_{b}(z)^{2i+2}}
		+
		(2i-1)!! \sum_{j \geq 0} \phi_{(a,i),(b,j)} \zeta_{b}(z)^{2j} + \text{odd holomorphic}
	\Biggr) \dd\zeta_{b}(z)\,,
\end{equation}
with the convention that $(-1)!! = 1$. Moreover, by direct computation, one finds the following expression for the quantum Airy structure, expressed in terms of the multi-index Kronecker delta: $\delta_{\alpha_1,\dots,\alpha_k}$ equals $1$ if $\alpha_1 = \cdots = \alpha_k$ and $0$ otherwise.

\begin{proposition}\label{prop:ABCD:explicit}
	The quantum Airy structure defined by \cref{eq:ABCD:from:TR} is given in terms of the expansion coefficients of $(\theta_\alpha)_{\alpha}$ and $(\phi_{\alpha,\beta})_{\alpha,\beta}$ by the following formulae:
	\begingroup
	\allowdisplaybreaks
	\begin{align}\label{eq:ABCD:explicit}
		\notag
		A_{\alpha,\beta,\gamma}
			&=
			\delta_{a,b,c} \, \delta_{i,j,k,0} \,
				\theta_{(a,0)}\,,
			\\
		\notag
		B_{\alpha,\beta}^{\gamma}
			&=
			\delta_{a,b,c} \,
				\theta_{(a,k-i-j+1)} \,
				\frac{(2k+1)!!}{(2i+1)!! (2j-1)!!} 
			+
			\delta_{a,b} \, \delta_{i,j,0} \,
				\theta_{(a,0)} \, \phi_{(a,0),(c,k)} (2k-1)!!\,,
			\\
		C_{\alpha}^{\beta,\gamma}
			&=
			\delta_{a,b,c} \,
				\theta_{(a,j+k-i+2)} \,
				\frac{(2j+1)!! (2k+1)!!}{(2i+1)!!} \\\notag
			&\qquad
			+
			\delta_{a,b} 
				\sum_{\substack{m,m' \geq 0 \\ m+m' = j-i+1}}
					\theta_{(a,m)} \, \phi_{(a,m'),(c,k)} \frac{(2j+1)!! (2k-1)!!}{(2i+1)!!} \\\notag
			&\qquad
			+
			\delta_{a,c} 
				\sum_{\substack{m,m' \geq 0 \\ m+m' = k-i+1}}
					\theta_{(a,m)} \, \phi_{(a,m'),(b,j)} \frac{(2j-1)!! (2k+1)!!}{(2i+1)!!} \\\notag
			&\qquad
			+
			\delta_{i,0} \,
			\theta_{(a,0)} \, \phi_{(a,0),(b,j)} \, \phi_{(a,0),(c,k)} \,
			(2j-1)!! (2k-1)!!\,,
		\\
		\notag
		D_{\alpha}
			&=
			\delta_{i,0} \left(
				\frac{\theta_{(a,0)}}{2} \phi_{(a,0),(a,0)} + \frac{\theta_{(a,1)}}{8}
			\right)
			+
			\delta_{i,1} \frac{\theta_{(a,0)}}{24}\,,
	\end{align}
	\endgroup
	where $\alpha = (a,i)$, $\beta = (b,j)$, $\gamma = (c,k)$. By convention $\theta_{(a,m)} = 0$ for $m < 0$ and empty sums are equal to zero.
\end{proposition}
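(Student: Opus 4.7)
The plan is to prove all four formulas by direct residue extraction at $z = a$, substituting the local expansions of $\theta_a$, $e_\alpha$, $\dd e_\beta$, and $\xi^\gamma$ in the coordinate $\zeta \coloneqq \zeta_a(z)$. The formula for $D_\alpha$ is nothing but its defining expression from \eqref{eq:ABCD:from:TR}, so no work is required there. For $A$, $B$, $C$, the essential inputs are that $e_{(b,\cdot)}$ vanishes in a neighbourhood of $a$ for $b \neq a$, and that a one-form $f(\zeta)\dd\zeta$ has nonzero residue at $\zeta = 0$ only when $f$ contains a $\zeta^{-1}$ term. From the formulas $e_{(a,i)} = \zeta^{2i+1}/(2i+1)!!$, $\dd e_{(a,j)} = \zeta^{2j}/(2j-1)!! \, \dd\zeta$, and $\theta_a \approx (\sum_m \theta_{(a,m)} \zeta^{2m-2})/\dd\zeta$, one immediately sees that $A$ forces $a = b = c$ and $B$ forces $a = b$, producing the ramification-point Kronecker deltas.

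First I would treat $A$: substitution reduces the residue to $\sum_m \theta_{(a,m)} \Res_{\zeta = 0} \zeta^{2(i+j+k+m)-1} \dd\zeta$ divided by the appropriate double factorials. The residue vanishes unless $i+j+k+m = 0$, forcing $i = j = k = m = 0$ and giving the stated formula. Then for $B$, I would substitute \eqref{eq:xi:expansion} for $\xi^{(c,k)}$ expanded around $a$. The polar piece $\delta_{a,c} (2k+1)!!/\zeta^{2k+2}$ contributes, after residue extraction, the term proportional to $\theta_{(a,k-i-j+1)}$ with the prefactor $(2k+1)!!/[(2i+1)!!(2j-1)!!]$ (with the convention $\theta_{(a,m)} = 0$ for $m < 0$). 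The even holomorphic piece $(2k-1)!! \sum_{j'} \phi_{(c,k),(a,j')} \zeta^{2j'}$ contributes only when $i = j = j' = 0$ (the only case in which a $\zeta^{-1}$ can appear after the factor $\theta_{(a,0)}\zeta^{-2}$ is picked out), producing the second summand. The odd-holomorphic part of $\xi^{(c,k)}$ contributes only even powers of $\zeta$ after being multiplied by the other factors and therefore never yields a residue, by parity.

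For $C$, the computation is of the same flavour but with two copies of $\xi$. I would expand $\xi^{(b,j)}$ and $\xi^{(c,k)}$ separately via \eqref{eq:xi:expansion}, giving four cases according to whether each contributes its polar or its even-holomorphic part. The polar $\otimes$ polar case requires $a = b = c$ and yields the $\theta_{(a,j+k-i+2)}$ term with prefactor $(2j+1)!!(2k+1)!!/(2i+1)!!$. The polar-for-$\xi^\beta$ $\otimes$ holomorphic-for-$\xi^\gamma$ case requires $a = b$ and, after extracting the residue, convolves the expansions of $\theta_a$ and of $\phi_{(a,\cdot),(c,k)}$ at total weight $j - i + 1$; the symmetric case swaps the roles of $\beta$ and $\gamma$. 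The holomorphic $\otimes$ holomorphic case survives only when $i = 0$, producing the last summand involving $\phi_{(a,0),(b,j)} \phi_{(a,0),(c,k)}$. Again, odd-holomorphic pieces drop out by parity.

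The only obstacle is bookkeeping, not conceptual: one must keep track of four different sources of double factorials (the normalisation of $e_\alpha$, the shift in $\dd e_\beta$, the $(2k+1)!!$ in the polar part of $\xi$, and the $(2k-1)!!$ in its holomorphic part) and check that in every case they assemble into exactly the prefactors stated in \eqref{eq:ABCD:explicit}. The parity argument eliminating odd-holomorphic contributions is uniform across all cases and hinges on the fact that $e_\alpha \dd e_\beta$ has an odd power of $\zeta$ times $\dd\zeta$, so that only even powers of $\zeta$ from the $\xi$-expansions can combine with the even powers from $\theta_a$ to produce a $\zeta^{-1}\dd\zeta$.
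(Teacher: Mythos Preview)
Your proposal is correct and is precisely the ``direct computation'' the paper alludes to just before stating the proposition; the paper gives no further details. Your parity argument for discarding the odd-holomorphic parts of $\xi^{\gamma}$ and your bookkeeping of the double factorials are accurate, and the residue extractions you describe reproduce each of the stated terms.
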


Although not mentioned in \cite{ABCO24}, \cref{prop:ABCDWGN} can also be extended to describe the topological recursion free energies. More precisely, they are naturally computed in terms of the expansion coefficients of $\omega_{0,1} = y \, \dd x$, namely
\begin{equation}\label{eq:ydx:expns}
	\frac{1}{\theta_a(z)}
	=
	-
	\frac{y(z) - y(\sigma_a(z))}{2} \dd x(z)
	\mathop{\approx}_{z \rightarrow a}
	\Biggl( \sum_{i \geq 1} t_{(a,i)} \zeta_a(z)^{2i}\Biggr)\dd \zeta_a(z)\,.
\end{equation}
Note that for each $a \in \mf{a}$ and $i \geq 1$, the coefficient $t_{(a,i)}$ is a Laurent polynomial in $\theta_{(a,0)}$ and a polynomial in $(\theta_{(a,j)})_{j \geq 1}$. Conversely, for each $a \in \mf{a}$ and $i \geq 0$, the coefficient $\theta_{(a,i)}$ is a Laurent polynomial in $t_{(a,1)}$ and a polynomial in $(t_{(a,j)})_{j \geq 2}$. So, the coefficients $\theta_{\alpha}$ and $t_{\alpha}$ encode the same information. Inserting \eqref{eq:EO:KS:TR} and \eqref{eq:ydx:expns} into the definition \eqref{eq:Fg} of the free energies and using the behaviour of the differentials \eqref{eq:xi:expansion}, we arrive to the following formula.
 
\begin{lemma}\label{lem:Fg}
	For any $g \geq 2$, we have
	\begin{equation}
		F_{g} = \frac{1}{2g - 2} \sum_{(a,i) \in \mf{A}} (2i - 1)!! \, t_{(a,i)} \, F_{g;(a,i)} \,.
	\end{equation}
	By convention, we have set $t_{(a,0)} = 0$.
\end{lemma}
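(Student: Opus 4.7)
The plan is to substitute the amplitude expansion \eqref{eq:EO:KS:TR} of $\omega_{g,1}$ into the definition \eqref{eq:Fg} of $F_g$ and evaluate each residue in the local coordinate $\zeta_a$. Thus for each $a \in \mf{a}$ and $\alpha = (c,k) \in \mf{A}$, the task is to compute
\begin{equation*}
	R_{a,\alpha} \coloneqq \Res_{z = a} \left( \int_a^z y \, \dd x \right) \xi^{\alpha}(z) \,.
\end{equation*}
A first simplification comes from \eqref{eq:xi:expansion}: if $c \neq a$, then $\xi^{(c,k)}$ is holomorphic at $a$, and since $\int_a^z y\,\dd x$ vanishes at $z = a$, the integrand is holomorphic at $a$ and $R_{a,(c,k)} = 0$. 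Only the terms with $c = a$ remain, and within $\xi^{(a,i)}$ only the polar piece $(2i+1)!!\,\zeta_a^{-2i-2}\,\dd\zeta_a$ can produce a nonzero residue, since the holomorphic pieces (even and odd) again give a holomorphic integrand.

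The main (though mild) subtle point is a parity argument: this polar piece is even in $\zeta_a$, so when extracting the coefficient of $\zeta_a^{-1}\,\dd\zeta_a$ one only sees the part of $\int_a^z y\,\dd x$ that is odd in $\zeta_a$. This odd part coincides with the Galois-antisymmetric integral
\begin{equation*}
	\frac{1}{2}\int_a^z (y - y \circ \sigma_a)\,\dd x
	\;=\;
	-\int_a^z \theta_a^{-1}
	\;=\;
	-\sum_{j \geq 1} \frac{t_{(a,j)}}{2j+1}\,\zeta_a(z)^{2j+1} \,,
\end{equation*}
where the first equality comes from the definition of $\theta_a$ and the second from \eqref{eq:ydx:expns}. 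Multiplying by $(2i+1)!!\,\zeta_a^{-2i-2}$ and extracting $[\zeta_a^{-1}]$ picks out $j = i$, giving $R_{a,(a,i)} = -(2i-1)!!\,t_{(a,i)}$ thanks to $(2i+1)!!/(2i+1) = (2i-1)!!$, valid for $i \geq 1$; for $i = 0$ the convention $t_{(a,0)} = 0$ makes the formula trivially true.

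Assembling these contributions, the sum over ramification points in \eqref{eq:Fg} reduces to $-\sum_{(a,i) \in \mf{A}} (2i-1)!!\,t_{(a,i)}\,F_{g;(a,i)}$, and the factor $1/(2-2g)$ produces the advertised identity. The only conceptual input is the recognition that the symmetric (in $\zeta_a$) part of $y\,\dd x$ contributes nothing to the residue, so that $y\,\dd x$ can be effectively replaced by its Galois-antisymmetric counterpart $-\theta_a^{-1}$, precisely what is encoded by the coefficients $t_{(a,i)}$; the rest is straightforward bookkeeping with double factorials.
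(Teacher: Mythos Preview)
Your proof is correct and follows exactly the approach sketched in the paper, which simply says to insert the amplitude expansion \eqref{eq:EO:KS:TR} and the series \eqref{eq:ydx:expns} into \eqref{eq:Fg} using the behaviour \eqref{eq:xi:expansion} of the differentials. You have filled in the details the paper omits --- in particular the parity observation that only the Galois-antisymmetric part of $\int_a^z y\,\dd x$ meets the polar piece of $\xi^{(a,i)}$ --- and the resulting computation $R_{a,(a,i)} = -(2i-1)!!\,t_{(a,i)}$ is precisely the one the paper later records in the proof of \cref{lem:bound:periods}.
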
 

\subsection{Local spectral curves and Frobenius manifolds}
\label{sec:local:SP}
\emph{Local spectral curves} are defined like spectral curves, except that $\Sigma$ is taken to be a finite disjoint union of formal disks $\Sigma_a = \Spec \, \C\bbraket{\zeta_a}$ for $a \in \mf{a}$ centred at the zeros of $\dd x$. The data of a local spectral curve is therefore fully determined by the expansion \eqref{eq:expns:coeff:y:B} considered as formal series. All the definitions and properties in \cref{subsec:TR:SC,subsec:TR:QAS} continue to make sense in the weaker setting of local spectral curves (except that the generalised periods have a limited interest). The quantum Airy structure amplitudes then only depend on the expansion coefficients $t_{\alpha}$ (or $\theta_{\alpha}$) and $\phi_{\alpha,\beta}$, while the topological recursion correlators also depend (via the differentials $\xi^{\alpha}$) on the 'non-even holomorphic' part of the expansion of $\omega_{0,2}$. Spectral curves determine local spectral curves by keeping only the germ at $\mf{a}$ of $x,y$ and $\omega_{0,2}$, but local spectral curves do not necessarily come from spectral curves as the formal series may have zero radius of convergence.

Local spectral curves naturally appear in relation to the ancestor potential of semi-simple Frobenius manifolds, which is computed by topological recursion on a local spectral curve \cite{DOSS14}. This correspondence will be recalled in \cref{subsec:Mgn} after we introduce the necessary background concerning the moduli space of curves. For the moment, we simply review the construction of the local spectral curve from the Frobenius manifold. This relies on the construction of a basis of flat sections for Dubrovin's extended connection, and we refer to \cite[Lecture~3]{Dub96} for details.

Consider a \emph{Frobenius manifold} $\mf{X}$, with (complex, non-degenerate, flat) metric $\eta$, Levi-Civita connection $\nabla$, product $\star$ on tangent spaces, Euler vector field $E$, and unit vector field $\mb{1}$. Let us define sections $\bm{U},\bm{V}$ of $\End(\mr{T}\mf{X})$ by
\begin{equation}
	\bm{U}(Y) \coloneqq E \star Y\,,
	\qquad\qquad
	\bm{V}(Y) \coloneqq (\id_{\mr{T}\mf{X}} \otimes \eta)(\nabla E \otimes Y)\,.
\end{equation}
Let $\widehat{\mf{X}} \coloneqq \mf{X} \times \P^1$ and $\pi \colon \widehat{\mf{X}} \rightarrow \mf{X}$ the natural projection. We identify sections $X,Y$ of $\pi^*\mr{T}\mf{X}$ with vector fields $X(z),Y(z)$ on $\mf{X}$ that depend on $z \in \P^1$, so that both $\nabla_X Y$ and $\del_{z}Y$ make sense. Then, we have a flat meromorphic connection $\widehat{\nabla}$ on $\pi^*\mr{T}\mf{X}$ defined by
\begin{equation}
	\widehat{\nabla}_{X + f \del_{z}} Y
	\coloneqq
	\left(\nabla_{X} Y + z X\star Y\right) + f \left(\del_{z} - \bm{U}(Y) - z^{-1}\bm{V}(Y)\right) \,,
\end{equation}
where $f$ is a function on $\widehat{\mf{X}}$ and $X,Y$ are sections of $\pi^*\mr{T}\mf{X}$. This connection has a regular singularity at $z = 0$ and an irregular singularity at $z = \infty$, and we are interested in local bases of flat sections.
 
\begin{definition}\label{def:ssF}
	A Frobenius manifold is \emph{semi-simple} at a point $p$ if the tangent space $(\mr{T}_{p}\mf{X},\star|_{p})$ is a semi-simple algebra.
\end{definition}

By the results of \cite{Dub96}, near a semi-simple point $p$ there exist local coordinates $(u_a)_{a \in \mf{a}}$ with $\mf{a} \coloneqq \{1,\ldots,\dim \mf{X}\}$, called canonical, such that $\del_{u_a} \star \del_{u_b} = \delta_{a,b} \del_{u_a}$ for any $a,b \in \mf{a}$, the unit is $\mb{1} = \sum_{a \in \mf{a}} \del_{u_a}$, and $E = \sum_{a \in \mf{a}} u_a \del_{u_a}$. Then $\bm{U}$ acts diagonally with eigenvalues $(u_{a})_{a\in \mf{a}}$. Focusing only on the $\P^1$-component of the connection $\widehat{\nabla}$, there exists a unique formal basis $\Psi(z) = R(1/z)e^{z\bm{U}}$ of solutions of
\begin{equation}
	\del_z \Psi(z) = \bm{U}(\Psi(z)) + z^{-1}\bm{V}(\Psi(z))
\end{equation}
such that
\begin{equation}
	R(\lambda) \in \id_{\mr{T}_p \mf{X}} + \lambda.\End(\mr{T}_p \mf{X})\bbraket{\lambda}
	\qquad\text{and}\qquad
	R(-\lambda)R^{\mr{t}}(\lambda) = \id_{\mr{T}_p \mf{X}}\,.
\end{equation}
We further set $T_R(\lambda) \coloneqq \lambda (\id - R^{-1}(\lambda)) \mb{1}$.

Following \cite{DOSS14} we proceed with the construction of a local spectral curve. We first set $t_{(a,1)} = \sqrt{\eta(\del_{u_a},\del_{u_a})}$ for the norm of $\del_{u_a}$, and let $\del_{v_a} = t_{(a,1)}^{-1} \del_{u_a}$ define the orthonormal canonical vector fields. We take
\begin{equation}
	\Sigma \coloneqq \bigsqcup_{a \in \mf{a}} \Spec \, \C\bbraket{\zeta_a} \,,
	\qquad\qquad
	x(\zeta_a) \coloneqq p_a + \zeta_a^2 \,,
\end{equation}
and set as usual $\mf{A} \coloneqq \mf{a} \times \Z_{\geq 0}$. We specify the coefficients $t_{\alpha}$ and $\phi_{\alpha,\beta}$ for $\alpha,\beta \in \mf{A}$ from the decompositions
\begin{equation}\label{eq:RT:Frobenius}
\begin{aligned}
	&
	\frac{\id_{\mr{T}_p \mf{X}} - R^{-1}(\lambda) \otimes R^{-1}(\mu)}{\lambda + \mu}\eta^{\bot}
	=
	\sum_{(a,i),(b,j) \in \mf{A}}
		(2i - 1)!! (2j - 1)!! \,
		\phi_{(a,i),(b,j)}
		\lambda^i \del_{v_a} \otimes \mu^j \del_{v_b}\,, \\
	&
	T_R(\lambda) - \lambda \mb{1}
	=
	- \sum_{(a,i) \in \mf{A}}
		(2i - 1)!! \,
		t_{(a,i)} \lambda^i \del_{v_a} \,,
\end{aligned}
\end{equation}
where $\eta^{\bot} = \sum_{a \in \mf{a}} \del_{v_a}^{\otimes 2}$ is the co-pairing.

\subsection{Boundedness conditions}
\label{sec:boundedness}
In order to prove the $(2g)!$ factorial growth of topological recursion amplitudes, we need to assume a boundedness property on the associated expansion coefficients $\theta_{\alpha}$ and $\phi_{\alpha,\beta}$. As we will see shortly, boundedness is automatically satisfied by regular spectral curves. However, it is a non-trivial requirement in the local setting. Nonetheless, we will prove that local spectral curves associated with Frobenius manifolds satisfy this condition.

\begin{definition}\label{def:boundedness}
	We say that a (local or not) spectral curve is \emph{bounded} if there exist positive real constants $M_t,M_{\theta},M_{\phi},\rho_t,\rho$ such that, for any $(a,i),(b,j) \in \mf{A}$:
	\begin{equation}\label{eq:bound}
		\big|t_{(a,i)}\big| \leq \frac{M_t}{\rho_t^{i + 1}} \,,
		\qquad
		\big|\theta_{(a,i)}\big| \leq \frac{M_{\theta}}{\rho^{i+1}} \,,
		\qquad 
		\big|\phi_{(a,i),(b,j)}\big| \leq \frac{M_{\phi}}{\rho^{i+j}} \,.
	\end{equation}
	Here $M_{\phi}$ is allowed to be zero.
\end{definition}

The second and third inequalities (involving $\rho$) will be our starting assumptions to derive upper bounds on the topological recursion amplitudes. The first inequality (involving $\rho_t$) will only play a role in the derivation of upper bounds for the topological recursion free energies, as one can expect from \cref{lem:Fg}.

\begin{remark}\label{rem:R:Rt}
	There is no order relation between $\rho_t$ and $\rho$, but the existence of $\rho_t > 0$ for which the first inequality holds is equivalent to the existence of $\rho > 0$ for which the second inequality holds.
\end{remark}

In the remaining part of this section we show that the two main classes of examples of spectral curves considered in this paper, namely regular spectral curves and local spectral curves associated with Frobenius manifolds, are both bounded.

\begin{lemma}\label{lem:reg:SC:bnd}
	Regular spectral curves are bounded.
\end{lemma}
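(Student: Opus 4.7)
The plan is to realise each of the three coefficient families $t_{(a,i)}$, $\theta_{(a,i)}$, $\phi_{(a,i),(b,j)}$ as the Taylor coefficient of a function holomorphic on a suitable (poly)disk around a ramification point, so that Cauchy's inequality on a slightly smaller (poly)disk yields geometric decay in the indices. Since $\mf{a}$ is finite, uniform constants $M_{\theta}, M_{\phi}, \rho$ will be obtained by taking the minimum radius and the maximum of the prefactors over the finitely many cases considered. The bound on the $t$-coefficients then follows from \cref{rem:R:Rt} without further work.

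For the $\theta$-coefficients, regularity guarantees that near each $a \in \mf{a}$ the function $y$ admits a convergent Taylor expansion $y(z) = y(a) + \sum_{k \geq 1} y^{(a)}_{k} \zeta_a(z)^k$ with $y^{(a)}_{1} \neq 0$. Combining this with $\dd x = 2\zeta_a \dd\zeta_a$ and the oddness of $y - y \circ \sigma_a$ in $\zeta_a$, a short direct computation identifies
\begin{equation*}
	\zeta_a(z)^2 \, \theta_a(z) \, \dd\zeta_a(z)
	=
	-\frac{1}{2 \sum_{k \geq 0} y^{(a)}_{2k+1} \, \zeta_a(z)^{2k}} \, ,
\end{equation*}
whose right-hand side is holomorphic and nonvanishing near $\zeta_a = 0$. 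Hence $\theta_{(a,m)}$ is the $m$-th Taylor coefficient (in $\zeta_a^{2}$) of a function holomorphic on a disk of some positive radius $\rho_a$, and Cauchy's inequality on a slightly smaller disk yields the desired exponential bound $|\theta_{(a,m)}| \leq M_a \rho_a^{-m}$.

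For the $\phi$-coefficients, the starting point is that $\omega_{0,2}$ is meromorphic on $\Sigma \times \Sigma$ with only a diagonal double pole. For each pair $(a,b) \in \mf{a}^2$, the regularised bi-differential
\begin{equation*}
	\omega_{0,2}(z_1,z_2) - \frac{\delta_{a,b}}{(\zeta_a(z_1) - \zeta_b(z_2))^2} \, \dd\zeta_a(z_1) \, \dd\zeta_b(z_2)
\end{equation*}
is holomorphic on a polydisk of positive bi-radius $\rho_{a,b}$ around $(a,b)$, and the $\phi_{(a,i),(b,j)}$ are the even-even Taylor coefficients of its ratio by $\dd\zeta_a \dd\zeta_b$, extractable via symmetrisation under the two Galois involutions $\sigma_a, \sigma_b$. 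Cauchy's inequality in two variables on a slightly smaller polydisk then produces the required bound $|\phi_{(a,i),(b,j)}| \leq M_{a,b} \rho_{a,b}^{-(i+j)}$. The main (mild) technical point is the diagonal subtraction and the projection onto the even-even part in this last step, but for a genuine (non-local) spectral curve no real obstacle arises: after removing the standard singular term, one is applying Cauchy's inequality to genuinely holomorphic data.
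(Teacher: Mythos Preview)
Your proposal is correct and follows essentially the same route as the paper: identify each family of coefficients as Taylor coefficients of a function holomorphic on a (poly)disk around the ramification points, then apply Cauchy's inequalities and take extrema over the finite set $\mf{a}$. The only cosmetic differences are that the paper proves the $t$-bound directly (via holomorphy of $(\dd y - \sigma_a^*\dd y)/\dd\zeta_a$) rather than deducing it from the $\theta$-bound through \cref{rem:R:Rt}, and that the paper does not bother with the symmetrisation step for the $\phi$-coefficients, since once the diagonal singularity is subtracted the entire remainder is holomorphic and Cauchy's inequality bounds all its Taylor coefficients, even-even or not.
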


\begin{proof}
	Let $R_a > 0$ be the maximum value for which there exists an open neighbourhood $U_{R_a} \subset \Sigma$ of $a$ with local coordinate $\zeta_a$ that maps biholomorphically to the open disc
	\begin{equation}
		\mb{D}(R_a) = \set{ \zeta \in \C \;|\; |\zeta| < R_a } \,.
	\end{equation}
	If $\dd y - \sigma_a^*\dd y$ has a pole in $U_{R_a}$, we set $R_{a,-}$ to be the distance between $a$ and the closest such pole (measured in $\mb{D}(R_a)$ with the coordinate $\zeta_a$), otherwise we take $R_{a,-} = R_a$. By construction, $(\dd y - \sigma_a^* \dd y)/\dd\zeta_a$ is a holomorphic $1$-form in $U_{R_{a,-}}$, and Cauchy's inequality in the corresponding disc $\mb{D}(R_{a,-})$ shows that we can make $|t_{(a,i)}|$ satisfy the bound \eqref{eq:bound} for any choice of $\rho_t < R_{a,-}^2$. Likewise, if $\dd y - \sigma_a^{*}\dd y$ has a zero in $U_{R_a} \setminus \{a\}$, we set $R_{a,+}$ to be the distance between $a$ and the closest such zero, otherwise we set $R_{a,+} = R_a$, and we infer that we can make $|\theta_{(a,i)}|$ satisfy the bound \eqref{eq:bound} for any choice of $\rho < R_{a,+}^2$. To get the third desired inequality, we remark that
	\begin{equation}
		\frac{\omega_{0,2}(z_1,z_2)}{\dd \zeta_{a_1}(z_1)\dd\zeta_{a_2}(z_2)} - \frac{1}{(\zeta_{a_1}(z_1) - \zeta_{a_2}(z_2))^2}
	\end{equation}
	is a holomorphic function of $(z_1,z_2) \in U_{R_{a_1}} \times U_{R_{a_2}}$. So we can make the corresponding bounds in \eqref{eq:bound} hold for any choice of positive $\rho < \min(R_{a_1}^2,R_{a_2}^2)$. Therefore, we always have all bounds in \eqref{eq:bound} for any choice of positive
	\begin{equation}
		\rho_t < \min_{a \in\mf{a}} R_{a,-}^2
		\qquad\text{and}\qquad
		\rho < \min_{a\in\mf{a}} R_{a,+}^2 \,.
	\end{equation}
\end{proof}

\begin{lemma}\label{lem:ssF:bound}
	The local spectral curve associated with a semi-simple point of a Frobenius manifold is regular and bounded. This property remains true if we replace $T_R(\lambda)$ with some $T(\lambda) \in \lambda^2.\mr{T}_p \mf{X}\bbraket{\lambda}$ obeying 
	\begin{equation}\label{eq:bound:T}
		\max_{a \in \mf{a}} \big|\eta(\del_{v_a},T^{(k)}(0))\big| \leq M^{k + 1} k!^2
	\end{equation}
	for all $k \ge 0$ for some $M > 0$.
\end{lemma}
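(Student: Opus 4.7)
The plan is to verify regularity directly and then reduce both bounds on $t_{(a,i)}$ and $\phi_{(a,i),(b,j)}$ to Cauchy-type estimates, using as the central analytic input the classical fact (Dubrovin) that at a semi-simple point the formal $R$-matrix converges in a disc around $\lambda=0$.

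Regularity is immediate in the local-spectral-curve setting: $y$ is holomorphic at the ramification point by construction, and the non-vanishing $\dd y(a)\neq 0$ amounts to $t_{(a,1)}\neq 0$. By the normalisation recalled after \cref{def:ssF} one has $t_{(a,1)}=\sqrt{\eta(\del_{u_a},\del_{u_a})}$, which is nonzero because $\eta$ is non-degenerate.

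For the boundedness inequalities, I would first invoke Dubrovin's theorem that $R(\lambda)$ is the Taylor expansion of a holomorphic $\End(\mr{T}_p\mf{X})$-valued function on some disc $\mb{D}(r)$. Granting this, the generating series
\begin{equation}
	\frac{\id - R^{-1}(\lambda)\otimes R^{-1}(\mu)}{\lambda+\mu}\,\eta^{\bot}
\end{equation}
is holomorphic on $\mb{D}(r)\times\mb{D}(r)$: the apparent pole along $\lambda+\mu=0$ is removable because $R(-\lambda)R^{\mr{t}}(\lambda)=\id$ makes the numerator vanish there. Bivariate Cauchy inequalities on a slightly smaller polydisc then give $|(2i-1)!!(2j-1)!!\,\phi_{(a,i),(b,j)}|\leq K/{r'}^{i+j}$; since $(2i-1)!!(2j-1)!!\geq 1$ this yields the $\phi$-bound of \eqref{eq:bound} for any $\rho<r'$. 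When $T=T_R$, a one-variable Cauchy estimate on the analytic function $T_R(\lambda)$ gives $|t_{(a,i)}|\leq M_t/\rho_t^{i+1}$, and by \cref{rem:R:Rt} the $\theta$-bound follows equivalently.

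For the extended statement, with a general $T(\lambda)\in\lambda^2.\mr{T}_p\mf{X}\bbraket{\lambda}$ satisfying \eqref{eq:bound:T}, comparing coefficients in the second line of \eqref{eq:RT:Frobenius} yields $(2i-1)!!\,t_{(a,i)}=-\eta(\del_{v_a},T^{(i)}(0))/i!$ for $i\geq 2$ (the case $i=1$ is fixed by the $-\lambda\mb{1}$ term and is not affected by the choice of $T$). The hypothesis then gives $|t_{(a,i)}|\leq M^{i+1}\,i!/(2i-1)!!$. Using the identity
\begin{equation}
	\frac{i!}{(2i-1)!!}=\frac{2^{i}(i!)^{2}}{(2i)!}=\frac{2^{i}}{\binom{2i}{i}}
\end{equation}
together with the elementary bound $\binom{2i}{i}\geq 4^{i}/(2i+1)$, one finds $|t_{(a,i)}|\leq M(2i+1)(M/2)^{i}$, which is dominated by $M_t/\rho_t^{i+1}$ for any $\rho_t<2/M$ and a suitable $M_t$. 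The combinatorics is tight: one factor $i!$ in \eqref{eq:bound:T} compensates the $1/i!$ from reading a Taylor coefficient, and the other factor $i!$ is exactly what the double-factorial $(2i-1)!!\sim 2^{i}\,i!/\sqrt{\pi i}$ can absorb, leaving a bare geometric bound.

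The main obstacle is the appeal to the analyticity of $R(\lambda)$ at a semi-simple point, which must be cited from Dubrovin's theory; once this is in hand the remainder is standard complex analysis and an elementary double-factorial asymptotic. The delicate accounting is the Gevrey-versus-analytic trade-off in the last paragraph: the hypothesis \eqref{eq:bound:T} is only Gevrey-$2$, yet the division by $(2i-1)!!$ built into the definition of $t_{(a,i)}$ restores exponential decay.
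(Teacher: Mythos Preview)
Your argument has a genuine gap at its central step: the claim that $R(\lambda)$ is the Taylor expansion of a holomorphic $\End(\mr{T}_p\mf{X})$-valued function on some disc is false in general. The $R$-matrix arises from the formal fundamental solution at the irregular singularity $z=\infty$ of Dubrovin's extended connection, which has Poincar\'e rank $1$; such formal solutions are generically divergent with Gevrey-$1$ growth. Concretely, the recursion $\mr{ad}_{\bm{U}} R_{k+1} = -kR_k - \bm{V}R_k$ (obtained by plugging $\Psi(z)=R(1/z)e^{z\bm{U}}$ into the flatness equation) only yields $|[R_k]_{a,b}|\leq M^k\,k!$, and this factorial growth is typically attained. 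Consequently your Cauchy-estimate argument for $\phi_{(a,i),(b,j)}$ collapses, and so does the Cauchy estimate for $T_R(\lambda)=\lambda(\id-R^{-1}(\lambda))\mb{1}$, which is likewise only Gevrey-$1$.

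What the paper actually does is start from the bound $|[R_k]_{a,b}|\leq M_R\,\rho_R^{-k}\,k!$ and push the combinatorics through the defining relation \eqref{eq:RT:Frobenius}. One solves the two-term recursion $\tilde{\phi}_{(a,i-1),(b,j)}+\tilde{\phi}_{(a,i),(b,j-1)}=P_{(a,i),(b,j)}$ explicitly, bounds $|\tilde{\phi}_{(a,i),(b,j)}|$ by $C\,\rho_R^{-(i+j+1)}(i+j+1)!$, and then divides by $(2i-1)!!(2j-1)!!$; an elementary inequality of the type $\frac{(i+j+1)!}{(2i-1)!!(2j-1)!!}\leq C_\epsilon\,e^{2\epsilon(i+j)}$ converts the factorial into an exponential. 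You already identified exactly this mechanism in your last paragraph for the $t_{(a,i)}$: the double factorials built into \eqref{eq:RT:Frobenius} absorb one factor of $k!$. The point is that the \emph{same} trade-off must be carried out for $\phi$ --- you cannot bypass it by an analyticity assumption on $R$ that does not hold.
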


\begin{proof}
	Regularity is clear. For boundedness, the basic idea is that differential equation with Poincar{\'e} rank $1$ have at most factorially divergent coefficients \cite{Poi86}, and this implies the well-known fact that Dubrovin's superpotential are defined in discs of positive radii \cite{Dub96}. Here we propose an uneducated proof.

	Inserting $\Psi(z) = R(1/z)e^{\bm{U}z}$ with $z = \lambda^{-1}$, we find a differential equation for the $R$-matrix: $R'(\lambda) = \lambda^{-2}[R(\lambda),\bm{U}] - \lambda^{-1} \bm{V}R(\lambda)$. Inserting the expansion $R(\lambda) = \id + \sum_{k \geq 1} R_k\lambda^k$ we get the recursion
	\begin{equation}\label{eq:adjoint}
		\mr{ad}_{\bm{U}} R_{k + 1} = - kR_k - \bm{V}R_k \,.
	\end{equation}
	At a semi-simple point $\bm{U}$ is simple, hence $\mr{ad}_{\bm{U}}$ is invertible. From \eqref{eq:adjoint} we deduce the existence of $M > 0$ (depending on the point on the Frobenius manifold) such that $||R_{k + 1}||_{\infty} \leq (k + 1) M \, ||R_k||_{\infty}$ for any $k \geq 0$. Letting $[R_k]_{a,b} \coloneqq \eta(\del_{v_b},R_k(\del_{v_{a}}))$, this shows $\big|[R_k]_{a,b}\big| \leq M^k \, k!$. We will continue our analysis with a bound of the form
	\begin{equation}\label{eq:Rmat:bound}
		\big|[R_k]_{a,b}\big|
		\leq
		\frac{M_R}{\rho_R^k} \, k!
	\end{equation}
	for some constants $M_R$, $\rho_R > 0$. The advantage of doing so is that, when the $R$-matrix is explicitly known and an optimal bound of the form \eqref{eq:Rmat:bound} can be found, we will be able to track the constants appearing in \cref{def:boundedness}.

	Let us define $\tilde{\phi}_{(a,i),(b,j)} \coloneqq (2i - 1)!!(2j - 1)!! \phi_{(a,i),(b,j)}$, with the convention that it is zero for $i < 0$ or $j < 0$. From the definition of the coefficients $\phi_{(a,i),(b,j)}$ in \eqref{eq:RT:Frobenius} and taking into account $R^{-1}(\lambda) = R^{\mr{t}}(-\lambda)$, we have
	\begin{equation}\label{eq:phi:rel}
		\tilde{\phi}_{(a,i - 1),(b,j)} + \tilde{\phi}_{(a,i),(b,j-1)}
		=
		\delta_{a,b}\delta_{i,j,0} - (-1)^{i+j}\sum_{c \in \mf{a}} [R_i]_{c,a} [R_j]_{c,b}\,.
	\end{equation}
	Call $P_{(a,i),(b,j)}$ the right-hand side of the above equation. Using \eqref{eq:Rmat:bound}, we get the bound
	\begin{equation}
		|P_{(a,i),(b,j)}| \leq \delta_{i,j} + |\mf{a}| \, \rho_R^{-(i + j)} \, M_R^2 \, i!j! \,.
	\end{equation}
	On the other hand, we can consider \eqref{eq:phi:rel} as a recursion in $j$ with initial data $\tilde{\phi}_{\alpha,(b,0)}$. Its solution is given as
	\begin{equation}
	\begin{split}
		\tilde{\phi}_{(a,i),(b,j)}
		& =
		(-1)^{j + 1} \tilde{\phi}_{(a,i + j),(b,0)}
		+
		\sum_{k = 0}^j P_{(a,i + k + 1),(b,j-k)}\,.
	\end{split}
	\end{equation}
	Specialising \eqref{eq:RT:Frobenius} to $\mu = 0$ we also get access to the initial data $\tilde{\phi}_{(a,i),(b,0)} = (-1)^i [R_{i + 1}]_{b,a}$. Combining the bounds for $R$ and $P$ we obtain
	\begin{equation}
		\big|\tilde{\phi}_{(a,i),(b,j)}\big|
		\leq
		M_R \, \rho_R^{-(i + j + 1)} (i+ j + 1)!
		\left(
			1 + |\mf{a}| \, M_R \sum_{k = 0}^j \frac{1}{\binom{i + j + 1}{j - k}}
		\right) .
	\end{equation}
	We bound the second sum by
	\begin{equation}
	\begin{split}
		\sum_{\ell = 0}^{j} \frac{1}{\binom{i + j + 1}{\ell}}
		\leq
		1 + \frac{\delta_{j \geq 1}}{2} + \frac{\delta_{j \geq 2}\delta_{i,0}}{2} + \delta_{j \geq 2}\delta_{i \geq 1} \sum_{\ell = 2}^{i + j - 1} \frac{1}{\binom{i + j + 1}{\ell}}\,.
	\end{split}
	\end{equation}
	The last sum is bounded by $\frac{2(i + j - 2)}{(i + j + 1)(i + j)}$. For $i + j + 3 \geq 0$ it is bounded by $\frac{1}{5}$. Therefore
	\begin{equation}
		\big|\tilde{\phi}_{(a,i),(b,j)}\big|
		\leq
		\frac{M_R(1 + \frac{11}{10} \, |\mf{a}| \, M_R) (i + j + 1)!}{\rho_R^{i + j + 1}} \,.
	\end{equation}
	We now come back to $\phi_{(a,i),(b,j)}$. Consider the inequality
	\begin{equation}
		\frac{(i + j + 1)!}{(2i - 1)!!(2j - 1)!!}
		=
		2^{i+ j} \frac{(i + j + 1)!}{i!j! \binom{2i}{i} \binom{2j}{j}}
		\leq
		2^{i+ j} \, (i + j + 1) \frac{\binom{i + j}{(i + j)/2}}{\binom{2i}{i} \binom{2j}{j}} \,.
	\end{equation}
	The function $x \mapsto 4^{-x}\binom{2x}{x}$ for $x \geq 0$ is decreasing, hence $4^{-x}\binom{2x}{x} \leq 1$ for any $x \geq 0$. The function $x \mapsto 4^{-x}\sqrt{\pi(x + 1/2)}\binom{2x}{x}$ is also decreasing and tends to $1$ as $x \rightarrow \infty$, hence is lower bounded by $1$ for any $x \geq 0$. Therefore
	\begin{equation}
		\frac{(i + j + 1)!}{(2i - 1)!!(2j - 1)!!}
		\leq
		2\pi (i + j + 1)\sqrt{(i + 1/2)(j + 1/2)}
		\leq
		\pi (i + j + 1)^2 \leq \pi \epsilon^{-2} e^{2\epsilon (i + j + 1) - 2}
	\end{equation}
	for any $\epsilon > 0$. Therefore $\big|\phi_{(a,i),(b,j)}\big| \leq M_{\phi} \, \rho_{\phi}^{-i - j}$ where we can take
	\begin{equation}
		\rho_{\phi}
		=
		e^{-2\epsilon} \, \rho_R \,,
		\qquad\qquad
		M_{\phi}
		=
		\frac{\pi e^{2\epsilon}}{(e\epsilon)^2} \frac{M_R (1 + \frac{11}{10} \, |\mf{a}| \, M_R)}{\rho_R} \,.
	\end{equation}
	Furthermore, the choice $T_R(\lambda) = \lambda(\id - R^{-1}(\lambda))\mb{1}$ and the definition of the coefficients $t_{\alpha}$ yield
	\begin{equation}
		(2i - 1)!! t_{(a,i)} = (-1)^i \sum_{c \in \mf{a}} [R_{i - 1}]_{c,a}\,.
	\end{equation}
	Then, for $i \geq 1$, we find $|t_{(a,i)}| \leq \frac{|\mf{a}|M_R}{i \binom{2i}{i} \rho_R} \leq \frac{M_t}{\rho_{t}^{i + 1}}$ with
	\begin{equation}
		\rho_t = 4\rho_R \,,
		\qquad\qquad
		M_t = \sqrt{\pi} \, |\mf{a}| \, M_R \, \rho_R^2 \,.
	\end{equation}
	A similar argument yields the inequality $|t_{(a,i)}| \leq M_t \rho_t^{-i - 1}$ for some $M_t, \rho_t > 0$. The same inequality holds if we replace $T_R(\lambda)$ with any $T(\lambda) \in \mr{O}(\lambda^2)$ satisfying \eqref{eq:bound:T}. By \cref{rem:R:Rt} we deduce the existence of $M_{\theta},\rho_{\theta} > 0$ such that $|\theta_{(a,i)}\big| \leq M_{\theta} \rho_{\theta}^{-i - 1}$. Then the desired inequalities follow, after choosing $\rho = \min \set{\rho_{\theta},\rho_{R}}$.
\end{proof}

\section{Comparing Airy structures and amplitudes}
\label{sec:comparing}

\subsection{Upper bounds}
\label{subsec:upper:bound}
Consider the $1$-parameter family of spectral curves $\mc{S}^{\textup{PI}}(u)$ on $\Sigma = \P^1$, called the \emph{Painlev{\'e} I spectral curve}, given by
\begin{equation}\label{eq:SC:PI}
	x(z) = z^2 - 2u\,,
	\qquad
	y(z) = z^3 - 3uz\,,
	\qquad
	\omega_{0,2}(z_1,z_2) = \frac{\dd z_1 \dd z_2}{(z_1 - z_2)^2}\,.
\end{equation}
It is the spectral curve of the $(3,2)$-minimal model, which occurs in the study of pure $2D$ gravity (see for instance \cite{CED18}). Moreover, its free energies are known to compute the formal asymptotics of the solution of Painlev{\'e} I (PI for short) equation, see \cite{BBE15,IS16}. This paragraph aims at showing that the amplitudes for any bounded spectral curve are dominated by the PI amplitudes (\cref{prop:upper:bound:Fgn}).

To this end, we start by computing the quantum Airy structure associated with $\mc{S}^{\textup{PI}}(u)$. Since the spectral curve has only one ramification point at $a = 0$, we omit its dependence from the associated quantum Airy structure. In other words, we identify $\mf{A} = \set{0} \times \Z_{\ge 0}$ with $\Z_{\ge 0}$. Applying the formulae of \cref{subsec:TR:QAS}, we find the following expressions.

\begin{lemma}\label{lemma:QAS:PI} 
	The quantum Airy structure $(\PIL{A},\PIL{B},\PIL{C},\PIL{D})(u)$ associated with $\mc{S}^{\textup{PI}}(u)$ is given by
	\begin{equation}\label{eq:QAS:PI}
		\begin{split}
			\PIL{A}_{i,j,k}(u)
			&= \frac{\delta_{i,j,k,0}}{2 (3u)}\, ,\\
			\PIL{B}_{i,j}^{k}(u)
			&= \frac{\delta_{k-i-j+1 \geq 0}}{2 (3 u)^{k-i-j+2}} \frac{(2k+1)!! }{(2i+1)!! (2j-1)!!}\, ,\\
			\PIL{C}_{i}^{j,k}(u)
			&= \frac{\delta_{j+k-i+2 \geq 0}}{2 (3 u)^{j+k-i+3}} \frac{(2j+1)!! (2k+1)!!}{(2i+1)!!}\, ,\\
			\PIL{D}_{i}(u)
			&= \frac{1}{2} \left( \frac{\delta_{i,0}}{8 (3u)^2} + \frac{\delta_{i,1}}{24 (3u)} \right) \,.
		\end{split}
	\end{equation}
	All the coefficients are weakly decreasing functions of $u \in \R_{> 0}$.
\end{lemma}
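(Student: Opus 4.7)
The plan is to apply the explicit formulas of \cref{prop:ABCD:explicit} directly, after computing the expansion coefficients $\theta_{(0,m)}$ and $\phi_{(0,m_1),(0,m_2)}$ for the Painlev\'e~I spectral curve. Since $x(z) = z^2 - 2u$ has a unique simple ramification point at $z = 0$, we may take $\zeta_0(z) = z$ as local coordinate and $\sigma_0(z) = -z$ as Galois involution, so the index set $\mf{A}$ collapses to $\Z_{\geq 0}$ and every $\delta_{a,b,c}$ in \eqref{eq:ABCD:explicit} becomes $1$.

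First, I would compute $\theta_0$ directly: from $y(z) - y(-z) = 2z(z^2 - 3u)$ and $\dd x(z) = 2z\,\dd z$ one finds
\begin{equation*}
	\theta_0(z) = \frac{-2}{\bigl(y(z) - y(-z)\bigr)\dd x(z)} = \frac{-1}{2z^{2}(z^{2} - 3u)}\,\frac{1}{\dd z}\,.
\end{equation*}
Expanding the geometric series $\frac{1}{z^2 - 3u} = -\frac{1}{3u} \sum_{m\geq 0} \bigl(\frac{z^2}{3u}\bigr)^m$ and matching with \eqref{eq:expns:coeff:y:B} yields
\begin{equation*}
	\theta_{(0,m)} = \frac{1}{2(3u)^{m+1}} \qquad (m \geq 0)\,.
\end{equation*}
Second, since $\omega_{0,2}$ is taken to be exactly the Bergman kernel $\frac{\dd z_1\,\dd z_2}{(z_1 - z_2)^2}$, the full regular part in the expansion \eqref{eq:expns:coeff:y:B} vanishes, hence $\phi_{(0,m_1),(0,m_2)} = 0$ for all $m_1,m_2 \geq 0$.

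Third, I would substitute these values into \eqref{eq:ABCD:explicit}. The vanishing of the $\phi$'s kills every term except the first in each of the formulas for $B$ and $C$, and the first term of $D$. What remains is a direct translation of $\theta_{(0,m)}$, taking into account the convention $\theta_{(0,m)} = 0$ for $m < 0$, which produces the Heaviside-type factors $\delta_{k - i - j + 1 \geq 0}$ in $\PIL{B}$ and $\delta_{j + k - i + 2 \geq 0}$ in $\PIL{C}$. This recovers \eqref{eq:QAS:PI} line by line.

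Finally, monotonicity in $u$ is immediate: every nonzero entry in \eqref{eq:QAS:PI} is a strictly positive scalar multiple of $(3u)^{-N}$ with $N \geq 1$, which is a (strictly) decreasing function of $u \in \R_{> 0}$. The argument is thus a routine computation; the only point requiring mild attention is the bookkeeping of the index constraints coming from the convention $\theta_{(a,m)} = 0$ for $m<0$, which I would handle by reading off the condition for $\theta_{(0,k-i-j+1)}$, respectively $\theta_{(0,j+k-i+2)}$, to be nonzero.
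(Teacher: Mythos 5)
Your computation is correct and is exactly the route the paper intends: the lemma is stated as a direct application of the explicit formulae of \cref{prop:ABCD:explicit} with $\zeta_0(z)=z$, $\theta_{(0,m)} = \tfrac{1}{2(3u)^{m+1}}$, and $\phi \equiv 0$, which is what you carry out. The index bookkeeping via the convention $\theta_{(0,m)}=0$ for $m<0$ and the monotonicity argument are both handled correctly.
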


The main interest of these formulae is to allow a direct comparison with the quantum Airy structure of arbitrary spectral curves.

\begin{lemma}\label{lem:upper:bound:ABCD}
	Let $(A,B,C,D)$ be the quantum Airy structure associated with a bounded spectral curve $\mc{S}$. Then for any $\alpha = (a,i),\beta = (b,j),\gamma = (c,k) \in \mf{A}$:
	\begin{equation}
	\begin{split}
		\big|A_{\alpha,\beta,\gamma}\big|
		&\leq
		Q \cdot \PIL{A}_{i,j,k}(u)\,, \\
		\big|B_{\alpha,\beta}^{\gamma}\big|
		&\leq
		Q \cdot \PIL{B}_{i,j}^{k}(u)\,, \\
		\big|C_{\alpha}^{\beta,\gamma}\big|
		&\leq
		Q \cdot \PIL{C}_{i}^{j,k}(u)\,, \\
		|D_{\alpha}|
		&\leq
		Q \cdot \PIL{D}_{i}(u)\,,
	\end{split}
	\end{equation}
	where we can take $Q = 2M_{\theta} \max{ \left\{(1 + \rho M_{\phi})^2,1 + 4\rho M_{\phi}\right\}}$ and $u = 3^{-4/3}\rho$.
\end{lemma}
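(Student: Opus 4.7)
The plan is to do a term-by-term comparison, starting from the explicit expressions for the quantum Airy structure given in \cref{prop:ABCD:explicit} and applying the boundedness estimates from \cref{def:boundedness}, then matching against the PI formulae from \cref{lemma:QAS:PI}. The choice $u = 3^{-4/3}\rho$ is engineered so that $3u/\rho = 3^{-1/3} < 1$, which means any extra power of $(3u/\rho)$ arising from a mismatch between the exponent of $\rho$ in the general bound and the exponent of $3u$ in the PI expression is harmless: it only shrinks the ratio. The ``double factorial bookkeeping'' is the other crucial ingredient, since the general formulas contain shifted factorials like $(2k-1)!!$ or $(2j+1)!!$ that must be matched against the PI double-factorials $(2k+1)!!/((2i+1)!!(2j-1)!!)$ etc.

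The computation proceeds as follows. For $A_{\alpha,\beta,\gamma}$ the single term $\theta_{(a,0)}$ gives $|A| \leq \delta_{i,j,k,0} M_\theta/\rho$, which compared to $\PIL{A}_{0,0,0}(u) = 1/(6u)$ yields a ratio $\leq 6uM_\theta/\rho = 2\cdot 3^{-1/3} M_\theta$. For $B_{\alpha,\beta}^{\gamma}$ we split into the two terms of \cref{eq:ABCD:explicit}: the first is bounded by $M_\theta/\rho^{k-i-j+2}$ times the exact double-factorial ratio appearing in $\PIL{B}$, producing the factor $2M_\theta(3u/\rho)^{k-i-j+2}$; the second (only at $i=j=0$) uses the identity $(2k-1)!! = (2k+1)!!/(2k+1)$ to match $\PIL{B}$'s numerator and contributes a relative factor $M_\phi\rho/(2k+1) \leq M_\phi\rho$. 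Combined these yield a bound $\leq 2M_\theta(1+M_\phi\rho)$ for the ratio $|B|/\PIL{B}$.

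The only subtle case is $C_{\alpha}^{\beta,\gamma}$, which has four terms. The first term is treated like the first of $B$. The two single-$\phi$ terms involve sums $\sum_{m+m' = j-i+1}\theta_{(a,m)}\phi_{(a,m'),(c,k)}$; these are bounded by $M_\theta M_\phi(j-i+2)/\rho^{j-i+k+2}$, and the factor $(j-i+2)$ is absorbed into the PI denominator through $(2j-1)!!/(2j+1)!! = 1/(2j+1)$ once we compare with $\PIL{C}$. The final term at $i=0$ carries two $\phi$'s and gives a $M_\theta M_\phi^2 \rho^2$-type contribution, again using $(2j-1)!!(2k-1)!!/((2j+1)!!(2k+1)!!)$. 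When added together with $(3u/\rho)^{j+k-i+3} \leq 1$ absorbed, these pieces telescope to a ratio $\leq 2M_\theta(1+\rho M_\phi)^2$, explaining the first alternative in $Q$. For $D_\alpha$, a direct calculation at $i=0$ gives $|D_0| \leq M_\theta(1+4M_\phi\rho)/(8\rho^2)$ to compare with $\PIL{D}_0 = 1/(16(3u)^2)$, producing the second alternative $2M_\theta(1+4M_\phi\rho)$. Taking the max over all four tensors delivers the claim. The main obstacle is the bookkeeping in $C$: keeping track of which double-factorial identities convert ``parasitic'' factors like $(j-i+2)$ or $M_\phi^2$ into the clean expansion of $(1+\rho M_\phi)^2$, while uniformly controlling the dependence on indices; once one realises that the inequality $3u/\rho = 3^{-1/3} < 1$ absorbs all positive powers and that $(2m-1)!!/(2m+1)!! = 1/(2m+1) \leq 1$ absorbs the combinatorial factors from the convolution, the estimate falls into place.
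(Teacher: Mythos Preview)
Your treatment of the $A$, $B$, and $D$ cases is essentially correct and matches the paper. The gap is in the $C$ case. Look again at the second cross-term in \cref{eq:ABCD:explicit}: it carries the factor $(2j+1)!!(2k-1)!!/(2i+1)!!$, so when you compare with $\PIL{C}_i^{j,k}(u)$ (which has $(2j+1)!!(2k+1)!!/(2i+1)!!$) the double-factorial discrepancy gives you $1/(2k+1)$, \emph{not} $1/(2j+1)$. The convolution over $m+m'=j-i+1$ produces a factor $(j-i+2)$, and the ratio $(j-i+2)/(2k+1)$ is unbounded over the index set (take $k=0$, $i=0$, $j\to\infty$). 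The symmetric third term has the same problem with $j$ and $k$ swapped. So the mechanism you propose for absorbing the convolution count does not work.

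The paper handles this differently: it adds the two cross-terms, bounds $(j-i+2)+(k-i+2)\leq 2(j+k-i+2)$, and then uses the elementary inequality $m\leq 3^{m/3}$ (valid for all integers $m\geq 0$) to trade the linear factor for an exponential one. This extra $3^{(j+k-i+2)/3}$ is exactly absorbed by shrinking $3u$ from $\rho$ to $3^{-1/3}\rho$, i.e.\ by setting $u=3^{-4/3}\rho$. That is the genuine reason for this choice of $u$---not merely that $3u/\rho<1$, which by itself (e.g.\ with $u=\rho/3$) would not suffice. The $A$, $B$, $D$ bounds are first obtained at $u=\rho/3$ and then transported to $u=3^{-4/3}\rho$ by the monotonicity noted in \cref{lemma:QAS:PI}.
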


\begin{proof}
	The tensors $(A,B,C,D)$ were computed in \cref{prop:ABCD:explicit}. For the A-case, we have 
	\begin{equation}
		\big|A_{\alpha,\beta,\gamma}\big|
		=
		\delta_{a,b,c} \, \delta_{i,j,k,0} \, \theta_{(a,0)}
		\le
		\delta_{i,j,k,0} \, \frac{M_{\theta}}{\rho}
		=
		2M_{\theta} \cdot \PIL{A}_{i,j,k}\big(\tfrac{\rho}{3}\big) \,.
	\end{equation}
	Similarly for the B-case, for which we have
	\begin{equation}
	\begin{split}
		\big|B_{\alpha,\beta}^{\gamma}\big|
		& \leq
		\delta_{a,b,c} \, \delta_{k-i-j+1 \geq 0} \,
			\frac{M_{\theta}}{\rho^{k+2-i-j}} \,
			\frac{(2k+1)!!}{(2i+1)!! (2j-1)!!}
			+
		\delta_{a,b} \, \delta_{i,j,0} \,
			\frac{M_{\theta} M_{\phi}}{\rho^{k+1}}
			(2k-1)!! \\
		& \leq
		2M_{\theta} (1 + \rho M_{\phi})
		\cdot
		\PIL{B}_{i,j}^{k}(\tfrac{\rho}{3})\,.
	\end{split}
	\end{equation}
	The C-case is slightly more involved:
	\begin{equation}
	\begin{split}
		\big|C_{\alpha}^{\beta,\gamma}\big|
		& \leq
		\delta_{a,b,c} \, \delta_{j+k-i+2 \geq 0} \,
				\frac{M_{\theta}}{\rho^{j+k-i+3}} \,
				\frac{(2j+1)!! (2k+1)!!}{(2i+1)!!} \\
			&\quad
			+
			\delta_{a,b} \, \delta_{j-i+1 \geq 0} \,
				\sum_{\substack{m,m'\geq 0 \\ m+m'= j-i+1}}
					\frac{M_{\theta} M_{\phi}}{\rho^{m+m'+1 + j}}
					\frac{(2j+1)!! (2k-1)!!}{(2i+1)!!} \\
			&\quad
			+
			\delta_{a,c} \, \delta_{k-i+1 \geq 0} \,
				\sum_{\substack{m,m' \geq 0 \\ m+m' = k-i+1}}
					\frac{M_{\theta} M_{\phi}}{\rho^{m+m'+1 + k}}
					\frac{(2j-1)!! (2k+1)!!}{(2i+1)!!} \\
			&\quad
			+
			\delta_{i,0} \,
			\frac{M_{\theta} M_{\phi}^2}{\rho^{j+k+1}}
			(2j-1)!! (2k-1)!!\,.
	\end{split}
	\end{equation}
	The sum of the first and last lines is bounded by $(2M_{\theta} + 2\rho^2M_{\theta}M_{\phi}^2) \cdot \PIL{C}^{j,k}_i\big(\tfrac{\rho}{3}\big)$, while the sum of the second and third lines is bounded by
	\begin{equation}\label{eq:bound:Mth:Mphi}
			\delta_{j+k-i+2 \geq 0} \frac{M_{\theta} M_{\phi}}{\rho^{j+k-i+2}} \frac{(2j+1)!! (2k+1)!!}{(2i+1)!!}
			\bigl(
				(j-i+2) + (k-i+2) 
			\bigr)\,.
	\end{equation}
	If $j + k - i + 2 \geq 0$, we have $(j - i + 2) + (k - i + 2) \leq 2(j + k - i + 2) \leq 2 \cdot 3^{(j + k - i + 2)/3}$. It follows that \eqref{eq:bound:Mth:Mphi} is bounded by
	\begin{equation}
		\delta_{j + k - i + 2 \geq 0} \frac{2M_{\theta}M_{\phi}}{(3^{-1/3}\rho)^{j + k - i + 2}} \, \frac{(2j + 1)!!(2k + 1)!!}{(2i + 1)!!}
		\leq
		4\rho M_{\theta}M_{\phi} \cdot \PIL{C}_i^{j,k}(3^{-4/3}\rho)\,.
	\end{equation}
	Since $u \mapsto \PIL{C}_{i}^{j,k}(u)$ are decreasing functions of $u$, we obtain $\big|C_{\alpha}^{\beta,\gamma}\big| \leq 2M_{\theta}(1 + \rho M_{\phi})^2 \cdot \PIL{C}_i^{j,k}(3^{-4/3}\rho)$. Finally, for the D-case we find 
	\begin{equation}
			\big|D_{\alpha}\big|
			\leq
			2M_{\theta} (1 + 4 M_{\phi} \rho)
			\cdot
			\PIL{D}_{i}(\tfrac{\rho}{3})\,.
	\end{equation}
	We conclude by choosing $Q = 2M_{\theta} \max{ \left\{(1 + \rho M_{\phi})^2,1 + 4\rho M_{\phi}\right\}}$, $u = 3^{-4/3}\rho$, and using the fact that the coefficients of $\PIL{A}, \PIL{B}, \PIL{D}$ are weakly decreasing functions of $u$.
\end{proof}

\begin{corollary}[{Upper bound by PI amplitudes}]\label{prop:upper:bound:Fgn}
	Consider a bounded spectral curve. The corresponding amplitudes $F_{g;\alpha_1,\dots,\alpha_n}$ satisfy for $2g - 2 + n > 0$ and $\alpha_i = (a_i,k_i) \in \mf{A}$
	\begin{equation}
		\big|F_{g;\alpha_1,\dots,\alpha_n}\big|
		\le
		|\mf{a}|^{3g-3+n} \,
		Q^{2g-2+n} \,
		\PIR{F}_{g;k_1,\dots,k_n}(u)\,.
	\end{equation}
	with constants $Q, u > 0$ from \cref{lem:upper:bound:ABCD}.
\end{corollary}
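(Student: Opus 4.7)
The plan is to proceed by strong induction on the Euler characteristic $\chi = 2g - 2 + n$, using the Airy-structure recursion \eqref{eq:TR:QAS} together with the pointwise tensorial bounds of \cref{lem:upper:bound:ABCD}. The structural observation that makes the induction work cleanly is that both summands on the right-hand side of \eqref{eq:TR:QAS} strictly lower $\chi$: the $B$-term involves $F_{g;\mu,\alpha_2,\dots,\widehat{\alpha_m},\dots,\alpha_n}$, which has only $n - 1$ arguments (the substitution $(\alpha_1,\alpha_m) \mapsto \mu$ consolidates two arguments into one), while the $C$-term involves either $F_{g-1;\mu,\nu,\alpha_2,\dots,\alpha_n}$ of topology $(g-1,n+1)$ or a splitting $F_{g_1;\mu,I_1} \cdot F_{g_2;\nu,I_2}$ with Euler characteristics summing to $\chi - 1$.

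First I would handle the two base cases $\chi = 1$. For $(g,n) = (0,3)$, the initial condition $F_{0;\alpha,\beta,\gamma} = A_{\alpha,\beta,\gamma}$ combined with \cref{lem:upper:bound:ABCD} yields $|A_{\alpha,\beta,\gamma}| \leq Q\, \PIR{A}_{i,j,k}(u) = Q\, \PIR{F}_{0;i,j,k}(u)$, matching the claim since $|\mf{a}|^{0} Q^{1} = Q$. For $(g,n) = (1,1)$ the initial condition is $F_{1;\alpha} = D_{\alpha}$, and $|D_\alpha| \leq Q\, \PIR{D}_i(u) \leq |\mf{a}|\, Q\, \PIR{F}_{1;i}(u)$ using $|\mf{a}| \ge 1$.

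For the inductive step, I would take absolute values in \eqref{eq:TR:QAS}, insert the pointwise bounds $|B^{\mu}_{\alpha_1,\alpha_m}| \leq Q\, \PIR{B}^k_{i,j}(u)$ and $|C^{\mu,\nu}_{\alpha_1}| \leq Q\, \PIR{C}^{j,k}_i(u)$ from \cref{lem:upper:bound:ABCD}, and apply the inductive hypothesis to each smaller amplitude. The sums over $\mu, \nu \in \mf{A} = \mf{a} \times \Z_{\geq 0}$ factor into sums over ramification labels $b,c \in \mf{a}$, which contribute the extra factors of $|\mf{a}|$ needed to upgrade the exponent from $3g-4+n$ (for the $B$-term) or $3g-5+n$ (for the $C$-term) to $3g-3+n$, and sums over integer levels $j, k \geq 0$, which reconstruct precisely the internal sums of the Painlev\'e~I recursion for $\PIR{F}_{g;k_1,\dots,k_n}(u)$. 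The $Q$-powers balance with one extra factor of $Q$ coming from the $|B|$ or $|C|$ bound. To close the induction, I would invoke the non-negativity of the Painlev\'e~I amplitudes for $u > 0$, which follows inductively from the same recursion since \cref{lemma:QAS:PI} shows that $\PIL{A}, \PIL{B}, \PIL{C}, \PIL{D}$ are themselves non-negative at $u > 0$; this is what allows the bound to be closed under addition of the $B$- and $C$-contributions.

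The one point to watch, and really the only conceptually non-trivial one, is the observation that the $B$-term reduces $n$ by one rather than preserving it, so that a single induction on $\chi$ suffices without requiring a secondary induction on the level indices $k_i$ or on the ramification labels $a_i$. Everything else is bookkeeping of exponents and of the decomposition $\mf{A} = \mf{a} \times \Z_{\geq 0}$.
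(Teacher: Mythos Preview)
Your proposal is correct and follows essentially the same approach as the paper's own proof: induction on $2g-2+n$, base cases via the $A$ and $D$ bounds, and the inductive step by applying the triangle inequality to \eqref{eq:TR:QAS}, inserting \cref{lem:upper:bound:ABCD} and the induction hypothesis, and observing that the sums over the ramification labels in $\mu,\nu \in \mf{A}$ supply exactly the missing factors of $|\mf{a}|$ while the remaining sums over integer levels reconstitute the Painlev\'e~I recursion. Your explicit remark on the non-negativity of the PI coefficients (hence amplitudes) for $u>0$ is a helpful clarification that the paper leaves implicit.
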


\begin{proof}[{Proof of \cref{prop:upper:bound:Fgn}}]
	We proceed by induction on $2g-2+n$. For the base cases, \cref{lem:upper:bound:ABCD} yields
	\begin{equation}
		\big|F_{0;\alpha,\beta,\gamma}\big|
		=
		\big|A_{\alpha,\beta,\gamma}\big|
		\le
		Q \cdot
		\PIL{A}_{i,j,k}(u)\,,
		\qquad
		\big|F_{1;\alpha}\big|
		=
		\big|D_{\alpha}\big|
		\leq
		Q \cdot
		\PIL{D}_{i}(u)
		\leq
		|\mf{a}| \cdot
		Q \cdot
		\PIL{D}_{i}(u)\,.
	\end{equation}
	For the general case, we start from the recursion equation \labelcref{eq:TR:QAS}:
	\begin{equation}
	\begin{split}
		\big|F_{g;\alpha_1,\dots,\alpha_n}\big|
		& \le
		\sum_{m=2}^n \sum_{\mu \in \mf{A}}
			\big|B_{\alpha_1,\alpha_m}^{\mu}\big| \cdot \big|F_{g;\mu,\alpha_2,\dots,\widehat{\alpha_m},\dots,\alpha_n}\big| \\
		& \qquad
		+
		\frac{1}{2} \sum_{\mu,\nu \in \mf{A}}
			\big|C_{\alpha_1}^{\mu,\nu}\big| \Biggl(
				\big|F_{g-1;\mu,\nu,\alpha_2\dots,\alpha_n}\big|
				+
				\!\!\!
				\sum_{\substack{g_1 + g_2 = g \\ A_1 \sqcup A_2 = \{\alpha_2,\dots,\alpha_n\}}}
				\!\!\!
				\big|F_{g_1;\mu,A_1}\big| \cdot \big|F_{g_2;\nu,A_2}\big|
			\Biggr) \\
		& \le
		\sum_{m=2}^n \sum_{\substack{\mu \in \mf{A} \\ \mu = (a,i)}}
			Q
			\cdot
			\PIL{B}_{k_1,k_m}^{i}(u)
			\cdot
			|\mf{a}|^{3g-3+n-1}
			\cdot
			Q^{2g-2+n-1}
			\cdot
			\PIR{F}_{g;i,k_2,\dots,\widehat{k_m},\dots,k_n}(u) \\
		& \qquad
		+
		\frac{1}{2} \sum_{\substack{\mu, \nu \in \mf{A} \\ \mu = (a,i) \\ \nu = (b,j)}}
			Q \cdot \PIL{C}_{k_1}^{i,j}(u)
			\cdot
			|\mf{a}|^{3g-3+n-2}
			\cdot
			Q^{2g-2+n-1} \\
			&
			\qquad\qquad
			\times
			\Biggl(
				\PIR{F}_{g-1;i,j,k_2\dots,k_n}(u)
				+
				\!\!\!
				\sum_{\substack{g_1 + g_2 = g \\ K_1 \sqcup K_2 = \{k_2,\dots,k_n\}}}
				\!\!\!
				\PIR{F}_{g_1;i,K_1}(u)
				\cdot
				\PIR{F}_{g_2;j,K_2}(u)
			\Biggr) \\
		& =
		|\mf{a}|^{3g-3+n} \,
		Q^{2g-2+n} \,
		\PIR{F}_{g;k_1,\dots,k_n}(u)\,.
	\end{split}
	\end{equation}
	The first inequality is simply the triangular inequality, the second inequality is a combination of \cref{lem:upper:bound:ABCD} and the induction hypothesis, the last equality is the recursive definition of the PI amplitudes $\PIR{F}_{g;k_1,\dots,k_n}(u)$.
\end{proof}

\subsection{Lower bounds}
\label{subsec:lower:bound}
In this section we provide a lower bound for the amplitudes of spectral curves satisfying a positivity condition.

\begin{definition}\label{def:pos}
	A regular spectral curve $\mc{S}$ is \emph{positive} if the expansion coefficients \eqref{eq:expns:coeff:y:B} are non-negative, i.e. for any $\alpha,\beta \in \mf{A}$:
	\begin{equation}
		\theta_{\alpha} \geq 0
		\qquad\text{and}\qquad
		\phi_{\alpha,\beta} \geq 0\,.
	\end{equation}
	It is \emph{strongly positive} if there exists furthermore some $M_-,\rho_- > 0$ such that for any $(a,i) \in \mf{A}$:
	\begin{equation}
		\theta_{(a,i)} \geq \frac{M_-}{\rho_-^{i + 1}}\,.
	\end{equation}
\end{definition}

Note that being regular then imposes $\theta_{(a,0)} > 0$ for any $a \in \mf{a}$. 

In the positive case the amplitudes playing a fundamental role are those associated with the \emph{Airy spectral curve} $\AiR{\mc{S}}$ on $\Sigma = \P^1$ given by
\begin{equation}\label{eq:SC:Airy}
	x(z) = z^2,
	\qquad
	y(z) = -\frac{z}{2}\,,
	\qquad
	\omega_{0,2}(z_1,z_2) = \frac{\dd z_1 \dd z_2}{(z_1 - z_2)^2}\,.
\end{equation}
Again, since there is only one ramification point, we omit its dependence from the associated quantum Airy structure. A simple application of the formulae from \cref{subsec:TR:QAS} yields the following expressions.

\begin{lemma}\label{lemma:QAS:Airy} 
	The quantum Airy structure $(\AiL{A},\AiL{B},\AiL{C},\AiL{D})$ associated with $\AiR{\mc{S}}$ is given by
	\begin{equation}\label{eq:QAS:Airy}
		\begin{split}
			\AiL{A}_{i,j,k}
			&= \delta_{i,j,k,0} \,,\\
			\AiL{B}_{i,j}^{k}
			&= \delta_{i+j,k+1} \frac{(2k+1)!! }{(2i+1)!! (2j-1)!!}\, ,\\
			\AiL{C}_{i}^{j,k}
			&= \delta_{i,j+k+2} \frac{(2j+1)!! (2k+1)!!}{(2i+1)!!} \,,\\
			\AiL{D}_{i}
			&= \frac{\delta_{i,1}}{24}\, .
		\end{split}
	\end{equation}
\end{lemma}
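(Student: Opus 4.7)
The plan is to reduce the problem to a direct substitution into \cref{prop:ABCD:explicit}. That proposition expresses $(A,B,C,D)$ entirely in terms of the expansion coefficients $\theta_{(a,m)}$ and $\phi_{(a,i),(b,j)}$ attached to the spectral curve. So the only task is to identify these coefficients for the Airy curve and then read off the four identities from \eqref{eq:ABCD:explicit}.

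First I would fix the local coordinate. The unique ramification point is $a = 0$, and since $x(z) = z^2$ one can take $\zeta_0(z) = z$; the associated Galois involution is $\sigma_0(z) = -z$. A direct computation gives $y(z) - y(\sigma_0(z)) = -z$ and $\dd x(z) = 2z\,\dd z$, whence
\begin{equation*}
	\theta_0(z)
	=
	\frac{-2}{\bigl(y(z)-y(-z)\bigr)\,\dd x(z)}
	=
	\frac{1}{z^{2}\,\dd z}\,.
\end{equation*}
Matching this with the expansion in \eqref{eq:expns:coeff:y:B} yields $\theta_{(0,m)} = \delta_{m,0}$. Next, because the bidifferential $\omega_{0,2}(z_1,z_2) = \dd z_1 \dd z_2/(z_1 - z_2)^2$ already coincides with its principal part on the diagonal, the \enquote{non-even holomorphic} part and all the $\phi$-coefficients vanish, so $\phi_{(0,i),(0,j)} = 0$ for every $i,j \ge 0$.

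Plugging these two pieces of data into \cref{prop:ABCD:explicit} collapses the formulae to a single surviving term in each case. For $A$ one gets $A_{\alpha,\beta,\gamma} = \delta_{i,j,k,0}\,\theta_{(0,0)} = \delta_{i,j,k,0}$. For $B$, the $\phi$-term is killed and the first term survives only when the index of $\theta$ vanishes, i.e.\ when $k - i - j + 1 = 0$, giving the advertised $\delta_{i+j,k+1}$ constraint with the stated $!!$-ratio. Exactly the same mechanism applies to $C$: all three $\phi$-contributions vanish, the convolution sums collapse, and the first line imposes $j + k - i + 2 = 0$, i.e.\ $i = j+k+2$. Finally for $D$, the $\phi_{(0,0),(0,0)}$ term and the $\theta_{(0,1)}$ term are both zero, leaving only $\delta_{i,1}\,\theta_{(0,0)}/24 = \delta_{i,1}/24$.

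There is essentially no obstacle here: the lemma is a sanity check that the Airy curve realises the simplest possible quantum Airy structure. The only points that need care are (i)~verifying that the convolution sums in the formula for $C$ are indeed annihilated by $\phi \equiv 0$, rather than by the $\theta$ vanishing alone, and (ii)~keeping track of the double-factorial normalisations so that the $!!$-prefactors in \eqref{eq:QAS:Airy} match those in \eqref{eq:ABCD:explicit} verbatim.
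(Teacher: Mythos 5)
Your computation is correct and follows exactly the route the paper intends: the paper offers no written proof beyond "a simple application of the formulae from \cref{subsec:TR:QAS}", and your identification $\theta_{(0,m)}=\delta_{m,0}$, $\phi_{(0,i),(0,j)}=0$ followed by substitution into \cref{prop:ABCD:explicit} is precisely that application.
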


Denote by $\AiR{F}_{g;k_1,\dots,k_n}$ the amplitudes associated with $\AiR{\mc{S}}$; it is not hard to see that they vanish unless $k_1+\cdots+k_n = 3g-3+n$. The above lemma yields a lower bound on the amplitudes associated with any positive spectral curve.

\begin{proposition}[Lower bound by Airy amplitudes]\label{prop:lower:bound:Fgn:pos}
	Consider a positive regular spectral curve. The corresponding amplitudes $F_{g;\alpha_1,\dots,\alpha_n}$ satisfy for $2g - 2 + n > 0$ and $\alpha_i = (a_i,k_i) \in \mf{A}$
	\begin{equation}
		F_{g;\alpha_1,\dots,\alpha_n}
		\geq
		\delta_{a_1,\dots,a_n} \,
		Q_-^{2g - 2 + n} \,
		\AiR{F}_{g;k_1,\dots,k_n}\,,
	\end{equation}
	where we can take $Q_- = \min\set{ \theta_{(a,0)} | a \in \mf{a} }$.
\end{proposition}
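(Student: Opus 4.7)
The plan is to mirror the argument of \cref{prop:upper:bound:Fgn} (the upper bound by PI amplitudes) in reverse, replacing domination by the PI tensors with domination from below by the Airy quantum Airy structure. Positivity of the spectral curve is precisely what makes this possible: by \cref{prop:ABCD:explicit} every coefficient of $A,B,C,D$ is then a sum of non-negative terms, so we may freely drop any subset of them to obtain a lower bound.

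First I would establish term-by-term lower bounds analogous to \cref{lem:upper:bound:ABCD}. Comparing \cref{prop:ABCD:explicit} with the Airy tensors of \cref{lemma:QAS:Airy}, and using only $\theta_{(a,0)} \geq Q_-$ together with $\theta_{(a,i)}, \phi_{(a,i),(b,j)} \geq 0$, I claim that for all $\alpha=(a,i)$, $\beta=(b,j)$, $\gamma=(c,k)$:
\begin{align*}
	A_{\alpha,\beta,\gamma}
	&\geq \delta_{a,b,c}\, Q_- \cdot \AiL{A}_{i,j,k}, &
	B_{\alpha,\beta}^{\gamma}
	&\geq \delta_{a,b,c}\, Q_- \cdot \AiL{B}_{i,j}^{k}, \\
	C_{\alpha}^{\beta,\gamma}
	&\geq \delta_{a,b,c}\, Q_- \cdot \AiL{C}_{i}^{j,k}, &
	D_{\alpha} &\geq Q_- \cdot \AiL{D}_{i}.
\end{align*}
For $B$ one retains only the first summand, whose support condition $k-i-j+1=0$ matches the Kronecker $\delta_{i+j,k+1}$ in $\AiL{B}_{i,j}^{k}$, and uses $\theta_{(a,0)}\geq Q_-$; for $C$ one retains only the purely-$\theta$ summand, whose support $j+k-i+2=0$ matches $\delta_{i,j+k+2}$; for $D$ one keeps only the $\delta_{i,1}\theta_{(a,0)}/24$ piece and drops the non-negative $\delta_{i,0}$ contribution.

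Next I would proceed by induction on $2g-2+n$, exactly as in the upper-bound proof. The key preliminary observation is that positivity of $A,B,C,D$ propagates along \eqref{eq:TR:QAS} to give $F_{g;\alpha_1,\dots,\alpha_n}\geq 0$ for every $(g,n)$ with $2g-2+n>0$, so no sign cancellations can occur when we substitute the term-by-term lower bounds. The base cases $F_{0;\alpha,\beta,\gamma}=A_{\alpha,\beta,\gamma}$ and $F_{1;\alpha}=D_\alpha$ yield the claim with exponent $Q_-^{1}=Q_-^{2g-2+n}$. For the inductive step, plugging the lower bounds for $B$ and $C$ into \eqref{eq:TR:QAS}, the factors $\delta_{a,b,c}$ attached to these tensors force each internal summation index $\mu$ (or $\mu,\nu$) to lie at the same ramification point as $a_1$; combined with the Kronecker $\delta_{a_\mu,a_2,\ldots}$ (resp.\ $\delta_{a_\mu,a_\nu,a_2,\ldots}$) provided by the inductive hypothesis, this collectively produces a global $\delta_{a_1,\dots,a_n}$. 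The remaining sums over the $\Z_{\geq 0}$-parts $i_\mu,i_\nu$ reproduce verbatim the recursion that defines $\AiR{F}_{g;k_1,\dots,k_n}$, while the accumulated $Q_-$ factors (one per recursion step, $Q_-^{2g_i-2+n_i}$ per building block) combine to exactly $Q_-^{2g-2+n}$ by the usual Euler characteristic bookkeeping.

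The main obstacle I anticipate is purely bookkeeping: one must verify that the summand retained in $B$ and $C$ is precisely the one whose support coincides with the nonvanishing locus of the corresponding Airy tensor, and that every other summand dropped from \cref{prop:ABCD:explicit} is genuinely non-negative under positivity. Both are consequences of the explicit formulae together with \cref{def:pos}. Once this and the positivity of the $F_{g;\alpha_1,\dots,\alpha_n}$ are settled, the induction is a straightforward mirror of \cref{prop:upper:bound:Fgn}.
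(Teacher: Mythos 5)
Your proposal is correct and follows essentially the same route as the paper: positivity makes every term of the explicit tensors \eqref{eq:ABCD:explicit} non-negative, so one may drop all but the summand matching the support of the Airy tensor, obtain the term-by-term lower bounds $A,B,C,D \geq \delta\, Q_-\cdot{}^{\textup{Ai}}(A,B,C,D)$, and conclude by monotonicity of the recursion \eqref{eq:TR:QAS} and induction on $2g-2+n$. The paper merely packages your explicit inductive step as the observation that the minorising tensors form $Q_-$ times the quantum Airy structure of a disjoint union of $|\mf{a}|$ Airy curves, whose amplitudes are exactly $\delta_{a_1,\dots,a_n} Q_-^{2g-2+n}\, {}^{\textup{Ai}}F_{g;k_1,\dots,k_n}$.
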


\begin{proof}
	In view of the recursion \eqref{eq:TR:QAS}, the amplitudes $F_{g;\alpha_1,\dots,\alpha_n}$ are weakly increasing functions of the coefficients of the quantum Airy structure $(A,B,C,D)$, in the range where the latter are non-negative. The positivity assumption on the spectral curve $\mc{S}$ implies that the coefficients \eqref{eq:ABCD:explicit} of $(A,B,C,D)$ are non-negative, and in fact, each term in \eqref{eq:ABCD:explicit} is non-negative. Therefore, the amplitudes $F_{g;\alpha_1,\dots,\alpha_n}$ of $\mc{S}$ are lower bounded by the amplitudes obtained by using the recursion \eqref{eq:TR:QAS} with initial data on the right-hand side of the following inequalities:
	\begin{equation}
	\begin{split}
		A_{\alpha,\beta,\gamma}
		& \geq
		Q_- \, 
		\delta_{a,b,c}\delta_{i,j,k,0}\,, \\
		B_{\alpha,\beta}^{\gamma}
		& \geq
		Q_- \, 
		\delta_{a,b,c} \delta_{i+j,k + 1} \frac{(2k + 1)!!}{(2i + 1)!!(2j - 1)!!}\,, \\
		C_{\alpha}^{\beta,\gamma}
		& \geq
		Q_- \, 
		\delta_{a,b,c} \delta_{i,j+k+ 2} \frac{(2k + 1)!!(2j + 1)!!}{(2i + 1)!!}\,, \\
		D_{\alpha}
		& \geq
		Q_- \, 
		\frac{\delta_{i,1}}{24}\,,
	\end{split}
	\end{equation}
	with $\alpha = (a,i),\beta = (b,j),\gamma = (c,k)$ and $Q_- = \min\set{ \theta_{(a,0)} | a \in \mf{a} }$. These initial data form a quantum Airy structure, namely $q$ times the one associated with the disjoint union of $|\mf{a}|$ spectral curves $\AiR{\mc{S}}$. The amplitudes of the latter are $\delta_{a_1,\dots,a_n} \cdot Q_-^{2g - 2 + n} \cdot \AiR{F}_{g;k_1,\dots,k_n}$. The thesis then follows by induction on $2g-2+n$. 
\end{proof}

In the strongly positive case, we can use a better lower bound by the PI amplitudes.

\begin{proposition}[Lower bound by PI amplitudes]\label{prop:lower:bound:Fgn:strng:pos}
	Consider a strongly positive regular spectral curve. The corresponding amplitudes $F_{g;\alpha_1,\dots,\alpha_n}$ satisfy for $2g - 2 + n > 0$ and $\alpha_i = (a_i,k_i) \in \mf{A}$
	\begin{equation}\label{eq:lower:bound:Fgn:strng:pos}
		F_{g;\alpha_1,\dots,\alpha_n}
		\geq
		\delta_{a_1,\dots,a_n} \,
		Q_-^{2g - 2 + n} \,
		\PIR{F}_{g;k_1,\dots,k_n}(u_-) \,,
	\end{equation}
	where we can take $Q_- = 2M_-$ and $u_- = \rho_-/3$.
\end{proposition}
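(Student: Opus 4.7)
The proof will closely mirror the upper bound argument of \cref{prop:upper:bound:Fgn}, with inequalities reversed and simplified by the strong positivity hypothesis. The plan is to first establish a termwise lower bound on the quantum Airy structure $(A,B,C,D)$ by the corresponding PI tensors, and then propagate this lower bound through the recursion \eqref{eq:TR:QAS} by induction on $2g-2+n$.

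\textbf{Step 1: Lower bound on the Airy structure.} Using the explicit formulae of \cref{prop:ABCD:explicit}, every term of $A, B, C, D$ is non-negative under the positivity hypothesis (since all $\theta_{(a,m)}$ and $\phi_{(a,m),(b,j)}$ are non-negative). We therefore obtain a valid lower bound by retaining only the first (``diagonal'') summand in each expression. For the $C$-coefficient, we keep only the term $\delta_{a,b,c}\, \theta_{(a,j+k-i+2)} \,\frac{(2j+1)!!(2k+1)!!}{(2i+1)!!}$; for $B$ we keep the $\delta_{a,b,c}$ term, and $D_{\alpha}\geq \delta_{i,0}\,\theta_{(a,1)}/8 + \delta_{i,1}\,\theta_{(a,0)}/24$. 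Applying strong positivity $\theta_{(a,m)} \geq M_-/\rho_-^{m+1}$ and comparing with \cref{lemma:QAS:PI} at $u=u_- = \rho_-/3$, a direct computation shows that each of these bounds equals exactly $Q_-$ times the corresponding PI tensor, with $Q_- = 2M_-$; for instance, $|B^{\gamma}_{\alpha,\beta}| \geq \delta_{a,b,c}\,\tfrac{M_-}{\rho_-^{k-i-j+2}}\tfrac{(2k+1)!!}{(2i+1)!!(2j-1)!!} = Q_-\cdot \PIL{B}_{i,j}^{k}(u_-)$, and similarly for $A$, $C$, and both pieces of $D$.

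\textbf{Step 2: Induction.} The key observation is that the recursion \eqref{eq:TR:QAS} is monotone in its initial data when all quantities on the right-hand side are non-negative. Since the spectral curve is positive, the amplitudes $F_{g;\alpha_1,\dots,\alpha_n}$ are themselves non-negative (by an easy induction), so replacing $(A,B,C,D)$ by the smaller non-negative quantities from Step~1 yields a genuine lower bound for $F_{g;\alpha_1,\dots,\alpha_n}$. But the lower-bounding tensors are precisely $Q_-$ times the PI quantum Airy structure at $u_-$, restricted to the diagonal $a_1 = \cdots = a_n$: the Kronecker deltas $\delta_{a,b,c}$ propagate through the recursion so that any nonvanishing contribution to the lower bound requires all ramification labels to coincide. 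An induction on $2g-2+n$ identical in form to the one in the proof of \cref{prop:upper:bound:Fgn} (but with all inequalities reversed and the factor $|\mf{a}|^{3g-3+n}$ replaced by $\delta_{a_1,\dots,a_n}$) then yields \eqref{eq:lower:bound:Fgn:strng:pos}.

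\textbf{Main obstacle.} There is no serious obstacle, only some bookkeeping: one must be careful, in Step~1 for the $C$-tensor, that the single diagonal term we retain really matches the structure of $\PIL{C}_{i}^{j,k}(u_-)$ (including the constraint $j+k-i+2 \geq 0$) and that the constants $Q_-$ and $u_-$ work uniformly across $A$, $B$, $C$, $D$. The choice $u_- = \rho_-/3$ and $Q_-=2M_-$ makes each comparison an equality, so no slack is wasted. The base cases $F_{0;\alpha,\beta,\gamma} = A_{\alpha,\beta,\gamma}$ and $F_{1;\alpha} = D_{\alpha}$ follow immediately from Step~1, and the inductive step proceeds exactly as in \cref{prop:upper:bound:Fgn} since the recursion for the PI amplitudes is the recursion \eqref{eq:TR:QAS} applied to $(\PIL{A},\PIL{B},\PIL{C},\PIL{D})(u_-)$.
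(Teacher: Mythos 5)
Your proposal is correct and follows essentially the same route as the paper: the paper phrases Step~1 as setting the $\phi_{\alpha,\beta}$ to zero (by monotonicity of the amplitudes in the non-negative coefficients) and then comparing $\theta_{(a,i)} \geq Q_-\,\theta_i(u_-)$ with the PI coefficients $\theta_i(u) = \tfrac{1}{2(3u)^{i+1}}$, which yields exactly the diagonal terms you retain termwise from \cref{prop:ABCD:explicit}. The propagation through the recursion by monotonicity and homogeneity in $Q_-$, and the persistence of the $\delta_{a_1,\dots,a_n}$ constraint, are identical to the paper's argument.
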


\begin{proof}
	As before, positivity implies that the coefficients of the quantum Airy structure are increasing functions of the coefficients $(\theta_{\alpha})$ and $(\phi_{\alpha,\beta})$. Thus, $F_{g;\alpha_1,\dots,\alpha_n}$ is lower bounded by the amplitudes of the spectral curve where the coefficients $(\phi_{\alpha,\beta})$ are set to zero. This implies that all terms that do not obey $a_1 = \cdots = a_n$ can be ignored. For the PI spectral curve \eqref{eq:SC:PI}, we have the coefficients $\theta_i(u) = \frac{1}{2(3u)^{i + 1}}$. Hence, the strong positivity assumption implies $\theta_{(a,i)} \geq Q_- \cdot \theta_{i}(u_-)$ with $Q_- = 2M_-$ and $u_- = \rho_-/3$. This implies the comparison \eqref{eq:lower:bound:Fgn:strng:pos} again by monotonicity of amplitudes with respect to non-negative coefficients of quantum Airy structures.
\end{proof}

\begin{remark}
	The lower bound by Airy or Painlev\'e I are in general non-optimal. When there is more than one ramification point, an annoying feature of these lower bounds is that they are trivial unless $a_1 = \cdots = a_n$. It would be interesting to find non-trivial lower bounds without this feature.
\end{remark}

\section{Intersection theory}
\label{sec:TR:CohFT}

\subsection{From topological recursion to intersection theory}
\label{subsec:Mgn}
The amplitudes of a (local) regular spectral curve can always be expressed in terms of intersection indices of tautological classes on $\Mbar_{g,n}$, see \cite{Eyn14,DOSS14,DNOPS18}. We shall make use of this theory in the more restrictive case of spectral curves $\mc{S} = (\Sigma, x, y, \omega_{0,2})$ such that $\dd x$ and $\omega_{0,2}$ extend respectively as meromorphic $1$-form and fundamental bidifferential on a compact Riemann surface $\overline{\Sigma}$ containing $\Sigma$ ($y$ need not extend to $\overline{\Sigma}$). In this case, the tautological class involved is a cohomological field theory obtained by Givental--Teleman reconstruction from a topological field theory, an $R$-matrix, and a translation. We refer to \cite{Pan19} for definitions and notations regarding cohomological field theories. 

 To make the statement precise, recall the centred local coordinate $\zeta_a$ on $U_a$ such that $\zeta_a(a) = 0$ and $x(z) = x(a) + \zeta_a(z)^2$, and the coefficient $t_{(a,1)} = -2 \frac{\dd y}{\dd \zeta_a}(a)$. Consider the auxiliary (multivalued) functions $\varphi^a \colon \Sigma \to \C$ and the associated (single-valued) meromorphic differentials $\widehat{\xi}^{(a,i)}$ defined as
\begin{equation}\label{eq:basis:xi:hat}
	\varphi^a(z)
	\coloneqq
	\int^z_a \left.\frac{\omega_{0,2}(w,\cdot)}{\dd\zeta_{a}(w)}\right|_{w = a}\,,
	\qquad
	\widehat{\xi}^{(a,i)}(z)
	\coloneqq
	\dd\biggl( \left( -2 \frac{\dd}{\dd x(z)} \right)^{i} \varphi^{a}(z) \biggr)\,.
\end{equation}
Define a unital, semi-simple topological field theory on the vector space $V \coloneqq \bigoplus_{a \in \mf{a}} \C.\mr{e}_a$ by setting
\begin{equation}
	\eta(\mr{e}_a,\mr{e}_b) \coloneqq \delta_{a,b}\,,
	\qquad
	\mb{1}
	\coloneqq
	\sum_{a \in \mf{a}} t_{(a,1)} \mr{e}_a\,,
	\qquad
	w_{g,n}(\mr{e}_{a_1} \otimes \cdots \otimes \mr{e}_{a_n})
	\coloneqq
	\frac{\delta_{a_1,\ldots,a_n}}{(t_{(a_1,1)})^{2g-2+n}}
\end{equation}
as the pairing, the unit and the topological field theory respectively. Define the $R$-matrix $R \in \id_V + \lambda.\End(V)\bbraket{\lambda}$ and the translation $T \in \lambda^2.V\bbraket{\lambda}$ by the formulae
\begin{equation}
\begin{split}
	R^{-1}(\lambda)^b_a
	& \coloneqq
	- \sqrt{\frac{\lambda}{2\pi}} \int_{\gamma_b}
		e^{-\frac{x(z) - x(b)}{2\lambda}} \,
		\widehat{\xi}^{(a,0)}(z) \,, \\
	T(\lambda)^a
	& \coloneqq
	t_{(a,1)} \lambda
	+
	\frac{1}{\sqrt{2\pi \lambda}} \int_{\gamma_a}
		e^{-\frac{x(z) - x(a)}{2\lambda}} \,
		\omega_{0,1}(z)
	\,.
\end{split} 
\end{equation}
Here $\gamma_c$ is the formal steepest descent path for $x(z)$ emanating from the ramification point $c$; locally it can be taken along the real axis in the $\zeta_c$-plane. Moreover, the equations are intended as equalities between formal power series in $\lambda$, where on the right-hand side we take an asymptotic expansion as $\lambda \to 0$ along the positive real axis. Through the Givental action, we can then define a cohomological field theory 
\begin{equation}
	\Omega_{g,n}
	\coloneqq
	\hat{R}\hat{T}w_{g,n}
	\in
	H^{\textup{even}}(\Mbar_{g,n}) \otimes (V^{\ast})^{\otimes n}
\end{equation}
from the data $(w,R,T)$ via a sum over stable graphs. In general this cohomological field theory has no unit. The link with the topological recursion correlators will be given through the intersection indices
\begin{equation}\label{eq:int:indices}
	\bigl\langle
		\tau_{k_1}(\mr{e}_{a_1}) \cdots \tau_{k_n}(\mr{e}_{a_n})
	\bigr\rangle^{\!\Omega}_{\!g}
	\coloneqq
	\int_{\Mbar_{g,n}} \Omega_{g,n}(\mr{e}_{a_1} \otimes \cdots \otimes \mr{e}_{a_n}) \prod_{i=1}^n \psi_i^{k_i}\,.
\end{equation}

\begin{theorem}[{\cite{Eyn14,DOSS14,DNOPS18}}] \label{thm:corresp}
	Let $\mc{S} = (\Sigma, x, y, \omega_{0,2})$ be a regular spectral curve such that $\dd x$ and $\omega_{0,2}$ extend to a compact Riemann surface respectively as meromorphic form and fundamental bidifferential. The correlators are given by
	\begin{equation}
		\omega_{g,n}(z_1,\dots,z_n)
		=
		\sum_{\substack{
			k_1,\dots,k_n \geq 0 \\
			a_1,\dots,a_n \in \mf{a} \\
			k_1+\cdots+k_n \leq 3g-3+n
		}}
		\bigl\langle
			\tau_{k_1}(\mr{e}_{a_1}) \cdots \tau_{k_n}(\mr{e}_{a_n})
		\bigr\rangle^{\!\Omega}_{\!g} \,
		\prod_{i=1}^n \widehat{\xi}^{(a_i,k_i)}(z_i) \,.
	\end{equation}
\end{theorem}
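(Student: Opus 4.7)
The strategy is to show that both sides of the asserted identity admit a decomposition as a sum over stable graphs and that the vertex, edge, and leg contributions match term by term. This approach, due to Eynard \cite{Eyn14} and developed by Dunin-Barkowski--Orantin--Shadrin--Spitz \cite{DOSS14,DNOPS18}, works by unfolding the Givental formula on one side and the iterated topological recursion on the other, and then comparing them. I would not attempt induction directly on $\omega_{g,n}$; instead, I would pass through a graphical formulation on both sides.

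First, I would unwind the cohomological field theory $\Omega = \hat{R}\hat{T}w$ as a sum over stable graphs $\Gamma$ of type $(g,n)$, in which each vertex $v$ contributes an intersection number against $w_{g(v),n(v)}$ with $\psi$-class insertions and translation insertions $T^{a}$, each internal edge contributes an $R$-bivector coefficient, and each leg $i$ is labelled by $(a_i,k_i)$ and carries the data $\mr{e}_{a_i} \otimes \psi_i^{k_i}$. Multiplying by $\prod_i \widehat{\xi}^{(a_i,k_i)}(z_i)$ and summing over the labels $(a_i,k_i)$ yields an explicit graphical expression for the candidate correlator. On the other side, I would show that this graphical expression solves the topological recursion \eqref{eq:TR}, which by uniqueness of the solution forces equality with $\omega_{g,n}$. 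The point is to recognise $K_a(z_1,z)$ as the propagator peeling off a three-valent vertex incident to the leg $z_1$, so that $\sum_{a \in \mf{a}} \Res_{z=a}$ reconstructs the full graph sum from those of smaller Euler characteristic. The key identities are the Laplace-type representations
\begin{equation}
	R^{-1}(\lambda)^b_a
	=
	-\sqrt{\tfrac{\lambda}{2\pi}} \int_{\gamma_b} e^{-\frac{x(z)-x(b)}{2\lambda}} \widehat{\xi}^{(a,0)}(z),
	\qquad
	T(\lambda)^a
	=
	t_{(a,1)}\lambda + \tfrac{1}{\sqrt{2\pi\lambda}} \int_{\gamma_a} e^{-\frac{x(z)-x(a)}{2\lambda}} \omega_{0,1}(z),
\end{equation}
whose asymptotic expansion in $\lambda$, via the Laplace method, reproduces exactly the local residues appearing in the tensors $(A,B,C,D)$ of \cref{prop:ABCD:explicit}, up to a change of basis.

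The main obstacle lies precisely in this change of basis: the differentials $\widehat{\xi}^{(a,i)}$ used in the statement differ from the $\xi^{(a,i)}$ of \cref{subsec:TR:QAS} by holomorphic corrections encoded in the coefficients $\phi_{\alpha,\beta}$ of \eqref{eq:expns:coeff:y:B}, and the transition between the two bases is itself absorbed into the Givental $R$-matrix. Reconciling the two expansions rigorously is the combinatorial heart of the proof in \cite{DOSS14,DNOPS18}; in practice I would invoke their result directly, or, at the cost of some bookkeeping, verify the base cases $(g,n) \in \{(0,3),(1,1)\}$ by explicit computation against \cref{prop:ABCD:explicit} and then propagate the identity through the recursion on $2g-2+n$, matching the term in \eqref{eq:TR} that splits or merges a leg against the corresponding edge-cutting operation in the stable graph sum for $\Omega$.
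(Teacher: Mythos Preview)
The paper does not supply its own proof of this theorem: it is stated as a known result and attributed to \cite{Eyn14,DOSS14,DNOPS18}, with only a brief remark afterwards about the change of basis between $(\xi^{\alpha})_{\alpha}$ and $(\widehat{\xi}^{\alpha})_{\alpha}$. So there is nothing in the paper to compare your argument against.

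That said, your outline is an accurate summary of the strategy actually used in the cited references: expand the Givental-reconstructed class $\hat{R}\hat{T}w$ as a sum over stable graphs, identify vertex contributions with Witten--Kontsevich (Airy) intersection numbers, edge contributions with $R$-matrix coefficients, and leg/dilaton contributions with $T$, and then match this against the graphical unfolding of the topological recursion kernel, using the Laplace-transform identities you wrote down. You are also right that the delicate point is the bookkeeping between the bases $\xi^{\alpha}$ and $\widehat{\xi}^{\alpha}$, which is absorbed into the $R$-matrix; the paper itself flags exactly this in the remark following the theorem. Your suggestion to check $(0,3)$ and $(1,1)$ by hand and then propagate by the recursion is the standard route and would work, though in practice the references organise the induction directly at the level of the graph sum rather than via the quantum Airy structure tensors of \cref{prop:ABCD:explicit}.
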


\begin{remark}
	The basis $(\widehat{\xi}^{\alpha})_{\alpha \in \mf{A}}$ defined in \eqref{eq:basis:xi:hat} is in general different from the basis $(\xi^{\alpha})_{\alpha \in \mf{A}}$ defined in \eqref{eq:basis:xi}. The assumptions of \cref{thm:corresp} guarantee the existence of a change of basis relating $(\xi^{\alpha})_{\alpha}$ to $(\widehat{\xi}^{\alpha})_{\alpha}$, and one could also perform this change of basis at the level of the quantum Airy structure.
\end{remark}

The simplest example is the computation of the cohomological field theory associated with the Airy spectral curve. In this case the associated 1-dimensional cohomological field theory (whose underlying vector space is generated by $\mb{1} = \mr{e}_0$) is the Poincaré dual of the fundamental class of $\Mbar_{g,n}$. Besides, the differential forms $\xi^{\alpha}$ and $\widehat{\xi}^{\alpha}$ coincide. Thus, the intersection indices \eqref{eq:int:indices} (evaluated on $\mb{1}^{\otimes n}$) are simply intersections of $\psi$-classes and coincide with the amplitudes of the corresponding quantum Airy structure:
\begin{equation} 
	\AiR{F}_{g;k_1,\dots,k_n}
	=
	\int_{\Mbar_{g,n}} \psi_1^{k_1} \cdots \psi_n^{k_n}
	\eqqcolon
	\braket{\tau_{k_1} \cdots \tau_{k_n}}_g\, .
\end{equation}
This is a well-known restatement of Witten's conjecture/Kontsevich's theorem \cite{Wit91,DVV91,Kon92}.

In \cite{DOSS14}, a version of \cref{thm:corresp} directly concerning Frobenius manifolds was considered. We recall from \cref{sec:local:SP} that series $R(\lambda),T_R(\lambda)$ can be associated with semi-simple point of a Frobenius manifold $M$. The tangent space of $M$ at this point is a Frobenius algebra and we take $w_{g,n}$ to be the corresponding topological field theory. We can then form the unital cohomological field theory via the Givental action
\begin{equation}\label{eq:Omega:Givental}
	\Omega_{g,n} \coloneqq \hat{R}\hat{T}_R w_{g,n} \,.
\end{equation}
Then the ancestor potential of the Frobenius manifold is a generating series for the intersection indices defined by \eqref{eq:int:indices} with $\mr{e}_a = \del_{v_a}$, where $(\del_{v_a})_{a \in \mf{a}}$ is the basis of normalised canonical vector fields. If we replace $T_R(\lambda)$ in \eqref{eq:Omega:Givental} with an arbitrary series $T(\lambda) \in \mr{O}(\lambda^2)$ we still obtain a cohomological field theory albeit it may not have a unit anymore. In any case, the construction of the local spectral curve in \cref{sec:local:SP} is tailored to express the following. 

\begin{theorem}[\cite{DOSS14}] \label{thm:corresp:Frob}
	The topological recursion correlators of the local spectral curve associated with a semi-simple point of a Frobenius manifold satisfy
	\begin{equation}
		\omega_{g,n}(z_1,\dots,z_n)
		=
		\sum_{\substack{
			k_1,\dots,k_n \geq 0 \\
			a_1,\dots,a_n \in \mf{a} \\
			k_1+\cdots+k_n \leq 3g-3+n
		}}
		\bigl\langle
			\tau_{k_1}(\del_{v_{a_1}}) \cdots \tau_{k_n}(\del_{v_{a_n}})
		\bigr\rangle^{\!\Omega}_{\!g} \,
		\prod_{i=1}^n \widehat{\xi}^{(a_i,k_i)}(z_i) \,.
	\end{equation}
\end{theorem}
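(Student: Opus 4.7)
The plan is to reduce the claim to \cref{thm:corresp}, applied to the local spectral curve constructed from the Frobenius manifold in \cref{sec:local:SP}, by verifying that the integral formulas defining $(R,T)$ in \cref{thm:corresp} reproduce exactly the Dubrovin $R$-matrix and the translation $T_R$ used to build $\Omega_{g,n}$.

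First I would observe that \cref{thm:corresp} admits a local version with the same proof: the integrals defining $R^{-1}(\lambda)^b_a$ and $T(\lambda)^a$ are formal Laplace-type asymptotic expansions in $\lambda$, determined solely by the Taylor coefficients of the integrand at the saddle point, i.e. by the local data $(t_{\alpha}, \phi_{\alpha,\beta})$. Likewise, as formal series in $\zeta_b$ around each ramification point $b$, the basis differentials $\widehat{\xi}^{(a,i)}$ of \eqref{eq:basis:xi:hat} are fully encoded by the $\phi_{\alpha,\beta}$. Applying this local version to our local spectral curve yields an expansion of $\omega_{g,n}$ of the claimed shape, but with some CohFT $\Omega'_{g,n} = \hat{R}'\hat{T}' w'_{g,n}$ whose data $(w',R',T')$ are produced by the integral recipe of \cref{thm:corresp}.

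The remaining task is to identify $(w',R',T')$ with $(w,R,T_R)$. For the topological field theory, using $\del_{v_a} = t_{(a,1)}^{-1}\del_{u_a}$ one checks $\eta(\del_{v_a},\del_{v_b}) = \delta_{a,b}$ and $\del_{v_a} \star \del_{v_b} = \delta_{a,b}\,t_{(a,1)}^{-1}\,\del_{v_a}$, so that the standard semi-simple TFT formula delivers $w_{g,n}(\del_{v_{a_1}} \otimes \cdots \otimes \del_{v_{a_n}}) = \delta_{a_1,\ldots,a_n}/t_{(a_1,1)}^{2g-2+n}$, matching $w'$ under $\mr{e}_a \leftrightarrow \del_{v_a}$. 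For the $R$-matrix, a saddle-point expansion at $z = b$ via $\zeta_b = \sqrt{\lambda}\,s$ turns ${R'}^{-1}(\lambda)^b_a$ into a Gaussian series whose coefficients are the even Taylor coefficients of $\widehat{\xi}^{(a,0)}/\dd\zeta_b$ at $z = b$; by \eqref{eq:basis:xi:hat} combined with the expansion of $\omega_{0,2}$ in \eqref{eq:expns:coeff:y:B}, these are exactly $(2j-1)!!\,\phi_{(a,0),(b,j)}$ up to bookkeeping. Comparing with \eqref{eq:RT:Frobenius} specialised to $\lambda = 0$ (which isolates the $i = 0$ sector) and using the unitarity $R^{-1}(\lambda) = R^{\mr{t}}(-\lambda)$ to reconstruct the full $R(\lambda)$, we get $R' = R$. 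An analogous calculation for $T'(\lambda)^a$ extracts the coefficients $t_{(a,i)}$ of $\omega_{0,1}$ via \eqref{eq:ydx:expns}, matching the second line of \eqref{eq:RT:Frobenius} and giving $T' = T_R$.

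The main obstacle is the saddle-point bookkeeping in the previous paragraph: tracking the prefactors $\sqrt{\lambda/(2\pi)}$, the Gaussian moments $\int e^{-s^2/2} s^{2j}\,\dd s = \sqrt{2\pi}\,(2j-1)!!$, and the inversion from $R^{-1}$ to $R$ via unitarity. None of these steps is conceptually deep, but reconciling conventions between the integral formulas of \cref{thm:corresp} and the defining relations \eqref{eq:RT:Frobenius} of the local spectral curve is where one must be most careful. Everything else---the local truncation of \cref{thm:corresp}, the status of $\widehat{\xi}^{\alpha}$ as formal series, and the Givental action on the CohFT---is structural.
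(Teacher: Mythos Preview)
The paper does not prove this statement: it is quoted from \cite{DOSS14} and stated without proof. The sentence immediately preceding it---``the construction of the local spectral curve in \cref{sec:local:SP} is tailored to express the following''---is the paper's entire justification: the data $(t_\alpha,\phi_{\alpha,\beta})$ in \eqref{eq:RT:Frobenius} were \emph{defined} so that the correspondence of \cite{DOSS14} reads exactly as in the displayed formula. So there is nothing to compare against on the paper's side.

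Your sketch is a reasonable outline of how one would unwind the tautology, but two points deserve comment. First, you invoke a ``local version of \cref{thm:corresp} with the same proof''; since \cref{thm:corresp} is itself only cited (from \cite{Eyn14,DOSS14,DNOPS18}) and not proved here, this step is not self-contained within the paper---you are effectively appealing to the same external input as the paper does, just one theorem earlier. Second, your remark about ``using the unitarity $R^{-1}(\lambda) = R^{\mr{t}}(-\lambda)$ to reconstruct the full $R(\lambda)$'' is misplaced: the saddle-point expansion of the integral already yields $R'^{-1}(\lambda)$ as a full formal power series in $\lambda$, and specialising \eqref{eq:RT:Frobenius} to $\lambda = 0$ likewise gives all coefficients $\phi_{(a,0),(b,j)}$ in terms of the Dubrovin $R^{-1}(\mu)$ directly (cf.\ the computation $\tilde{\phi}_{(a,i),(b,0)} = (-1)^i[R_{i+1}]_{b,a}$ in the proof of \cref{lem:ssF:bound}). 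Unitarity is a constraint satisfied by both sides, not a reconstruction device you need here.
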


\subsection{The Painlev{\'e} I case}
A less trivial example of application of \cref{thm:corresp} is that of the PI spectral curve $\PIR{\mc{S}}(u)$ introduced in \eqref{eq:SC:PI}. The expression of the free energies of $\PIR{\mc{S}}(u)$ in terms of intersection numbers was first observed in \cite{IZ92} (its mathematical status is explained in \cite[Section~2.2]{GM07}) and the expression of the general $n$-point correlators in terms of intersection numbers can be found in \cite{CED18}. We revisit it here for completeness. Since the vector space underlying the topological field theory is 1-dimensional (again, generated by $\mb{1} = \mr{e}_0$), we can write without confusion $\Omega_{g,n} = \Omega_{g,n}(\mb{1}^{\otimes n})$. We also omit the ramification-point subscript from the differential forms: $\widehat{\xi}^i = \widehat{\xi}^{(0,i)}$.

Recall the definition of the \emph{multi-index $\kappa$-class}: for $\mu = (\mu_1,\dots,\mu_m)$ a $m$-tuple of non-negative integers
\begin{equation}
	\kappa_{\mu}
	\coloneqq
	p_{m,\ast}\bigl( \psi_{n+1}^{\mu_1+1} \cdots \psi_{n+m}^{\mu_m+1} \bigr)\,,
\end{equation}
where $p_m \colon \Mbar_{g,n+m} \to \Mbar_{g,n}$ is the morphism forgetting the last $m$ marked points and stabilising.

\begin{lemma}\label{prop:int:numbs:PI} 
	The cohomological field theory and the differential forms $(\widehat{\xi}^{i})_{i \geq 0}$ associated with $\mc{S}^{\textup{PI}}(u)$ are given by
	\begin{equation}\label{eq:CohFT:PI}
		\Omega_{g,n}
		=
		\frac{1}{(6u)^{2g - 2 + n}}
		\sum_{m = 0}^{3g-3+n} \frac{\kappa_{\bm{1}^m}}{m!} \, u^{-m}\,,
		\qquad\qquad
		\widehat{\xi}^i(z) = \frac{(2i+1)!!}{z^{2i+2}} \dd z\,.
	\end{equation}
	Besides, the amplitudes of the corresponding quantum Airy structure are given by
	\begin{equation}\label{eq:int:numbs:PI}
		\PIR{F}_{g;k_1,\dots,k_n}(u)
		=
		\frac{\braket{\tau_{k_1} \cdots \tau_{k_n} \tau_2^{3g-3+n-|k|}}_{g}}{6^{2g-2+n} \, (3g-3+n-|k|)!} \,
		u^{-(5g-5+2n-|k|)}
	\end{equation}
	for $|k|\coloneqq k_1 + \cdots + k_n \leq 3g-3+n$ (including the $n=0$ case), and zero otherwise.
\end{lemma}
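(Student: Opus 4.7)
The plan is to identify the Givental/Teleman data $(w, R, T)$ of $\mc{S}^{\textup{PI}}(u)$ from the definitions in \cref{subsec:Mgn}, and then apply \cref{thm:corresp} together with a push-pull identity for $\kappa$-classes to extract the amplitudes. I first compute the differential forms by unwinding \eqref{eq:basis:xi:hat}: with $\zeta_0(z) = z$, the ratio $\omega_{0,2}(w,\cdot)/\dd w|_{w=0}$ equals $\dd z/z^2$, so $\varphi^0(z) = -1/z$ modulo the additive constant killed by the outer $\dd$. Since $\dd x = 2z\,\dd z$, the operator $-2\,\dd/\dd x$ acts as $-z^{-1}\partial_z$, and an easy induction gives $(-2\,\dd/\dd x)^i\varphi^0(z) = -(2i-1)!!\,z^{-(2i+1)}$; applying $\dd$ produces $\widehat{\xi}^i(z) = (2i+1)!!\,z^{-(2i+2)}\,\dd z$.

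Next, from $-\tfrac{1}{2}(y(z) - y(-z))\,\dd x = (6uz^2 - 2z^4)\,\dd z$ one reads off $t_{(0,1)} = 6u$, so the underlying topological field theory is $w_{g,n}(\mr{e}_0^{\otimes n}) = (6u)^{-(2g-2+n)}$. Since $x(z) - x(0) = z^2$ and $\widehat{\xi}^{(0,0)} = \dd z/z^2$ coincide with the corresponding Airy data — whose associated CohFT is trivial by Kontsevich's theorem — the $R$-matrix must reduce to the identity; one checks this directly by interpreting the Laplace integral $\int_{\gamma_0} e^{-z^2/(2\lambda)}\,z^{-2}\,\dd z$ via analytic continuation of $\Gamma$ at $-1/2$, which yields $-\sqrt{2\pi/\lambda}$ and hence $R^{-1}(\lambda)^0_0 = 1$. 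For the translation, Gaussian moments $\int e^{-z^2/(2\lambda)}\,z^{2k}\,\dd z = (2k-1)!!\,\lambda^k\sqrt{2\pi\lambda}$ applied to $\omega_{0,1} = (2z^4 - 6uz^2)\,\dd z$ give $T(\lambda) = 6\lambda^2\,\mr{e}_0$ (the $6u\lambda$ in the definition of $T$ and the $-6u\lambda$ from the Gaussian cancel). With $R = \id$, the Givental action $\Omega = \hat{T}w$ inserts extra marked points with $T(\psi) = 6\psi^2\,\mr{e}_0$, yielding
\begin{equation*}
\Omega_{g,n}(\mr{e}_0^{\otimes n}) = \sum_{m \geq 0} \frac{6^m}{m!}\,(6u)^{-(2g-2+n+m)}\,p_{m,*}(\psi_{n+1}^2 \cdots \psi_{n+m}^2) = \frac{1}{(6u)^{2g-2+n}}\sum_{m} \frac{u^{-m}}{m!}\,\kappa_{\bm{1}^m},
\end{equation*}
which is \eqref{eq:CohFT:PI}; the sum truncates at $m = 3g-3+n$ by dimension.

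Inserting these formulas into \cref{thm:corresp}, the dimension constraint $|k| + m = 3g-3+n$ singles out a unique surviving term $m = 3g-3+n - |k|$. The remaining identity $\int_{\Mbar_{g,n}}\kappa_{\bm{1}^m}\prod\psi_i^{k_i} = \braket{\tau_{k_1}\cdots\tau_{k_n}\tau_2^m}_g$, which completes \eqref{eq:int:numbs:PI} after tracking the powers of $6$ and $u$, follows from the projection formula: one integrates $\psi_{n+1}^2 \cdots \psi_{n+m}^2 \cdot p_m^*(\prod\psi_i^{k_i})$ over $\Mbar_{g,n+m}$, and the boundary corrections in $p_m^*\psi_i = \psi_i - \sum_T D_{i,T}$ all vanish because on each rational-bubble divisor $D_{i,T}$ one has $\dim = |T| - 1 < 2|T| = \deg\prod_{j \in T}\psi_{n+j}^2$. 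The most delicate step is the identification $R = \id$: the Laplace integral defining $R^{-1}(\lambda)^0_0$ has a non-integrable singularity at the saddle point $z=0$ and must be handled by analytic continuation (equivalently, by integration by parts dropping the boundary contribution at the singularity); consistency with the Airy specialisation provides a convenient cross-check.
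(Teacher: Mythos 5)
Your computation of the Givental data $(w,R,T)$, the CohFT, and the differential forms is correct and follows essentially the same route as the paper's proof; your extra care with the regularised Laplace integral for $R^{-1}(\lambda)^0_0$ and with the vanishing of the comparison terms in $p_m^*\psi_i$ (which the paper subsumes under ``projection formula'') is welcome, not superfluous.

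The one genuine omission is the $n=0$ case, which the statement explicitly includes. \cref{thm:corresp} only produces the correlators for $n>0$, so the formula \eqref{eq:int:numbs:PI} for the free energies $\PIR{F}_g$ does not follow from your argument as written. The paper derives it separately from \cref{def:Fg} (equivalently \cref{lem:Fg} with $t_1 = 6u$, $t_2 = -2$), obtaining $\PIR{F}_g = \frac{1}{2g-2}\bigl(6u\,\PIR{F}_{g;1}(u) - 6\,\PIR{F}_{g;2}(u)\bigr)$, and then uses the dilaton equation $\braket{\tau_1\tau_2^{3g-3}}_g = (5g-5)\braket{\tau_2^{3g-3}}_g$ to collapse this to the claimed formula. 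You should add this short step to cover the full claim.
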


\begin{proof}
	The local coordinate $\zeta$ is actually the global coordinate $z$. The topological field theory is determined by the single constant $t_1 = 6u$, so it reads $w_{g,n} = (6u)^{-(2g-2+n)}$. Moreover, a simple computation shows that $\varphi(z) = - 1/z$, so $\widehat{\xi}^i$ is given by \eqref{eq:CohFT:PI} and equal to $\xi^i$.

	The $R$-matrix is the identity while the translation is simply a multiple of $\lambda^2$:
	\begin{equation}
	\begin{split}
		R^{-1}(\lambda)
		&=
		- \frac{1}{2 \sqrt{2\pi \lambda}} \int_{\R}
 			e^{-\frac{z^2}{2\lambda}} \dd z
 		= 1 \,,\\
		T(\lambda)
		&=
		6u \lambda + \frac{1}{\sqrt{2\pi \lambda}} \int_{\R}
 			e^{-\frac{z^2}{2\lambda}} \,
 			2z(z^3 - 3uz) \dd z
 		= 6 \lambda^2\,.
 	\end{split}
	\end{equation}
	As a consequence, the cohomological field theory (for $2g-2+n > 0$ and $n > 0$) is given by
	\begin{equation}
		\Omega_{g,n} 
		=
		\sum_{m \geq 0} \frac{1}{m!} p_{m,\ast} \bigl(
			w_{g,n + m} \cdot 6\psi_{n+1}^2 \cdots 6\psi_{n+m}^2
		\bigr)
		=
		 \frac{1}{(6u)^{2g - 2 + n}} \sum_{m = 0}^{3g-3+n} \frac{\kappa_{\bm{1}^m}}{m!} \, u^{-m}\,.
	\end{equation}
	We arrive to
	\begin{equation}
	\begin{split}
		\PIR{F}_{g;k_1,\dots,k_n}(u)
		&=
		\int_{\Mbar_{g,n}} \Omega_{g,n} \prod_{i=1}^n \psi_i^{k_i} \\
		&=
		\frac{1}{(6u)^{2g - 2 +n}} 
		\sum_{m = 0}^{3g-3+n} \frac{u^{-m}}{m!} \int_{\Mbar_{g,n+m}}
			\prod_{i=1}^n \psi_i^{k_i} \prod_{j=1}^{m} \psi_{n+j}^2 \,.
	\end{split}
	\end{equation}
	The second equality follows from the projection formula. Since the complex dimension of $\Mbar_{g,n+m}$ is $3g - 3 + n + m$ while each $\psi$-class has complex cohomological degree $1$, the only term contributing to the sum is $m = 3g-3+n-|k|$.

	As for the free energies, that is $n=0$, one can directly compute them from \cref{def:Fg} (or equivalently from \cref{lem:Fg} with $t_1 = 6u$ and $t_2 = -2$):
	\begin{equation}
		\PIR{F}_g
		=
		\frac{1}{2g - 2}\left( 6u \, \PIR{F}_{g;1}(u) - 6 \, \PIR{F}_{g;2}(u) \right)
		=
		\frac{3}{g-1}
		\frac{
			\braket{\tau_1 \tau_2^{3g - 3}}_g - (3g-3) \braket{\tau_2^{3g - 3}}_g
		}{6^{2g-1} (3g-3)!} u^{-(5g-5)} \,,
	\end{equation}
	where we inserted \eqref{eq:int:numbs:PI} to get the last line. The claim then follows from the dilaton equation: $\braket{\tau_1 \tau_2^{3g - 3}}_g = (5g-5) \braket{\tau_2^{3g - 3}}_g$.
\end{proof}

\subsection{Upper bound on Painlev{\'e} I amplitudes}
\label{sec:upPI}
We now establish an upper bound on the PI amplitudes by exploiting their intersection-theoretic expression and the known results on the asymptotic behaviour of $\psi$-class intersections.

\begin{proposition}\label{lemma:bound:PI} 
	There exists $S(u),P(u) > 0$ such that for any $2g - 2 + n > 0$ and $k_1,\ldots,k_n \geq 0$ we have
	\begin{equation}\label{eq:bound:PI}
		\PIR{F}_{g;k_1,\dots,k_n}(u)
		\prod_{i=1}^n (2k_i + 1)!!
		\leq
		S(u) \,
		\left( \frac{2u}{27} \right)^g \,
		P(u)^{3g - 3 + n} \,
		\frac{(3g - 3 + n)!}{g!} \,.
	\end{equation}
	One can take $P(u)$ to be the function
	\begin{equation}
		P(u)
		= 
		\begin{cases}
			\frac{2}{5u^2} & \text{if } u < \tfrac{2}{5} \,, \\
			\frac{5}{10u - 2} & \text{if } u \geq \tfrac{2}{5} \,.
		\end{cases}
	\end{equation}
\end{proposition}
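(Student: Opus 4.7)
The plan is to combine the explicit intersection-theoretic representation of the Painlev\'e~I amplitudes given by \cref{prop:int:numbs:PI} with the uniform upper bound on $\psi$-class intersection numbers of Aggarwal cited in the introduction. Setting $N = 3g - 3 + n$ and $m = N - |k|$, the starting point is the identity
\begin{equation*}
	\PIR{F}_{g;k_1,\dots,k_n}(u)\prod_{i=1}^n (2k_i + 1)!!
	=
	\frac{\langle \tau_{k_1}\cdots\tau_{k_n}\tau_2^m\rangle_g \prod_i (2k_i + 1)!!}{6^{2g - 2 + n}\, m!\, u^{2g - 2 + n + m}}\,.
\end{equation*}
Aggarwal's uniform bound applied to $\langle \tau_{\bm{k}}\tau_2^m\rangle_g$, combined with the fact that each $\tau_2$ insertion contributes a factor $(2 \cdot 2 + 1)!! = 15$ to the product of odd double factorials, produces an inequality of the form
\begin{equation*}
	\langle \tau_{\bm{k}}\tau_2^m\rangle_g \prod_i (2k_i + 1)!!
	\leq
	\frac{\kappa}{24^g g!}\cdot\frac{(2g - 3 + n + m)!}{15^m}
\end{equation*}
for some universal constant $\kappa > 0$.

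Next, I would rewrite the combinatorial piece as $(2g - 3 + n + m)!/m! = (2g - 3 + n)!\binom{2g - 3 + n + m}{m}$ and use Stirling to trade $(2g - 3 + n)!$ against $N!/g!$; the resulting ratio contributes the factor $(4/27)^g$ (up to polynomial corrections), which, combined with $24^{-g} \cdot 6^{-(2g - 2 + n)}$ and a $u^g$ pulled out of $u^{-(2g - 2 + n + m)}$, yields the target $(2u/27)^g$. What remains is an $m$-dependent factor $\binom{2g - 3 + n + m}{m}(15 u)^{-m}$ that must be controlled uniformly over $0 \leq m \leq N$ and absorbed into $P(u)^{N}$. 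The generating-function identity $\sum_{m \geq 0}\binom{N' + m}{m} y^m = (1 - y)^{-N' - 1}$ suggests that for $15u$ sufficiently large one can dominate this factor by $(15u/(15u - 1))^{2g - 2 + n}$, which grows at most like the $(3g - 3 + n)$-th power of a function of the form $5/(10u - 2)$; for smaller $u$ this estimate fails and one must instead bound $\binom{2g - 3 + n + m}{m}$ by its value at $m = N$, leading to a factor of order $u^{-2N}$ and hence the choice $P(u) = 2/(5u^2)$. Gluing the two regimes at the crossover $u = 2/5$ gives the piecewise $P(u)$ of the statement; all $u$-dependent constants and mild singularities (in particular the overall power-of-$u$ mismatch between the two regimes) get absorbed into the prefactor $S(u)$.

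The main obstacle will be pinning down the explicit crossover $u = 2/5$ and the exact numerical constants in $P(u)$; everything else (Aggarwal's inequality, Stirling, the generating-function identity) is either cited or standard. Concretely, I expect the tightest route is to carry out the elementary but delicate optimization of $m \mapsto \binom{2g - 3 + n + m}{m}(15u)^{-m}$ (equivalently, a saddle-point/Lagrange analysis) so that one obtains a uniform bound matching the piecewise form above. Once that is done, the rest of the proof reduces to collecting constants into $S(u)$.
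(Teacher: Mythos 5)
Your strategy is the same as the paper's: insert Aggarwal's uniform bound \eqref{eq:Agg:upper} into the intersection-theoretic formula \eqref{eq:int:numbs:PI} and then optimise over the number $m = 3g-3+n-|k|$ of $\tau_2$-insertions, with two regimes separated by a crossover in $u$. The paper packages the $m$-dependence into $c_{D,k}(u) = (10u)^k\,\frac{(4D+1-2k)!!}{(D-k)!}$ (with $D = 3g-3+n$ and $k = D-m$) and locates the maximiser via the ratio $c_{D,k}/c_{D,k-1} = 10u\,\frac{D-k}{4D+1-2k}$, which is exactly the ``elementary but delicate optimization'' you defer to the end; your alternative of dominating the maximum over $m$ by the geometric series $\sum_{m}\binom{D+m}{m}y^m = (1-y)^{-D-1}$ is also viable and, once the bookkeeping is corrected, reproduces $p(u) = (10u)^2/(5u-1)$ for $u > 1/5$ and hence $P(u) = p(u)/(40u^2) = 5/(10u-2)$.

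That said, the intermediate quantities in your sketch are off in ways that would not yield the stated $P(u)$. First, Aggarwal's bound applied to $n+m$ points produces $(2(3g-3+n+m)+1)!!\,(3/2)^{n+m-1}/(24^g\,g!)$, and the odd double factorial is $\asymp 2^{3g-3+n+m}\,(3g-3+n+m)!$ up to polynomial factors; your replacement by $\kappa\,(2g-3+n+m)!$ with a \emph{universal} constant $\kappa$ is smaller by a super-exponential factor, so the inequality you write does not follow from \eqref{eq:Agg:upper}. Second, as a consequence the factor to optimise is $\binom{3g-3+n+m}{m}(5u)^{-m}$ --- the $2^m$ from the double factorial and the $(3/2)^m$ from Aggarwal's bound turn $15^{-m}u^{-m}$ into $(5u)^{-m}$ --- not $\binom{2g-3+n+m}{m}(15u)^{-m}$; with your indices the ratio of consecutive terms at $m=D$ is about $1/(9u)$ rather than $2/(5u)$, so the crossover would land near $u=1/9$ instead of $u=2/5$ and a different $P$ would emerge. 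Third, the literal claim that $(15u/(15u-1))^{2g-2+n}$ is dominated by the $(3g-3+n)$-th power of $5/(10u-2)$ is false for large $u$ (the former tends to $1$, the latter to $0$); the comparison only works after the prefactor $(40u^2)^{-D}$ is split off before the optimisation, as in \eqref{eq:bound:master}. Each of these is repairable, but each one changes the value of $P(u)$, so the optimisation must be redone with the corrected factors to recover the statement.
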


As the proof will show, for $|k|$ bounded independently of $g$ and $n$ one can get a better upper bound. \cref{lemma:bound:PI} is interesting for large $|k|$, and a variant of its proof will give the following useful bound.
 
\begin{proposition}\label{lem:bound:sums:PI}
	For any $v > 0$, $2g - 2 + n > 0$, and $k_1,\ldots,k_n \geq 0$, we have
	\begin{equation}
		\sum_{k_1,\dots,k_n \geq 0}
			v^{-|k|} \,
			\PIR{F}_{g;k_1,\dots,k_n}(u)
			\prod_{i=1}^n (2k_i + 1)!!
		\leq
		\frac{S(\tfrac{u}{v})}{v} \,
		\left( \frac{2u}{27} \right)^g \,
		\left( \frac{P(\frac{u}{v})}{v^2} \right)^{3g - 3 + n} \,
		\frac{(3g - 3 + 2n)!}{g! \, n!} \,.
	\end{equation}
\end{proposition}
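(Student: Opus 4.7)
The strategy is to exploit the explicit homogeneity of the Painlev\'e~I amplitudes in the parameter $u$ to convert the $v^{-|k|}$-weighted sum into an unweighted sum evaluated at the rescaled parameter $u/v$, and then apply \cref{lemma:bound:PI} term-by-term.

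First, from the formula \eqref{eq:int:numbs:PI}, one reads off that $\PIR{F}_{g;k_1,\dots,k_n}(u)$ factorises as a $u$-independent constant times $u^{-(5g-5+2n-|k|)}$. This yields the scaling identity
\begin{equation*}
    v^{-|k|} \, \PIR{F}_{g;k_1,\dots,k_n}(u)
    = v^{-(5g-5+2n)} \, \PIR{F}_{g;k_1,\dots,k_n}(u/v),
\end{equation*}
so the sum in the statement becomes $v^{-(5g-5+2n)} \sum_{k} \PIR{F}_{g;k_1,\dots,k_n}(u/v) \prod_{i} (2k_i+1)!!$.

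Next, I would apply \cref{lemma:bound:PI} at parameter $u/v$ to every summand. The resulting upper bound is independent of the indices $k$, so the sum reduces to counting the number of tuples $(k_1,\dots,k_n) \in \Z_{\geq 0}^{n}$ with $|k| \leq 3g-3+n$. By the hockey-stick identity this count equals $\binom{3g-3+2n}{n}$; combined with the factor $(3g-3+n)!/g!$ coming from \cref{lemma:bound:PI}, it produces precisely the factorial ratio $(3g-3+2n)!/(g!\,n!)$ appearing on the right-hand side of the claim. Finally, I would collect the powers of $v$: the prefactor $v^{-(5g-5+2n)}$ together with the $v^{-g}$ hidden inside $(2(u/v)/27)^{g}$ produces $v^{-(6g-5+2n)} = v^{-1} \cdot v^{-2(3g-3+n)}$, which matches the factors $1/v$ and $(1/v^{2})^{3g-3+n}$ in the target bound.

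There is no substantive obstacle: the argument is a rescaling observation combined with the already-established pointwise bound and an elementary binomial count. The only bookkeeping to watch is the identification of powers of $v$ on the two sides, which works out as indicated. Note that the same strategy would immediately extend to weighted sums of the form $\sum_{k} (\prod_{i} v_{i}^{-k_{i}}) \PIR{F}_{g;k}(u)\prod_{i}(2k_{i}+1)!!$ with $n$ independent weights $v_{i}$, at the price of replacing $u/v$ by the smallest $u/v_{i}$ (by monotonicity of $\PIR{F}_{g;k}$ in $u$), though only the symmetric version $v_{1}=\cdots=v_{n}=v$ is needed for the applications in \cref{sec:proof}.
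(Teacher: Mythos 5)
Your proof is correct and follows essentially the same route as the paper: a pointwise bound at the rescaled parameter $u/v$, multiplied by the count $\binom{3g-3+2n}{n}$ of non-vanishing terms. The only (cosmetic) difference is that you obtain the rescaled pointwise bound by invoking the homogeneity of $\PIR{F}_{g;k_1,\dots,k_n}(u)$ in $u$ and applying \cref{lemma:bound:PI} as a black box, whereas the paper reaches the same inequality by adapting the internals of that proof (via $v^{-k}c_{D,k}(u)=c_{D,k}(u/v)$); your version makes the paper's ``easily adapted'' step precise, and your bookkeeping of the powers of $v$, including the prefactor $1/v$, is accurate.
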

 
\begin{proof}[Proof of \cref{lemma:bound:PI}]
	Aggarwal established a uniform upper bound for intersection indices of $\psi$-classes, see \cite[Proposition~1.2]{Agg21}:
	\begin{equation}\label{eq:Agg:upper}
		\braket{ \tau_{d_1} \cdots \tau_{d_n} }_g \,
		\prod_{i=1}^n (2d_i + 1)!!
		\le
		\frac{(2(3g-3+n)+1)!!}{24^g \, g!}
		\left( \frac{3}{2} \right)^{n-1}\,.
	\end{equation}
	We use it in \eqref{eq:int:numbs:PI} to get
	\begin{equation}\label{eq:bound:master}
	\begin{split}
		&\PIR{F}_{g;k_1,\dots,k_n}(u)
		\prod_{i=1}^n (2k_i+1)!!
		=
		\frac{
			\braket{\tau_{k_1} \cdots \tau_{k_n} \tau_2^{3g-3+n-|k|}}_g
		}{
			6^{2g-2+n} (3g-3+n-|k|)!
		}
		u^{-(5g-5+2n-|k|)}
		\prod_{i=1}^n (2k_i+1)!! \\
		& \quad \leq
		\frac{15^{-(3g-3+n-|k|)}}{6^{2g-2+n} (3g-3+n-|k|)!}
		\frac{(2(6g-6+2n-|k|)+1)!!}{24^g \, g!}
		\left( \frac{3}{2} \right)^{3g-4+2n-|k|}
		u^{-(5g-5+2n-|k|)} \\
		& \quad =
		\frac{3}{8u}
		\frac{(\frac{2}{27} u)^g}{(40 \, u^2)^{D_{g,n}}}
		\frac{c_{D_{g,n},|k|}(u)}{g!} \,,
	\end{split}
	\end{equation}
	where we have set
	\begin{equation}
		c_{D,k}(u) \coloneqq (10u)^k \frac{(4D+1 - 2k)!!}{(D - k)!}\,.
	\end{equation}
	The main reason behind the rewriting of the above bound as in the last line of \eqref{eq:bound:master} is to have a bound as a function of $(g,D_{g,n})$ rather than $(g,n)$. This will make a further analysis easier. We now look for the maximum of the expression $c_{D,k}(u)$, seen as a function of $k \in \set{0,1,\ldots,D}$. To this end, consider the ratio
	\begin{equation}
		r_{D,k}(u)
		\coloneqq
		\frac{c_{D,k}(u)}{c_{D,k-1}(u)}
		=
		10u \frac{D - k}{4D + 1 - 2k} \,,
	\end{equation}
	which is a decreasing function of $k$ in the real segment $[0,D - 1]$ with maximum at $r_{D,0}(u) = 10u \frac{D}{4D+1} \leq \frac{5}{2}u$. We consider two separate cases.
	\begin{itemize}
		\item
		If $u \leq \frac{2}{5}$, then $r_{D,k}(u) \le 1$. We deduce that $k \mapsto c_{D,k}(u)$ is a decreasing sequence of, therefore
		\begin{equation}
			c_{D,k}(u) \leq c_{D,0}(u) = \frac{(4D+1)!!}{D!} \le 16^{D} \cdot D! \,.
		\end{equation}
		The last inequality follow from the fact that $\frac{(4D+1)!!}{16^D (D!)^2}$ is a decreasing function of $D \ge 0$, hence it is bounded by its value at $D = 0$.

		\item
		If $u > \frac{2}{5}$, then $k \mapsto r_{D,k}(u)$ equals one at $k = D \cdot \frac{5u-2}{5u-1} - \frac{1}{10u - 2} = D \cdot \lambda - \epsilon$, where $\lambda = \lambda(u) \coloneqq \frac{5u-2}{5u-1} \in (0,1)$ and $\epsilon = \epsilon(u) \coloneqq \frac{1}{10u - 2} \in (0,\frac{1}{2})$. We deduce that
		\begin{equation}
			c_{D,k}(u)
			\leq
			c_{D,D \lambda - \epsilon}(u)
			=
			(10u)^{D \lambda - \epsilon}
			\frac{
				(2D(2-\lambda) + 1 + 2\epsilon)!!
			}{
				(D(1-\lambda) + \epsilon)!
			} \,,
		\end{equation}
		where the right-hand side is intended after extension to the reals via the $\Gamma$-function. Similarly as before, this can be bounded by
		\begin{equation}\label{eq:bound:cDk}
			c_{D,k}(u)
			\le
			s(u) \cdot p(u)^{D} \cdot D! 
		\end{equation}
		for some irrelevant $s(u) > 0$ independent of $D$ and 
		\begin{equation}
			\forall u \in (\tfrac{2}{5},+\infty)
			\qquad
			p(u)
			\coloneqq
			(10u)^{\lambda(u)} \frac{(4 - 2\lambda(u))^{2 - \lambda(u)}}{(1 - \lambda(u))^{1 - \lambda(u)}}
			=
			\frac{(10 u)^2}{5u-1}\,.
		\end{equation}
	\end{itemize}
	Note that $p(\frac{2}{5}) = 16$. Therefore, we can define $p(u) = 16$ for $u < \frac{2}{5}$, making the inequality \eqref{eq:bound:cDk} valid for any $u > 0$ for an appropriate choice of $s(u) > 0$. Together with \eqref{eq:bound:master}, it implies the desired bound with $P(u) \coloneqq \frac{p(u)}{40u^2}$ and $S(u) \coloneqq \frac{3}{8u} s(u)$.
\end{proof}

\begin{proof}[Proof of \cref{lem:bound:sums:PI}]
	Considering again \eqref{eq:bound:master}, the proof of \cref{lemma:bound:PI} can be easily adapted to give the bound
	\begin{equation}
		v^{-|k|} \,
		\PIR{F}_{g;k_1,\dots,k_n}(u)
		\prod_{i = 1}^n (2k_i + 1)!!
		\leq
		S(\tfrac{u}{v})
		\left( \frac{2u}{27} \right)^g
		\left( \frac{P(\frac{u}{v})}{v^2} \right)^{3g - 3 + n}
		\frac{(3g - 3 + n)!}{g!} \,,
	\end{equation}
	where the exponential growth rate appears through $\frac{1}{v^2} \, P(\frac{u}{v}) = \frac{1}{40u^2} \, p(\frac{u}{v})$. Moreover, we know that the coefficients $\PIR{F}_{g;k_1,\dots,k_n}(u)$ vanish for $|k| > D_{g,n}$ for cohomology degree reasons, and there are $\binom{D_{g,n} + n}{D_{g,n}}$ tuples $(k_1,\dots,k_n) \in \Z_{\geq 0}$ such that $|k| \leq D_{g,n}$. Hence 
	\begin{equation}
		\sum_{k_1,\dots,k_n \geq 0}
			v^{-|k|} \,
			\PIR{F}_{g;k_1,\dots,k_n}(u)
			\prod_{i = 1}^n (2k_i + 1)!!
		\leq
		S(\tfrac{u}{v})
		\left( \frac{2u}{27} \right)^g
		\left( \frac{P(\frac{u}{v})}{v^2} \right)^{3g - 3 + n}
		\frac{(3g - 3 + 2n)!}{g! \, n!} \,.
	\end{equation}
\end{proof}

\subsection{Large genus asymptotic of Painlev{\'e} I free energies}
\label{sec:PIfree}

We conclude this section with a comment on the free energies. An asymptotic equivalent for the free energies of PI has been established by Kapaev in \cite{Kap04} using Riemann--Hilbert methods (see \cite{BSSV23} for the full transseries structure):
\begin{equation}\label{eq:Kapaev}
	\PIR{F}_{g}
	\sim
	\frac{1}{\sqrt{30g}}
	\biggl( \frac{5}{24\sqrt{3}} \biggr)^{2g - 2}
	\frac{\Gamma(2g-2)}{(2\pi)^{3/2} \, u^{5g - 5}}
\end{equation}
as $g \rightarrow \infty$. More precisely, Kapaev computes the asymptotics of the formal (also known as $0$-parameter) solution of the PI equation:
\begin{equation}
	\hbar^2 \del_s^2 U = 6 U^2 + s \,,
	\qquad\qquad
	U(s;\hbar)
	=
	\sum_{g \geq 0} U_g \, \hbar^{2g} \Bigl( -\frac{s}{6} \Bigr)^{\frac{1-5g}{2}} \,.
\end{equation}
It was then shown in \cite{IS16} (see \cite{IMS18} for the analogous result for all Painlev{\'e} equations) that the coefficients $U_g$, determined from the initial condition $U_0 = -1$, are related to the free energies as
\begin{equation}
	\PIR{F}_{g}
	=
	\frac{144 \cdot 2^{2g-2}}{(5g-3)(5g-5)u^{5g - 5}} \, U_g \,.
\end{equation}
This identity relating the intersection number in the left-hand side and the coefficients of the PI solution is in fact known since \cite{IZ92}. The asymptotic equivalence \eqref{eq:Kapaev} is remarkably accurate, see \cref{fig:Fg:PI}, and gives access to the large genus asymptotics of $\braket{\tau_2^{3g - 3}}_g$.

\begin{proposition}\label{prop:asym:tau:2}
	As $g \to \infty$:
	\begin{equation}\label{eq:asym:tau:2}
		\braket{\tau_2^{3g - 3}}_g
		\sim
		\frac{1}{4\pi} \biggl( \frac{3}{10 \sqrt{5}} \biggr)^{2g - 2} \Gamma(5g - 5) \,.
	\end{equation}
\end{proposition}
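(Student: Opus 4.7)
The plan is straightforward: invert the intersection-theoretic identity for $\PIR{F}_{g}$ derived in the proof of \cref{prop:int:numbs:PI} and feed in Kapaev's asymptotic \eqref{eq:Kapaev}. Indeed, that proof shows (via the dilaton equation) the clean closed form
\[
	\PIR{F}_{g} = \frac{\braket{\tau_2^{3g-3}}_g}{6^{2g-2}\,(3g-3)!}\,u^{-(5g-5)},
\]
so that $\braket{\tau_2^{3g-3}}_g = 6^{2g-2}(3g-3)!\,u^{5g-5}\,\PIR{F}_g$; in particular the dependence on $u$ is bound to cancel. Substituting \eqref{eq:Kapaev} and using $6 \cdot \tfrac{5}{24\sqrt{3}} = \tfrac{5}{4\sqrt{3}}$ gives
\[
	\braket{\tau_2^{3g-3}}_g \sim \frac{1}{\sqrt{30g}\,(2\pi)^{3/2}}\,\left(\frac{5}{4\sqrt{3}}\right)^{2g-2}\,(3g-3)!\,\Gamma(2g-2).
\]

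The remaining work is to rewrite $(3g-3)!\,\Gamma(2g-2)$ in terms of $\Gamma(5g-5)$ via Stirling. Setting $n = g - 1$ and applying $\Gamma(z) \sim \sqrt{2\pi}\,z^{z-1/2}e^{-z}$ to each of the three Gamma factors, after leading-order simplifications one obtains
\[
	\frac{\Gamma(3n+1)\,\Gamma(2n)}{\Gamma(5n)}
	\sim
	\sqrt{15\pi n}\,\left(\frac{108}{3125}\right)^{n},
\]
where $\tfrac{108}{3125} = \tfrac{3^{3}\,2^{2}}{5^{5}}$ reflects the balance between the three exponential rates. This is the main computational step; the only subtlety is to track the $\sqrt{n}$ corrections coming from the $z^{-1/2}$ prefactors in Stirling's formula, which combine cleanly to yield a rational constant.

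Plugging this equivalence into the earlier display, the polynomial prefactor simplifies to
\[
	\frac{\sqrt{15\pi(g-1)}}{\sqrt{30g}\,(2\pi)^{3/2}} \;\longrightarrow\; \frac{\sqrt{\pi/2}}{(2\pi)^{3/2}} = \frac{1}{4\pi},
\]
while the exponential part collapses through
\[
	\left(\frac{5}{4\sqrt{3}}\right)^{2(g-1)}\!\left(\frac{108}{3125}\right)^{g-1}
	=
	\left(\frac{25}{48}\cdot\frac{108}{3125}\right)^{g-1}
	=
	\left(\frac{9}{500}\right)^{g-1}
	=
	\left(\frac{3}{10\sqrt{5}}\right)^{2g-2}.
\]
Combining the two displays yields exactly \eqref{eq:asym:tau:2}. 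There is no conceptual obstacle to this strategy — the entire proof rests on the nontrivial input of Kapaev's result \eqref{eq:Kapaev}, everything else being arithmetic bookkeeping with Stirling's formula.
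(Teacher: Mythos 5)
Your proposal is correct and follows exactly the paper's route: specialise \eqref{eq:int:numbs:PI} to $n=0$, invert to express $\braket{\tau_2^{3g-3}}_g$ in terms of $\PIR{F}_g$ (the $u$-dependence cancelling as it must), insert Kapaev's asymptotic \eqref{eq:Kapaev}, and clean up with Stirling's formula. All the arithmetic checks out, including the prefactor $\tfrac{1}{4\pi}$ and the exponential rate $\tfrac{9}{500} = \bigl(\tfrac{3}{10\sqrt{5}}\bigr)^2$.
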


\begin{proof}
	This follows from the specialisation of \eqref{eq:int:numbs:PI} to $n = 0$, in conjunction with Kapaev's asymptotic \eqref{eq:Kapaev} and Stirling's formula.
\end{proof}

It is worth noticing that these are intersections of $\psi$-classes in $\Mbar_{g,3g - 3}$, hence fall in the regime $n = \mr{O}(g)$ which is not covered by \cite{Agg21}.

We also remark that Kapaev's asymptotic is not sufficient to obtain (uniform) upper bounds for the amplitudes $\PIR{F}_{g;k_1,\dots,k_n}$ in the shape of \cref{lemma:bound:PI} for $n > 0$, and we do not know if the Riemann--Hilbert method can be pushed further to provide them. In comparison, the starting point of the proof of \cref{lemma:bound:PI} was the uniform upper bound \cite[Proposition 1.2]{Agg21} for $\psi$-class intersections, which would give here
\begin{equation}
	\braket{\tau_2^{3g - 3}}_g
	\leq
	\frac{(12g - 11)!!}{15^{3g - 3} \, 24^g \, g!} \left(\frac{3}{2}\right)^{3g - 4}
	\sim
	\frac{1}{3\pi} \sqrt{\frac{5}{2}}
	\biggl( \frac{72\sqrt{3}}{625} \biggr)^{2g - 2}
	\Gamma(5g - 5) \,.
\end{equation}
This is weaker than the more precise asymptotics from \cref{prop:asym:tau:2}, namely the exponential growth rate of the upper bound is larger by a factor of about $1.5$ (see again \cref{fig:Fg:PI}):
\begin{equation}
	0.1342 \simeq \frac{3}{10 \sqrt{5}}
	<
	\frac{72\sqrt{3}}{625} \simeq 0.1995 \,.
\end{equation}
One could expect that having uniform asymptotics of $\PIR{F}_{g;k_1,\dots,k_n}(u)$ could lead to an improvement of the upper bound of \cref{lemma:bound:PI}. 

\begin{figure}
\centering
	\begin{tikzpicture}
	\begin{axis}[
		xlabel = {$g$},
		axis y line = left, 
		axis x line = bottom,
		xmin = 0,
		xmax = 60,
		x=.9*\textwidth/60,
		yticklabel style = {
			/pgf/number format/sci
		},
		ymode=log,
		yminorticks=true,
	]
	\addplot[thick, color=ForestGreen, samples=100, smooth, domain = 2:60] {(2*pi*sqrt(60*pi*x))^(-1)*(24*sqrt(3)/5)^(2-2*x)*sqrt(2*pi)*(2*x-2)^((2*x-2)-0.5)*exp(-(2*x-2))*exp(1/(12*(2*x-2)))};
	\addplot[thick, color=BrickRed, samples=100, smooth, domain = 2:60] {(375*sqrt(2))/(pi)*13500^(-x)*(6*x - 4.5)^(6*x - 5)*(3*x - 2)^(5/2 - 3*x)*(x + 1)^(0.5 - x)*exp(3.5 - 2*x)*exp((1/12)*(1/(6*x - 4.5) - 1/(3*x - 2) - 1/(x + 1)))};
	\addplot[NavyBlue, mark = *, mark size=1.5, only marks] table[col sep=&,row sep=\\,y expr={\thisrow{y}}] {
		x & y \\
		2 & 7/51840 \\
		3 & 245/26873856 \\
		4 & 259553/116095057920 \\
		5 & 1337455/1114512556032 \\
		6 & 245229441961/216661240892620800 \\
		7 & 1551927036905/935976560656121856 \\
		8 & 1670346976424371/479219999055934390272 \\
		9 & 6952919330793301075/698702758623552341016576 \\
		10 & 2684145230737949205740569/72441502014089906716598599680 \\
		11 & 947682426779747304018947/5433112651056743003744894976 \\
		12 & 200853500428633643281495272377959/198283402120870216349151290311311360 \\
		13 & 28852511263185177273720650811245/4055796861563254425323549120004096 \\
		14 & 577561061192136191896260223014016860617/9718338207803407723783275459389014671360 \\
		15 & 27554979246667911396283341465592563685/47306814593273799616973876935727775744 \\
		16 & 664879209267280088258320326407460951970252207777/100449700598387252107015199963059318561544601600 \\
		17 & 602178075005491929915968055179745913266003145/6975673652665781396320499997434674900107264 \\
		18 & 3628033794193483714980314311177712581409817845924189587/2832777988587094961179851373550237320381376847937536 \\
		19 & 2544247007844006623043659150118618828913855769229070325/118629087412683376993342810852373794034812943597568 \\
		20 & 1585842837104238847065743794160714657989249006479200099402985847/3939067775489993015948761989024058942705187209615843000320 \\
		21 & 62132046751952795548978891760598012295954434281106677548954379/7379678269768207516001869577504468979317643595317510144 \\
		22 & 45285469252658164484620398770189447292000039638771344549721034948339611/232144605642014038928975291823039781933709323518109605502648320 \\
		23 & 1397874544138988959990598794292272612418875753631024367913249793242555/280545849103508176928327073858409892717739538919590465634304 \\
		24 & 1113389877826091729058739602162704223827917040740354162349117207209526956246858543/7971570898533681952198919614457667632501768657772838411188028173189120 \\
		25 & 1466344231092694177966473925191826186430706885389756026337305208217415364753305/342697670463882174617926550171466902085090046110521352383147016192 \\
		26 & 153858244369778319928269153542773888563089186785904113245736039094299312567145266065542629/1078033657513007145396500850817788234962326140571158112789532296951627776000 \\
		27 & 201303432388384651735306104152327670879081245711228788619430954166012480498007609719465/38972949875834536426249078991699901972724780403872930278421011269419008 \\
		28 & 316293844253814138772959537406982888515187484510488725671669627690796135538610297280453092949879585/1564429748858525096227257845512755056303072778844233287322233059803873326162509824 \\
		29 & 6366066058936504574259862042702508144636637905447350024856638954487321622887893152586655065075/745978235654127643693570063358666923667465581342808383618465928937851584512 \\
		30 & 728848650930897248653263106462560016137335420993807790969402670543202166738785522406456826101586371314671661/1881520885795161830931368283704044353155029994242613243863001834229400802895736615206912 \\
		31 & 1969393537533302732270536202757379534834258722554012729548990952946311478460795115647866683946701121449/104413081688036938032501614628185891971887822489390203838309439141573210580975616 \\
		32 & 1311621304714296993964410032087678687191995568491305932494272037103829354332561663052348160741933542882633022396967439/1334589494867092686909769307208274730943673115342410675825698634010489938222594017148520103936 \\
		33 & 272756885263608689059540480795384029537805275660981134965714816475857018022521608206107698889299981881540619255355/4988412660978740613984660820062864359260567803970549094686889157211381465974782300258304 \\
		34 & 5156449637639915691762324712696518971297790929054596221214003516877700935771305438000160760307145224815419597020728457558234049/1590811390910809951908582576235823824217778909437970512681562774547341532184160335469433301407629312 \\
		35 & 2495416436072125614542075379119450756343511425347281863243746656299071704001224274505291080917648273050375692839677563405/12211634194075957023034449687513891951469869984119904183793504656853461828706267071032328192 \\
		36 & 7381611037674253990617897821643886788678881672578039855737277013595524691816039019130684996639353432313614216036555621257723029946768283/539788336395161811754522390377153427946943220354457834470334493158405574731521342630268218985268023132160 \\
		37 & 21060260074305446696105282096257016584826535939929999296552260366900565305090101098268837925319673749332083198712306093917282744765/21717377130194558080095208496128235093604724034026803867070423536415324386583382265295448769036288 \\
		38 & 36155809141729633870299018497726571154027468932443959035462771198321329455064245577612826842998907704954584265642343242960645502061184776192659245/496971461890959344587254867136613014596706969006789671907612408625891791094897499298687135426206187108480057344 \\
		39 & 73995587211080952573923974905789691253862274544203262815098406298292455408839667822284430374483293690775781426258408651812566933319597198775/12834448666893859155942345936207879862678145390733024442145810619856582744685500917615243131729989337088 \\
		40 & 125631527115307710275561173174223696593294381153121140221449815317229069126567225548448208583159041400122341672759346718232192338731508046011160417958107009/260693714021880898900599801128333507906863986592648757727810022900795585962556532285088960757436528652568382461181952 \\
		41 & 1188618265142244911198937837085219950919249838760404034868282725215573919854558009927647338048860431618830886563156955541413602854298423207989298197/28014223953338006679749524771916483877104633981288420508666897790878536820399847055544177008374006199353344 \\
		42 & 1205906113185622104993008313915793798443847036734915421882387598346472919269192867197579609033628552667636722317166660865656929619843956964530124768385830475855085947/306879977093907620912838004417959787040620610767465500346151927118055552741803409465856943185881419023620583714747760771072 \\
		43 & 2248886074484226890124804589336342705839725450571693532159409727938688207719960219630774881374753602795844477393604606445584599383133121219789179742685221285/5881642347451221178426772224913409622965872113639466462635632524990530562516588689005611051262139349566633279488 \\
		44 & 347404832913379853062916794559956357169043994361484950208842795659159228170041588436931123370295216468512681776331799757170542549766231439010681798382869478114450059524093789/8898501392384668044607810763033008396591716629254929380573843041006829186703204111980891436676578715758479552132923947837554688 \\
		45 & 67266784640257733100489206808150450552116276966572312498776316950180997518613267367294040840101800641419715312300210184141254562256547778878140767178823980412218128015/16098949114610804950973196448966188976320282787592492907136046837042880710253405763289086300184267244945009414217138176 \\
		46 & 43941400971297989525896243512423088854814477056692314031134238702597181143455519865063856143367410915469562139993694404868205709535639566498367145858651693471476574882882811844674174617/93847386859565650759331116843996775497289396136044530603491100171028049559034790627725632019821894832251081062500678167074488576901120 \\
		47 & 14364983416378548509844900970233798153211419188479987174734631269051339746114052475860859983145028787749254633544550173073251016957806127783790076729568850362552995280694053815/261751350113628476715604930773154996912448175999576561041406154973767739024309192177520591510486893474902276701264248045568 \\
		48 & 5901794075097313625376199264345953588941691835513057923424452818323401035218299817605892347641385391018248895029760895426502133279639114377468300000427809880239535462207829152479845350102286881435/878043959560682944366445105541355252084360964726591147231208943659672661208052812807581239950037697232191157710700744987572415722936476893184 \\
		49 & 350449907611931976679367555943758694506832383893258702763407361065393085026671921285943486898183663527570908554491830659766321665985005501968491573426264355144952289118054419538744094875/407783657435317989608957673187694843635105906624653713762432408479925680926658882901739989176095493630832660761851038836224688128 \\
		50 & 16443253407597068574429077835892087470448833901310318719806519296394401735428065414564559122292507093809905999808858810561555793796534508498921947040983234122936767107027875136923333162383157951339124233/143476119349490319419878689586332092255487068278728085334801801857580554001656289411962211123750840739217618962088972797969279845790471117864960 \\
		51 & 757537641355380902648534379950311467544320151229665524109034672385903992098934954069985663049120339990122288084231957730687469526786886911335132537843264509876270107364053171843876281450046177533/47563885803255490307988823000612726561598752948699609173250116125098531423285492101658952337499778377100321551262305169857247623249920 \\
		52 & 37764982645014132157617533171845176481652936986807575387258848563135209613570180214956361886591745077487823746111719296429932656461715325270944420678370834016666534138037326793554982480758907701464231490154006882017/16386965426057320199306249626000099915673867401835002320883495499559897345723123253393437841520432664094838339762706185182309955747770139723063340564480 \\
		53 & 333625544570416182811653540814725995184390937373014648023756928188594536880970454920374320793253864152262592264497462972812601456013928425005383979233365573061254859385933749279073761117944278921603954465/961627617615898200850794827897187860531778947615629218421101547770792068903416765114500034778635519316863460994800781002105889547817582592 \\
		54 & 3106147714635814646314965917267644945390832950016496375600958224325947131703225571643268462267641426335534312146985789562984408394658892851956922129106121704103867196077350137971589548039856104066490349074130568559513029889331/57206348784933045959432634741437668331690876820001037673891443858055145750290460293243084006354066688840186181277666985582433258335767632276781021911926702080 \\
		55 & 14806536205877024971272508537577009962578031745132810536567764355623639445169209555993625512610268105138545270744761621214542596077610829188045142086299804435328356627016760376325229217758763422046296801764360790235/1677286868458373106078678167409261665768596137681387019378637547155224641280330532329731324661473933198024975013868051221465127539468222064492544 \\
		56 & 44016473971233298044604073414628660267241558188447756607781811208018449221493897902976375623860668455737450493389300281809049819140961972434644477074329308940087313984248638312982811799049674381659994975020844060659409952813215782618557/29543862639505105021500557556187848443287612616909335900220247799804407226546987541994508278143404970847602506878590329230363841117030763438048250507221508764467200 \\
		57 & 72578490777679691208604512513308768667614308137935726491253259998932789078120450466363177644445966618360725881063819836860894538412145199494959372411541221766120064383586474011209033973970403834586400805090929574819883640485/278241764034822597821179763835187599211020876087689929870683425422485905292711471347114469985442587830353967055100543281026407077267304421834539139072 \\
		58 & 108212865998392001445424027954443277665940210510260959155678983106430576193159104761400515568138664479685209332533496088496539684137826125314225405857207275299572719300484830022025680300950763284898868354364940193533171918493179242060895400494377/2285635256802873025551662858486269517012117256509316439806793461434689225978799892168141363165915520937668285846483435427857331088171714212102924014920751203960842354688 \\
		59 & 93101236342292512901918609454801084497525421822755660990404851397673927495872799823466370110375809175591897600771052709439058707052094332276448167776383898504838511828002640943794416030281822510770662428207768210400579191547320651475/10457438459484772516511220243981690728747008606879738324259765861078710264521267939109950239932874221016023497798898818674096483592014369390229319002882048 \\
		60 & 45919339012900304571361113296472587046034245611070765012623608361959633839771091328365075389516600923473559966924875480295964015438304656568745895359401405927367434509877713895723475296142248437268999912949033914331019064435862199217386479384354711947877831/26491272900809666479697534977711949661293992507209353267881766688274167457299984751034995380851656076409276804727403531098966680908356338520052915147656469566938217234185584640 \\
	};
	\end{axis}
	\end{tikzpicture}
	\caption{
		A log-linear plot of the Painlev{\'e} I free energies $\PIR{F}_{g}$ (for $u = 1$) in blue, and Aggarwal's bound \eqref{eq:bound:master} in red, and Kapaev's asymptotic \eqref{eq:Kapaev} in green.
	}
	\label{fig:Fg:PI}
\end{figure}
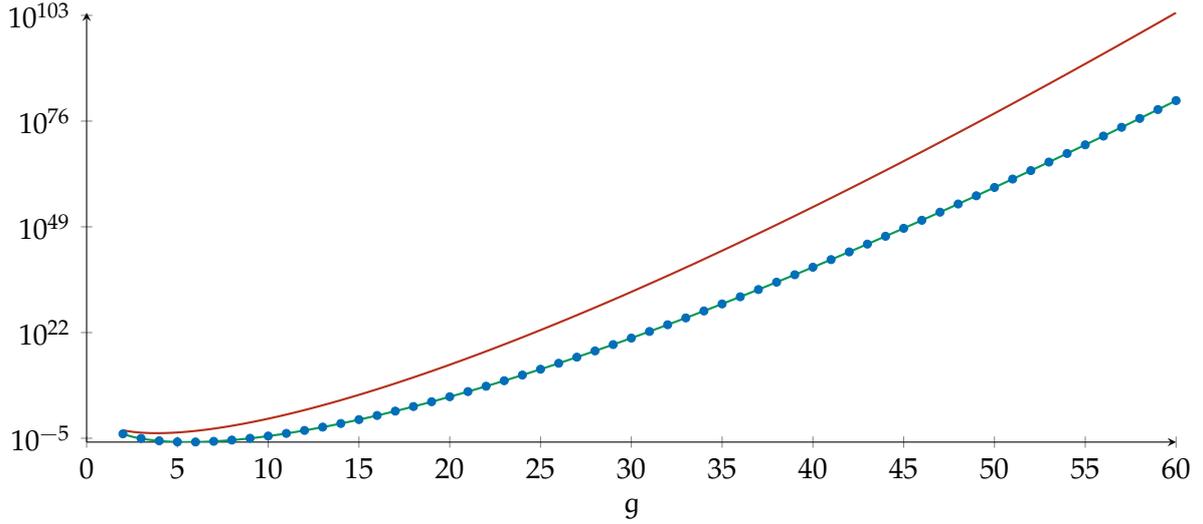

\section{Proof of the main theorems}
\label{sec:proof}


\subsection{Amplitudes}
We start with a uniform upper bound on the amplitudes for $n > 0$. The free energies will be discussed shortly.

\begin{theorem}\label{thm:final:bound:ampl}
	Consider a bounded spectral curve. The corresponding amplitudes $F_{g;\alpha_1,\dots,\alpha_n}$ satisfy for $2g - 2 + n > 0$, $n > 0$, and $\alpha_i = (a_i,k_i) \in \mf{A}$
	\begin{equation}
		\big|F_{g;\alpha_1,\dots,\alpha_n}\big|
		\prod_{i=1}^n (2k_i+1)!!
		\leq
		S(u)
		\left( \frac{2u}{27} \right)^g
		Q^{2g - 2 + n} \,
		\bigl( |\mf{a}| \, P(u) \bigr)^{3g - 3 + n}
		\frac{(3g - 3 + n)!}{g!} \,,
	\end{equation}
	where $Q$ comes from \cref{prop:upper:bound:Fgn} and $P(u)$ from \cref{lemma:bound:PI}. In particular, if $n$ is fixed, there exist constants $\ms{S}_n(u), \ms{A}(u) > 0$ such that uniformly in $\alpha_1,\dots,\alpha_n$
	\begin{equation}
		\big|F_{g;\alpha_1,\dots,\alpha_n}\big|
		\prod_{i=1}^n (2k_i+1)!!
		\leq
		\ms{S}_n(u) \,
		\frac{\Gamma(2g - 2 + n)}{\ms{A}(u)^{2g-2+n}} \,.
	\end{equation}
	One can take $\ms{A}(u)^{-1} = Q\sqrt{u/2} \, \bigl( |\mf{a}| \, P(u)\bigr)^{3/2}$.
\end{theorem}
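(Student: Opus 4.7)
The plan is essentially to chain together the two preceding upper bounds and then recast the $\frac{(3g-3+n)!}{g!}$ growth as a $\Gamma(2g-2+n)$ growth when $n$ is fixed. For the first claim, I would apply \cref{prop:upper:bound:Fgn} to dominate $|F_{g;\alpha_1,\dots,\alpha_n}|$ by $|\mf{a}|^{3g-3+n}\,Q^{2g-2+n}$ times the Painlev\'e~I amplitude $\PIR{F}_{g;k_1,\dots,k_n}(u)$, then multiply both sides by $\prod_{i=1}^n (2k_i+1)!!$ and invoke \cref{lemma:bound:PI}. The first displayed inequality follows immediately, with no further manipulation.

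For the second claim I would use the elementary identity
\begin{equation*}
	\frac{(3g-3+n)!}{g!} = \binom{3g-3+n}{g}\,\Gamma(2g-2+n),
\end{equation*}
valid because $2g-2+n \geq 1$, to pull out the desired $\Gamma$-factor and reduce the problem to controlling a binomial coefficient. For fixed $n$, Stirling's formula yields a constant $C_n$ such that $\binom{3g-3+n}{g} \leq C_n\,(27/4)^g$ uniformly in $g$ with $2g-2+n > 0$; the $\mr{O}(g^{-1/2})$ correction of Stirling is bounded and is absorbed into $C_n$. Combining the resulting $(27/4)^g$ with the $(2u/27)^g$ already present in \cref{lemma:bound:PI} collapses to a single factor $(u/2)^g$.

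The final step is to rebalance the exponents so that only $(2g-2+n)$-powers appear in the exponential base. Writing $g = \tfrac{2g-2+n}{2} + (1 - \tfrac{n}{2})$ and $3g-3+n = \tfrac{3(2g-2+n)}{2} - \tfrac{n}{2}$, one computes
\begin{equation*}
	(u/2)^g \, (|\mf{a}|\,P(u))^{3g-3+n}
	=
	\Bigl[\sqrt{u/2}\,(|\mf{a}|\,P(u))^{3/2}\Bigr]^{2g-2+n} \cdot (u/2)^{1-n/2}\,(|\mf{a}|\,P(u))^{-n/2}.
\end{equation*}
Incorporating the $Q^{2g-2+n}$ into the exponential base identifies $\ms{A}(u)^{-1} = Q\sqrt{u/2}\,(|\mf{a}|\,P(u))^{3/2}$, while $\ms{S}_n(u)$ absorbs $S(u)$, $C_n$, and the remaining $n$-dependent prefactor. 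The whole argument is almost mechanical once \cref{prop:upper:bound:Fgn} and \cref{lemma:bound:PI} are in hand; the only non-trivial bookkeeping is the Stirling estimate for the binomial coefficient, and even that is entirely routine. Consequently, there is no real obstacle to overcome: the theorem is a clean corollary of the comparison with the Painlev\'e~I curve plus Aggarwal's uniform upper bound on $\psi$-class intersections, assembled in the correct order.
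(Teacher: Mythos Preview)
Your proof is correct and follows essentially the same route as the paper: the first inequality is obtained by composing \cref{prop:upper:bound:Fgn} with \cref{lemma:bound:PI}, and the second by a Stirling-type bound converting $\tfrac{(3g-3+n)!}{g!}$ into $\Gamma(2g-2+n)$ times an exponential in $2g-2+n$. The only cosmetic difference is that the paper writes the Stirling step directly as $\tfrac{(3g-3+n)!}{g!} \le M_n\,(3\sqrt{3}/2)^{2g-2+n}\,\Gamma(2g-2+n)$ rather than going through your binomial-coefficient identity, but the two computations are equivalent and yield the same value of $\ms{A}(u)^{-1}$.
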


\begin{proof}
	The first claim is a direct consequence of the comparison with the PI amplitudes (\cref{prop:upper:bound:Fgn}) and the upper bound for the latter (\cref{lemma:bound:PI}). For fixed $n$, we use the Stirling inequalities to find
	\begin{equation}
		\frac{(3g - 3 + n)!}{g!}
		\leq
		M_n \biggl( \frac{3\sqrt{3}}{2} \biggr)^{2g - 2 + n}
		\Gamma(2g-2+n)
	\end{equation}
	for some constant $M_n > 0$, and rewrite $3g - 3 + n = \frac{3}{2}(2g - 2 + n) - \frac{n}{2}$ and $g = \frac{1}{2}(2g - 2 + n) + 1 - \frac{n}{2}$ to get the second claim. 
\end{proof}

\subsection{Generalised periods}
In the previous subsection, we provided an upper bound for the amplitudes associated with a (bounded local) spectral curve. However, interesting enumerative information are often stored in generalised periods of the correlators (rather than in the amplitudes). In this section we derive similar upper bounds bearing on generalised periods on spectral curves. To this end, we will need again a boundedness assumption on the generalised period under consideration. Throughout this section, we will only consider spectral curves and not their local version.

\begin{definition}\label{def:bounded:periods}
	We say that a linear form $I$ on $H^0(\Sigma,K_{\Sigma}(*\mf{a}))$ is bounded if there exists $N_I,v_I > 0$ such that for any $(a,i) \in \mf{A}$:
	\begin{equation}
		\big|I[\xi^{(a,i)}]\big| \leq N_I \frac{(2i + 1)!!}{v_I^{i}} \,.
	\end{equation}
\end{definition}

\begin{lemma}\label{lem:bound:periods}
	The following linear forms on $H^0(\Sigma,K_{\Sigma}(*\mf{a}))$ are bounded:
	\begin{itemize}
		\item
		$\mc{F} \colon \omega \mapsto \sum_{a \in \mf{a}} \Res_{z = a} \big(\int_{a}^{z} y \dd x\big)\omega(z)$;

		\item
		$I_{(p,k)}$ for each $p \in \Sigma \setminus \mf{a}$ and $k \in \Z_{\geq 0}$ --- see \eqref{eq:gen:period};

		\item
		$\mr{ev}_{(Z,p)}$ for each $p \in \Sigma \setminus \mf{a}$ --- see \eqref{eq:gen:period:ev};

		\item
		$I_{(Z,p,k)}$ for each $p \in \Sigma \setminus \mf{a}$ --- see \eqref{eq:gen:period2};

		\item
		$\int_{\gamma}$ for $\gamma$ a loop or a path between two points in $\Sigma \setminus \mf{a}$.
	\end{itemize}
	In the third and fourth cases, the constants in \cref{def:bounded:periods} can be chosen uniformly for $p$ in any compact in the domain of definition of the local coordinate $Z$. In the fifth case, the constants can be chosen uniformly for loops or paths remaining in a compact subset of the universal cover of $\Sigma \setminus \mf{a}$.
\end{lemma}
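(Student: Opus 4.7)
The strategy is uniform across cases 2--5 and proceeds by unifying the bound through a Cauchy-type inequality on an auxiliary function. The starting point is the observation that the defining formula for $\xi^{(a,i)}$ can be rewritten, by extracting the residue, as $\xi^{(a,i)}(z) = (2i+1)!! \cdot g_{2i+1}(z)$, where $g_{k}(z)$ is the coefficient of $\zeta_a(w)^{k}$ in the $w$-expansion of $\int_a^w \omega_{0,2}(z,\cdot)$ near $w=a$. For any linear form $I$ on meromorphic differentials, Fubini for residues then gives
\begin{equation*}
   I[\xi^{(a,i)}] = (2i+1)!! \cdot [\zeta_a(w)^{2i+1}] \Phi_I(w),
   \qquad
   \Phi_I(w) \coloneqq I_z\Bigl[\int_a^w \omega_{0,2}(z,\cdot)\Bigr].
\end{equation*}
If $\Phi_I$ is holomorphic in a disc $|\zeta_a(w)| < r_I$ and bounded there by $M_I$, Cauchy's inequality yields $|I[\xi^{(a,i)}]| \leq (2i+1)!! \, M_I / r_I^{2i+1}$, which is of the desired form with $v_I = r_I^2$ and $N_I = M_I/r_I$. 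So for each of the five linear forms, the task reduces to verifying (i) the validity of the swap of $I$ with the residue defining $\xi^{(a,i)}$, and (ii) the existence of a positive lower bound on the radius of holomorphy of $\Phi_I$ near $a$, together with a uniform sup-norm bound on $\Phi_I$ in that disc.

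For cases 2, 3, 4 (the residue linear forms $I_{(p,k)}$, $I_{(Z,p,k)}$ and the evaluation $\mathrm{ev}_{(Z,p)}$) the form $I$ is a residue/evaluation at a point $p \in \Sigma \setminus \mathfrak{a}$, so for $w$ in a small $\zeta_a$-disc the only singularity of $\omega_{0,2}(z, w)$ visible to $I_z$ is the diagonal $z = w$, which stays away from $p$. This makes $\Phi_I$ holomorphic in $w$ near $a$ with a radius bounded below uniformly for $p$ in any compact subset of the coordinate domain (via a Cauchy-kernel argument on a fixed contour around $p$ that avoids $a$). Case 5 ($\int_\gamma$) is handled the same way: swap the integration over $\gamma$ with the residue, and note that $\Phi_{\int_\gamma}(w) = \int_\gamma \int_a^w \omega_{0,2}(z,\cdot)$ is holomorphic in $w$ near $a$ with sup-norm bounded by $\mathrm{length}(\gamma)$ times a sup-norm of $\omega_{0,2}$ on $\gamma \times$ (a $\zeta_a$-disc) --- both uniform over $\gamma$ in a compact subset of the universal cover of $\Sigma \setminus \mathfrak{a}$.

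Case 1 requires a different argument, because $\mathcal{F}$ is precisely a residue \emph{at} the ramification points, so the swap strategy degenerates. Here I would simply combine \cref{def:Fg} with \cref{lem:Fg}: equating the two expressions for $F_g$ after expanding $\omega_{g,1}(z) = \sum_{(a,i)} F_{g;(a,i)} \xi^{(a,i)}(z)$ identifies
\begin{equation*}
   \mathcal{F}[\xi^{(a,i)}] = -(2i-1)!! \, t_{(a,i)}.
\end{equation*}
The boundedness of spectral curves (\cref{def:boundedness}, which holds automatically by \cref{lem:reg:SC:bnd}) provides $|t_{(a,i)}| \leq M_t/\rho_t^{i+1}$, and since $(2i-1)!! \leq (2i+1)!!$ one concludes with $v_{\mathcal{F}} = \rho_t$ and $N_{\mathcal{F}} = M_t/\rho_t$.

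The hardest step is the justification of the swap in case 5, where one must juggle two iterated integrations together with the residue at $a$ in a third variable. Concretely, one has to show that for $w$ in a small enough $\zeta_a$-disc (disjoint from a neighbourhood of $\gamma$), the double integral $\int_\gamma \int_a^w \omega_{0,2}$ converges absolutely and Fubini applies, and that the radius of that $\zeta_a$-disc can be chosen uniformly over a compact family of paths. This is a compactness argument reducing, via a finite cover of $\gamma$ by coordinate patches on which $\omega_{0,2}(z,\cdot)/d\zeta_a(w)$ admits a uniform sup-norm bound, to a standard holomorphy-under-the-integral-sign statement.
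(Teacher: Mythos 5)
Your proposal follows essentially the same route as the paper: for the residue/integration forms you exchange $I$ with the residue at $w=a$ defining $\xi^{(a,i)}$ and then bound the $\zeta_a$-coefficient by Cauchy's inequality on a disc of uniformly positive radius --- this is exactly the paper's ``move the contours in $z$ and in $w$, avoiding the diagonal'' argument in slightly different packaging --- and for $\mc{F}$ you reduce to the identity $\mc{F}[\xi^{(a,i)}] = -(2i-1)!!\,t_{(a,i)}$ together with the boundedness of regular spectral curves (\cref{lem:reg:SC:bnd}), which is precisely what the paper does. The one step I would not accept as written is your derivation of that identity in case 1: equating \cref{def:Fg} with \cref{lem:Fg} after expanding $\omega_{g,1}=\sum_{(a,i)}F_{g;(a,i)}\xi^{(a,i)}$ only gives $\sum_{(a,i)}F_{g;(a,i)}\bigl(\mc{F}[\xi^{(a,i)}]+(2i-1)!!\,t_{(a,i)}\bigr)=0$ for each $g$, which does not determine the individual pairings $\mc{F}[\xi^{(a,i)}]$ unless you additionally argue that the vectors $(F_{g;(a,i)})$ span enough of the dual space; moreover \cref{lem:Fg} is itself obtained in the paper from this very pairing, so the reduction is close to circular. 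The fix is immediate: compute $\Res_{z=a}\bigl(\int_a^z y\,\dd x\bigr)\xi^{(a,i)}(z)$ directly from the expansions \eqref{eq:ydx:expns} and \eqref{eq:xi:expansion} --- only the $\sigma_a$-odd part of $\int_a^z y\,\dd x$, namely $-\sum_j \tfrac{t_{(a,j)}}{2j+1}\zeta_a^{2j+1}$, pairs with the principal part $\tfrac{(2i+1)!!}{\zeta_a^{2i+2}}\dd\zeta_a$, giving $-(2i-1)!!\,t_{(a,i)}$. With that replacement your argument is complete; the uniformity discussions for cases 3--5 (compactness in $p$, paths in a compact of the universal cover) are correct and in fact more detailed than the paper's ``the others are obtained similarly''.
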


\begin{proof}
	We only discuss the first two cases, as the others are obtained similarly. For the linear form $\mc{F}$, we have
	\begin{equation}
		\mc{F}[\xi^{(a,i)}]
		=
		\sum_{b \in \mf{a}}
			\Res_{z = b} \left(\int_{b}^{z} y\dd x\right) \xi^{(a,i)}(z)
		=
		\Res_{z = a} \left(\int_{a}^z y \dd x\right) \xi^{(a,i)}(z)
		=
		-(2i - 1)!! \, t_{(a,i)}
	\end{equation}
	using the series expansion of $y \dd x$ from \eqref{eq:ydx:expns} and that of $\xi^{(a,i)}$ from \eqref{eq:xi:expansion}. As \cref{lem:reg:SC:bnd} guarantees that regular spectral curves are bounded, we deduce that $\mc{F}$ is bounded with constant $v_{\mc{F}} = \rho_t$.

	As for the linear form $I_{(p,k)}$, recall the definition of the local coordinate $X_p$ near $p$ in \eqref{eq:coord:X}. We have
	\begin{equation}
		I_{(p,k)}[\xi^{(a,i)}]
		=
		(2i + 1)!! \,
		\Res_{z = p} \
		\Res_{w = a} \
			\frac{ \omega_{0,2}(z,w)}{X_p^{k}(z) \zeta_a(w)^{2i + 2}} \,.
	\end{equation}
	We can then move the contours in $z$ and in $w$, but avoiding intersections due to the double pole of $\omega_{0,2}$ along the diagonal. Hence, we get the desired bound for some choice of $N_{I_{(p,k)}} > 0$ and $v_{I_{(p,k)}} = \min_{a \in \mf{a}} v_{p,a}$. Recalling the notations from the proof of \cref{lem:reg:SC:bnd}, if $p \notin U_{R_a}$ we can take any $v_{p,a} < R_{a,+}^2$ (for instance $v_{p,a} = \rho$), and if $p \in U_{R_{a}}$ we should rather take $v_{p,a} < |\zeta_a(p)|^2$ and choose the contour in $z$ so that its $\zeta_a$-projection remains in $\mb{D}(R_a) \setminus \overline{\mb{D}(v_{p,a})}$.
\end{proof}

\begin{theorem}\label{thm:bound:periods}
	Consider a regular spectral curve and let $I_1,\ldots,I_n$ be bounded linear forms on the space $H^0(\Sigma,K_{\Sigma}(*\mf{a}))$. 
	Then, for any $2g - 2 + n > 0$, we have
	\begin{equation}
		\big|I_{1} \otimes \cdots \otimes I_n[\omega_{g,n}]\big|
		\leq
		\left( \prod_{i = 1}^{n} |\mf{a}| \, N_{I_i} \right)
		\frac{S(\tfrac{u}{v_I})}{v_I}
		\left( \frac{2u}{27} \right)^g
		Q^{2g-2+n}
		\left( \frac{|\mf{a}| \, P(\frac{u}{v_I})}{v_I^2} \right)^{3g - 3 + n} \,
		\frac{(3g - 3 + 2n)!}{g! \, n!} ,
	\end{equation}
	where $v_I \coloneqq \min_i v_{I_i}$. In particular, for fixed $n$, there exist $\ms{S}_{n,I}, \ms{A}_{I} > 0$ depending only on $n$ and the constants in the boundedness assumptions, such that
	\begin{equation}\label{eq:bound:periods}
		\big|I_{1} \otimes \cdots \otimes I_n[\omega_{g,n}]\big|
		\leq
		\ms{S}_{n,I}(u) \,
		\frac{\Gamma(2g - 2 + n)}{\ms{A}_{I}^{2g-2+n}} \,.
	\end{equation}
	One can take any fixed $\ms{A}_{I}^{-1} = Q/v_I^{3} \, \sqrt{u/2} \, \bigl( |\mf{a}| \, P(u/v_I) \bigr)^{3/2}$.
\end{theorem}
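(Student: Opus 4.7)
The plan is to combine the decomposition of correlators into amplitudes (\cref{prop:ABCDWGN}), the upper bound by Painlev\'e~I amplitudes (\cref{prop:upper:bound:Fgn}), the geometric-series bound from \cref{lem:bound:sums:PI}, and the boundedness hypothesis on $I_1,\dots,I_n$ from \cref{def:bounded:periods}. Explicitly, the linear forms commute with the finite sum in \eqref{eq:EO:KS:TR}, so
\begin{equation}
	\bigl|I_1 \otimes \cdots \otimes I_n[\omega_{g,n}]\bigr|
	\leq
	\sum_{\substack{a_1,\dots,a_n \in \mf{a} \\ k_1,\dots,k_n \geq 0}}
		\bigl|F_{g;(a_1,k_1),\dots,(a_n,k_n)}\bigr|
		\prod_{i=1}^n \bigl|I_i[\xi^{(a_i,k_i)}]\bigr|.
\end{equation}

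Next, I would substitute the bound of \cref{def:bounded:periods} into each factor, bounding $v_{I_i}^{-k_i} \leq v_I^{-k_i}$ with $v_I = \min_i v_{I_i}$, and then apply \cref{prop:upper:bound:Fgn} to the amplitudes. The $a_i$-summation contributes a factor $|\mf{a}|^n$ (the amplitude bound does not depend on the $a_i$'s), which, combined with the $|\mf{a}|^{3g-3+n}$ from \cref{prop:upper:bound:Fgn}, produces $|\mf{a}|^{3g-3+2n}$. The remaining $k$-sum is exactly of the form controlled by \cref{lem:bound:sums:PI} with parameter $v = v_I$, yielding the claimed first bound after collecting the factors $Q^{2g-2+n}$, $(2u/27)^g$, $\bigl(|\mf{a}|P(u/v_I)/v_I^2\bigr)^{3g-3+n}$, $\frac{(3g-3+2n)!}{g!\,n!}$, and the prefactors $\prod_i N_{I_i}$ and $S(u/v_I)/v_I$.

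For the second statement at fixed $n$, I would mimic the Stirling step from \cref{thm:final:bound:ampl}: rewrite $g=\tfrac12(2g-2+n)+1-\tfrac{n}{2}$ and $3g-3+n=\tfrac32(2g-2+n)-\tfrac{n}{2}$, and use the inequality
\begin{equation}
	\frac{(3g-3+2n)!}{g!\,n!}
	\leq
	M'_n \left(\frac{3\sqrt{3}}{2}\right)^{2g-2+n} \Gamma(2g-2+n)
\end{equation}
for some $M'_n>0$, which follows from Stirling applied to each factorial. The exponential growth rate in $2g-2+n$ then equals $Q \cdot (|\mf{a}|P(u/v_I))^{3/2} \cdot v_I^{-3} \cdot \sqrt{u/2} \cdot (\tfrac{3\sqrt{3}}{2}) \cdot (\text{Stirling constants})$; absorbing the universal numerical factors into $\ms{A}_I^{-1}$ gives precisely $\ms{A}_I^{-1} = Q v_I^{-3} \sqrt{u/2} \, (|\mf{a}|P(u/v_I))^{3/2}$ as claimed, with all remaining polynomial and $n$-dependent constants absorbed in $\ms{S}_{n,I}$.

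The only delicate bookkeeping is that the geometric-series bound of \cref{lem:bound:sums:PI} must be applied with a single parameter $v$, so one loses a bit by replacing each $v_{I_i}$ with the common minimum $v_I$; apart from this, everything is a routine assembly of previously established inequalities. No new analytic idea is required: the main obstacle, really just a matter of care, is to verify that the exponents in $v_I$ and $|\mf{a}|$ line up so that the final exponential growth rate in $2g-2+n$ takes the clean form advertised in the statement.
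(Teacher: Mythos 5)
Your proposal is correct and follows essentially the same route as the paper: expand $\omega_{g,n}$ on the $\xi^{\alpha}$ basis, insert the boundedness of each $I_i$ with $v_{I_i}\ge v_I$, apply \cref{prop:upper:bound:Fgn} (picking up $|\mf{a}|^n\cdot|\mf{a}|^{3g-3+n}$), control the residual $k$-sum via \cref{lem:bound:sums:PI} with $v=v_I$, and finish with the Stirling-type inequality $\frac{(3g-3+2n)!}{g!\,n!}\le M_n(\tfrac{3\sqrt 3}{2})^{2g-2+n}\Gamma(2g-2+n)$. The bookkeeping for $\ms{A}_I^{-1}$ also matches, since $\sqrt{2u/27}\cdot\tfrac{3\sqrt3}{2}=\sqrt{u/2}$.
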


\begin{proof}
	We have
	\begin{equation}
		(I_{1} \otimes \cdots \otimes I_n)[\omega_{g,n}]
		=
		\sum_{\alpha_1,\dots,\alpha_n \in \mf{A}} F_{g;\alpha_1,\dots,\alpha_n} \prod_{i = 1}^n I_i[\xi^{\alpha_i}] \,.
	\end{equation}
	Coming back to the comparison between the amplitudes with the PI amplitudes (\cref{prop:upper:bound:Fgn}) and using \cref{lem:bound:sums:PI}, we get
	\begin{equation}
	\begin{split}
		& \quad \big|I_{1} \otimes \cdots \otimes I_n[\omega_{g,n}]\big| \\
		& \leq
		\sum_{\substack{(a_1,k_1),\dots,(a_n,k_n) \in \mf{A} \\ k_1 + \cdots + k_n \leq 3g - 3 + n}}
			\big|F_{g;(a_1,k_1),\dots,(a_n,k_n)}\big|
			\prod_{i = 1}^n
				\frac{N_{I_i} (2k_i + 1)!!}{v_{I_i}^{k_i}} \\
		& \leq
		\left( \prod_{i = 1}^n |\mf{a}| \, N_{I_i} \right)
		|\mf{a}|^{3g - 3 + n} \,
		Q^{2g - 2 + n}
		\sum_{k_1,\dots,k_n \geq 0 }
			v_I^{-|k|} \,
			\PIR{F}_{g;k_1,\dots,k_n}(u) \\
		& \leq
		\left( \prod_{i = 1}^{n} |\mf{a}| \, N_{I_i} \right)
		\frac{S(\tfrac{u}{v_I})}{v_I}
		\left( \frac{2u}{27}\right)^g
		Q^{2g-2+n}
		\left( \frac{|\mf{a}| \, P(\frac{u}{v_I})}{v_I^2} \right)^{3g - 3 + n} \,
		\frac{(3g - 3 + 2n)!}{g! \, n!} .
	\end{split} 
	\end{equation}
	Specialising to $n$ fixed follows again from the Stirling-type inequality
	\begin{equation}
		\frac{(3g - 3 + 2n)!}{g! \, n!}
		\leq
		M_n \biggl(\frac{3\sqrt{3}}{2}\biggr)^{2g - 2 + n} \Gamma(2g-2+n) \,.
	\end{equation}
\end{proof}


\begin{corollary}\label{cor:cor}
	Consider a regular spectral curve and let $Z_1,\dots,Z_n$ be local coordinates in some open of $\Sigma \setminus \mf{a}$. Take $p_i$ in a compact subset in the domain of definition of the local coordinate $Z_i$. There exists $\ms{S}_{n,Z}, \, \ms{A}_{Z} > 0$ depending only on these compacts and the constants in the boundedness property such that for any $2g-2+n > 0$
	\begin{equation}
		\bigg|
			\frac{\omega_{g,n}(z_1,\dots,z_n)}{\dd Z_1(z_1) \cdots \dd Z_n(z_n)}\bigg|_{z_i = p_i}
		\leq
		\ms{S}_{n,Z} \,
		\frac{\Gamma(2g - 2 + n)}{\ms{A}_{Z}^{2g-2+n}} \,.
	\end{equation}
\end{corollary}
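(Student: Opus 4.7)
The plan is to recognise the left-hand side as a generalised period and then invoke \cref{thm:bound:periods}. By the very definition \eqref{eq:gen:period:ev} of $\mr{ev}_{(Z_i,p_i)}$, we have the identity
\begin{equation}
	\frac{\omega_{g,n}(z_1,\dots,z_n)}{\dd Z_1(z_1) \cdots \dd Z_n(z_n)}\bigg|_{z_i = p_i}
	=
	\bigl( \mr{ev}_{(Z_1,p_1)} \otimes \cdots \otimes \mr{ev}_{(Z_n,p_n)} \bigr) [\omega_{g,n}]\,,
\end{equation}
valid because $\omega_{g,n} \in H^0(\Sigma^n, K_{\Sigma}(*\mf{a})^{\boxtimes n})^{\mf{S}_n}$ and each $p_i \in \Sigma \setminus \mf{a}$ is disjoint from the ramification divisor. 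So the problem reduces to an upper bound on a generalised period.

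Next, I will apply \cref{lem:bound:periods} to each linear form $\mr{ev}_{(Z_i,p_i)}$. That lemma asserts boundedness in the sense of \cref{def:bounded:periods}, and crucially it states that the constants $N_i \coloneqq N_{\mr{ev}_{(Z_i,p_i)}}$ and $v_i \coloneqq v_{\mr{ev}_{(Z_i,p_i)}}$ can be chosen uniformly as $p_i$ ranges over any compact subset $K_i$ of the domain of the local coordinate $Z_i$. In particular, setting $N \coloneqq \max_i N_i$ and $v \coloneqq \min_i v_i$, these constants depend only on the spectral curve and on the prescribed compacts $K_i$, not on the specific points $p_i \in K_i$.

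Finally, I apply \cref{thm:bound:periods} to the bounded linear forms $I_i = \mr{ev}_{(Z_i,p_i)}$. The specialised form \eqref{eq:bound:periods} for $n$ fixed produces constants $\ms{S}_{n,\mr{ev}}(u)$ and $\ms{A}_{\mr{ev}}$ built from $n$, $N$, $v$, the spectral curve data (through $Q$, $u$), and $|\mf{a}|$, all of which are independent of the points $p_i \in K_i$. Renaming $\ms{S}_{n,Z} \coloneqq \ms{S}_{n,\mr{ev}}(u)$ and $\ms{A}_{Z} \coloneqq \ms{A}_{\mr{ev}}$ yields the claimed inequality.

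There is essentially no obstacle in this step once the uniformity in \cref{lem:bound:periods} is in place: the only point requiring care is that the radii $v_{p_i,a}$ appearing in the proof of \cref{lem:bound:periods} stay bounded away from zero as $p_i$ varies in $K_i$, which holds by compactness (for $p_i$ outside a neighbourhood of $a$ one can take a fixed $v$, and for $p_i$ in such a neighbourhood one uses $v_{p_i,a} < \inf_{p_i \in K_i} |\zeta_a(p_i)|^2$, which is strictly positive). This gives the uniform bounds and concludes the proof.
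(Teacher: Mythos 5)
Your proposal is correct and follows exactly the route the paper intends: rewrite the left-hand side as $(\mr{ev}_{(Z_1,p_1)} \otimes \cdots \otimes \mr{ev}_{(Z_n,p_n)})[\omega_{g,n}]$, use the uniform-on-compacts boundedness of the evaluation forms from \cref{lem:bound:periods}, and conclude with \cref{thm:bound:periods}. The paper treats this as immediate from those two results, so there is nothing to add.
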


\subsection{Free energies and wave function}
We conclude this section with an analysis of the free energies and the stable coefficients of the wave function.

\begin{theorem}\label{thm:free}
	Consider a (bounded local) regular spectral curve. Then, the free energies satisfy
	\begin{equation}
		\forall g \geq 2 \,,
		\qquad\qquad
		|F_{g}|
		\leq
		\ms{S}_{0} \,
		\frac{\Gamma(2g - 2)}{\ms{A}_0^{2g-2}}
	\end{equation}
	for some constants $\ms{S}_{0}, \, \ms{A}_{0} > 0$ depending only on the constants in the boundedness assumption. One can take $\ms{A}_{0}^{-1} = Q/\rho_t^{3} \, \sqrt{u/2} \, \bigl(|\mf{a}| \, P(u/\rho_t) \bigr)^{3/2}$.
\end{theorem}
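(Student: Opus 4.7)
My plan is to reduce this statement to the machinery already developed in \cref{prop:upper:bound:Fgn,lem:bound:sums:PI} via the identity of \cref{lem:Fg}, which expresses every free energy as a weighted sum of one-point amplitudes together with the expansion coefficients $t_{(a,i)}$ of $y\dd x$:
\begin{equation*}
F_g = \frac{1}{2g-2} \sum_{(a,i) \in \mf{A}} (2i-1)!! \, t_{(a,i)} \, F_{g;(a,i)} \,.
\end{equation*}
This reformulation is convenient because it handles both the local and global settings uniformly, unlike the generalised-period bound of \cref{thm:bound:periods}, which was only formulated for (non-local) spectral curves. Conceptually, however, the argument is precisely the $n=1$ case of that theorem applied to the linear form $\mc{F}$, whose boundedness was established with parameter $v_{\mc{F}} = \rho_t$ in \cref{lem:bound:periods}.

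Starting from the identity above, I would take absolute values, plug in the first boundedness inequality from \cref{def:boundedness}, namely $|t_{(a,i)}| \leq M_t/\rho_t^{i+1}$, and rewrite $(2i-1)!! = (2i+1)!!/(2i+1)$ so as to produce the double-factorial weight that appears in \cref{lem:bound:sums:PI}. This reduces the problem to bounding, for each ramification point $a \in \mf{a}$, a geometric-type sum of the form
\begin{equation*}
\sum_{i \geq 0} \rho_t^{-i} (2i+1)!! \, |F_{g;(a,i)}| \,.
\end{equation*}
Next I would apply \cref{prop:upper:bound:Fgn} to replace each $|F_{g;(a,i)}|$ with $|\mf{a}|^{3g-2} Q^{2g-1} \PIR{F}_{g;i}(u)$, and then invoke \cref{lem:bound:sums:PI} with $n=1$ and $v=\rho_t$ to control the resulting sum over $i$; this produces the factor $(3g-1)!/g!$ together with the exponential weights $(2u/27)^g$ and $(P(u/\rho_t)/\rho_t^2)^{3g-2}$.

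Finally, Stirling's estimate yields $(3g-1)!/g! \leq M \cdot (3\sqrt{3}/2)^{2g-1}\, \Gamma(2g-1)$ for some absolute constant $M$, and the identity $\Gamma(2g-1)=(2g-2)\Gamma(2g-2)$ absorbs the overall $1/(2g-2)$ prefactor to produce the desired $\Gamma(2g-2)$. Writing $3g-1 = \tfrac{3}{2}(2g-2) + 2$ and $g = \tfrac{1}{2}(2g-2) + 1$ in the exponential factors and collecting everything gives exactly the claimed exponential growth rate $\ms{A}_0^{-1} = Q/\rho_t^{3} \, \sqrt{u/2} \, \bigl(|\mf{a}| \, P(u/\rho_t)\bigr)^{3/2}$, matching the formula from \cref{thm:bound:periods} with $v_I=\rho_t$. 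I do not anticipate any real obstacle here: the entire analytic effort has been spent upstream in \cref{lemma:bound:PI,lem:bound:sums:PI,prop:upper:bound:Fgn}, and the free-energy bound is essentially a bookkeeping corollary, with the only small twist being that the $1/(2g-2)$ prefactor in \cref{lem:Fg} exactly compensates the shift $\Gamma(2g-1) \to \Gamma(2g-2)$.
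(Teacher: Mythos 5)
Your proposal is correct and follows essentially the same route as the paper: the authors likewise reduce to \cref{lem:Fg}, bound $|t_{(a,i)}|$ by $M_t\rho_t^{-i-1}$, compare with the Painlev\'e~I amplitudes via \cref{prop:upper:bound:Fgn}, control the resulting sum with \cref{lem:bound:sums:PI} (with $v=\rho_t$, $n=1$), and finish with the same Stirling-type estimate; they also note, as you do, that for genuine (non-local) spectral curves the statement is already a consequence of \cref{thm:bound:periods} applied to $\mc{F}$. Your bookkeeping of the exponents reproducing $\ms{A}_0^{-1}$ is accurate, and the residual power of $g$ in the Stirling step is the same harmless slack the authors explicitly allow themselves.
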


\begin{proof}
	For regular spectral curves, this is a simple consequence of \cref{thm:bound:periods} due to the boundedness of linear form $\mc{F}$ in \cref{lem:bound:periods}. For bounded local spectral curves, we can just redo the proof starting from the expression of the free energies in terms of the coefficients $(t_{\alpha})$, \cref{lem:Fg}, in combination with the geometric bound on the coefficients $t_{\alpha}$ in the boundedness assumption and the comparison with PI amplitudes. We have
	\begin{equation}
	\begin{split}
		|F_{g}|
		& \leq
		\frac{1}{2g - 2}
		\sum_{(a,i) \in \mf{A}}
			(2i - 1)!! \frac{M_t}{\rho_t^{i + 1}}
			\big|F_{g;(a,i)}\big| \\
		& \leq
		\frac{M_t \, |\mf{a}|^{3g - 2} \, Q^{2g - 1}}{(2g - 2) \, \rho_t}
		\sum_{(a,i) \in \mf{A}}
			(2i + 1)!! \, \rho_t^{-i} \, \PIR{F}_{g;i}(u) \,.
	\end{split}
	\end{equation}
	The result is then implied by the Stirling-type inequality
	\begin{equation}
		\frac{(3g - 1)!}{g!}
		\leq
		M_0 \biggl(\frac{3\sqrt{3}}{2}\biggr)^{2g - 2} \Gamma(2g-2)
	\end{equation}
	and by \cref{lem:bound:sums:PI}.
\end{proof}

To conclude, we provide an upper bound for the stable coefficients of the wave function (see \cref{def:wave}).

\begin{theorem}\label{thm:wave}
	Consider a regular spectral curve and take a path remaining in a given compact subset of the universal cover of $\Sigma \setminus \mf{a}$ between two points $z_0$ and $z$. There exists $\ms{S}_{\psi}, \, \ms{A}_{\psi} > 0$ depending only on this compact and the constants in the boundedness property such that
	\begin{equation}
		\forall \chi \geq 1 \,,
		\qquad\qquad
		\big|f_{z_0,\chi}(z)\big|
		\leq
		\ms{S}_{\psi} \, \frac{\Gamma(\chi)}{\ms{A}_{\psi}^{\chi}} \,.
	\end{equation}
	One can take $\ms{A}_{\psi}^{-1} = Q/v^{3} \, \sqrt{2u} \, \bigl( |\mf{a}| \, P(u/v) \bigr)^{3/2}$.
\end{theorem}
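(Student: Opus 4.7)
The plan is to reduce the bound on $f_{z_0,\chi}$ to the generalised-period estimate of \cref{thm:bound:periods} applied to the path-integration functional $I = \int_\gamma$. From \cref{def:wave} and the symmetry of the correlators, for $\chi \geq 1$,
\begin{equation*}
	f_{z_0,\chi}(z) = \sum_{\substack{g \geq 0,\, n \geq 1 \\ 2g - 2 + n = \chi}} \frac{1}{n!} \, I^{\otimes n}[\omega_{g,n}] \,.
\end{equation*}
By the fifth item of \cref{lem:bound:periods}, $I$ is a bounded linear form with constants $N_I, v_I > 0$ that can be chosen uniformly over all paths $\gamma$ remaining in the given compact of the universal cover of $\Sigma \setminus \mf{a}$.

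Inserting the bound from the proof of \cref{thm:bound:periods} (before specialising to fixed $n$) and absorbing the extra $1/n!$ from $f_{z_0,\chi}$ into the combinatorial part, each term is controlled by
\begin{equation*}
	\frac{1}{n!} \big| I^{\otimes n}[\omega_{g,n}] \big|
	\leq
	\frac{S(u/v_I)}{v_I} (|\mf{a}| N_I)^n \left(\frac{2u}{27}\right)^{\!g} Q^{\chi} \left(\frac{|\mf{a}| P(u/v_I)}{v_I^2}\right)^{\!g + \chi - 1} \frac{(3g - 3 + 2n)!}{g! \, n!^2} \,.
\end{equation*}
Substituting $n = \chi - 2g + 2$ reduces the right-hand side to a function of $g \in \{0,1,\ldots,\lfloor(\chi + 1)/2\rfloor\}$. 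The main technical step is a Stirling-type inequality
\begin{equation*}
	\frac{(2\chi + 1 - g)!}{g! \, (\chi - 2g + 2)!^{2}}
	\leq
	M \left(\tfrac{3\sqrt{3}}{4}\right)^{\!\chi} 4^{g}\, \Gamma(\chi),
\end{equation*}
whose saturation is attained at the boundary corner $g \approx \chi/2$, $n \approx 2$ --- where $(3\chi/2 + 1)!/((\chi/2)! \cdot 4) \sim \chi^{3/2} (3\sqrt{3}/2)^{\chi} \Gamma(\chi)$ --- while all small-$g$ terms are heavily damped by the $1/n!^{2}$ factor; the residual polynomial prefactor $\chi^{3/2}$ can be absorbed into an arbitrarily small enlargement of the exponential base.

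Once this Stirling inequality is in place, the remaining sum over $g$ becomes a finite geometric series of ratio $r = 8u \, |\mf{a}| \, P(u/v_I)/(27 v_I^{2})$, bounded by a constant times $\max(1, r^{\lfloor(\chi + 1)/2 \rfloor})$. Combining the exponential factors $(3\sqrt{3}/4)^{\chi}$, $Q^{\chi} (|\mf{a}| P(u/v_I)/v_I^{2})^{\chi - 1}$, and $r^{\chi/2}$ reproduces the stated exponential rate $\ms{A}_{\psi}^{-1} = Q/v_I^{3} \cdot \sqrt{2u} \cdot (|\mf{a}| P(u/v_I))^{3/2}$; the factor of $2$ separating $\ms{A}_\psi$ from $\ms{A}_I$ of \cref{thm:bound:periods} accounts precisely for the summation over the $O(\chi)$ admissible pairs $(g,n)$. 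The main obstacle is the Stirling estimate at the transverse boundary $n \approx 2$: the contribution there saturates $\Gamma(\chi)$ exactly, so one must check that neither a super-$\Gamma(\chi)$ loss nor an uncontrolled polynomial factor is produced; this relies on the precise cancellation between $(2\chi + 1 - g)^{(2\chi + 1 - g)}$ in the numerator and $g^g \, (\chi - 2g + 2)^{2(\chi - 2g + 2)}$ in the denominator along the constrained locus $2g - 2 + n = \chi$, which is exactly the point where the $1/n!^{2}$ factor becomes essential.
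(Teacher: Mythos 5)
Your reduction is the same as the paper's: write $f_{z_0,\chi}$ as the sum over $2g-2+n=\chi$ of $\frac{1}{n!}\,I^{\otimes n}[\omega_{g,n}]$, invoke the fifth item of \cref{lem:bound:periods} for constants $N_I,v_I$ uniform over the compact, and insert the intermediate bound from the proof of \cref{thm:bound:periods}. The divergence is in how the final sum over $(g,n)$ is estimated, and there your bookkeeping has a genuine gap concerning the factor $(|\mf{a}|N_I)^n$. By spending both powers of $1/n!$ inside your Stirling inequality, you are left (after $\Gamma(\chi)$ has been extracted) with a sum over $g$ whose summand still contains $(|\mf{a}|N_I)^{\chi-2g+2}$; your stated ratio $r=8u|\mf{a}|P(u/v_I)/(27v_I^2)$ silently drops the $(|\mf{a}|N_I)^{-2g}$ coming from $n=\chi-2g+2$, and the surviving prefactor $(|\mf{a}|N_I)^{\chi+2}$ is never accounted for. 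When $|\mf{a}|N_I$ is large enough that the corrected ratio falls below $1$, the sum is dominated by small $g$ and your bound carries an extra $(|\mf{a}|N_I)^{\chi}$, so it does not reproduce the stated $N_I$-independent rate $\ms{A}_{\psi}^{-1}$ (it still gives a $\Gamma(\chi)$ bound, but with a worse, $N_I$-dependent exponential). The paper avoids exactly this by keeping one factor $\frac{(|\mf{a}|N)^n}{n!}$ outside the combinatorial estimate, using only $\frac{(3g-3+2n)!}{g!\,n!}\le\Gamma(\chi)\,3^{3g-3+2n}$, and bounding the leftover by $\sum_n \ms{T}^n/n!\le e^{\ms{T}}$ at the very end: the second $1/n!$ is precisely the mechanism that keeps $N_I$ out of the exponential rate. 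The fact you would need --- that the small-$g$ terms, where $(|\mf{a}|N_I)^n$ is large, are simultaneously far from saturating $\Gamma(\chi)$ --- is true, but that information is destroyed once you have uniformly replaced the combinatorial factor by $\Gamma(\chi)$ times an exponential.

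Two smaller points. First, your Stirling inequality $\frac{(2\chi+1-g)!}{g!\,(\chi-2g+2)!^{2}}\le M\bigl(\tfrac{3\sqrt3}{4}\bigr)^{\chi}4^{g}\,\Gamma(\chi)$ is, as you note, false by a factor $\chi^{3/2}$ at the corner $g\approx\chi/2$; enlarging the base repairs it, and the factor-$2$ headroom between $\ms{A}_I$ and $\ms{A}_{\psi}$ suffices to absorb this, so it is benign. Second, attributing that factor of $2$ to ``the summation over the $O(\chi)$ admissible pairs'' is not correct: summing $O(\chi)$ terms costs only a polynomial factor. In the paper the factor $2$ arises because the $g$-uniform bound $3^{3g-3+2n}=3^{2\chi+1-g}$ equals $3\cdot(3\sqrt3)^{\chi}=3\cdot 2^{\chi}\bigl(\tfrac{3\sqrt3}{2}\bigr)^{\chi}$ at the dominant corner $n=2$; it is the price of uniformity in $g$, and it is this headroom that absorbs the polynomial losses.
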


\begin{proof}
	Recall the definition
	\begin{equation}
		f_{z_0,\chi}(z) = \sum_{\substack{g \geq 0, \ n > 0 \\ 2g - 2 + n = \chi}} \frac{1}{n!} \int_{z_0}^z \cdots \int_{z_0}^z \omega_{g,n}\,.
	\end{equation}
	By \cref{lem:bound:periods}, the generalised period $\int_{z_0}^{z}$ is bounded. Consider the associated constants $N$ and $v$, which can be taken to be uniform for paths remaining in a given compact subset of the universal cover of $\Sigma \setminus \mf{a}$ between two points $z_0$ and $z$. By \cref{thm:bound:periods}, we have an upper bound for the individual terms:
	\begin{equation}
	\begin{split}
		\big|f_{z_0,\chi}(z)\big|
		& \leq
		\frac{S(\tfrac{u}{v})}{v}
		\sum_{\substack{g \geq 0, \ n > 0 \\ 2g - 2 + n = \chi}}
			\frac{(|\mf{a}| \, N)^n}{n!}
			\left( \frac{2u}{27} \right)^g
			Q^{2g-2+n}
			\left( \frac{|\mf{a}| \, P(\frac{u}{v})}{v^2} \right)^{3g - 3 + n} \,
			\frac{(3g - 3 + 2n)!}{g! \, n!} \\
		& \leq
		\frac{S(\tfrac{u}{v})}{v} \, Q^{\chi} \, \Gamma(\chi)
		\sum_{\substack{g \geq 0, \ n > 0 \\ 2g - 2 + n = \chi}}
			\frac{(|\mf{a}| \, N)^n}{n!}
			\left( \frac{2u}{27} \right)^g
			\left( \frac{|\mf{a}| \, P(\frac{u}{v})}{v^2} \right)^{3g - 3 + n} \,
			3^{3g - 3 + 2n} \\
		& =
		\ms{S} \, \frac{\Gamma(\chi)}{\ms{A}_{\psi}^{\chi}}
		\sum_{\substack{g \geq 0, \ n > 0 \\ 2g - 2 + n = \chi}}
			\frac{\ms{T}^n}{n!}
	\end{split}
	\end{equation}
	for some constants $\ms{S}, \, \ms{A}_{\psi}, \, \ms{T} > 0$. One can take $\ms{A}_{\psi}$ as in the statement of the theorem. To conclude, the sum over $(g,n)$ can be bounded by
	\begin{equation}
		\sum_{\substack{g \geq 0, \ n > 0 \\ 2g - 2 + n = \chi}}
			\frac{\ms{T}^n}{n!}
		\leq
		\sum_{n \ge 0} \frac{\ms{T}^n}{n!}
		=
		e^{\ms{T}} \,.
	\end{equation}
	This would simply adjust the overall constant.
\end{proof}

\section{Examples of applications}
\label{sec:exmpls}
Unless explicitly stated, in all subsequent examples we take $\Sigma = \C$ and $\omega_{0,2}(z_1,z_2) = \frac{\dd z_1 \dd z_2}{(z_1 - z_2)^2}$.

\textbf{Weil--Petersson volumes.}
Consider the spectral curve $x(z) = z^2$ and $y(z) = - \frac{\sin(2\pi z)}{4\pi}$. There is a single ramification point at $z = 0$, so we omit the dependence on $\alpha \in \mf{a}$. In this case, the associated amplitudes compute the Weil--Petersson intersection numbers \cite{Mir07b,EO}:
\begin{equation}
	F_{g;k_1,\ldots,k_n} = \int_{\Mbar_{g,n}} e^{2\pi^2 \kappa_1} \prod_{i=1}^n \psi_i^{k_i}
	=
	\braket{e^{2\pi^2 \kappa_1} \tau_{k_1} \cdots \tau_{k_n}}_g \,.
\end{equation}
The constants in the boundedness condition can be taken to be $M_{\theta} = \frac{1}{2}$, $M_{\phi} = 0$, and $\rho = \frac{1}{4}$. Thus, 
\begin{equation}
	\braket{e^{2\pi^2 \kappa_1} \tau_{k_1} \cdots \tau_{k_n}}_g
	\prod_{i=1}^n (2k_i+1)!!
	\le
	\ms{S} \, \frac{\Gamma(2g-2+n)}{\ms{A}^{2g-2+n}}
\end{equation}
for some $\ms{S} > 0$ and $\ms{A}^{-1} = 1728 \cdot 3^{1/3} \cdot 5^{-3/2} \simeq 222.910$. The same inequality holds for $n = 0$, as we can take $\rho_t = 1$. It is worth comparing our estimate with the one obtained by Grushevski \cite{Gru01} using ad hoc methods for Weil--Petersson volumes and valid for $k_1 = \cdots = k_n = 0$ only. The latter is slightly worst, with an exponential factor of $\ms{A}_{\textup{Gru}}^{-1} = 768 \cdot e^{-1} \sqrt{6} \simeq 692.058$. It is also worth comparing it with the optimal one obtained by Mirzakhani--Zograf \cite{MZ15} (again, only valid for $k_1 = \cdots = k_n = 0$) that is $\ms{A}_{\textup{MZ}}^{-1} = 4\pi^2 \simeq 39.478$, see also \cite{AM22,EGGLS24}. Thus, our exponential growth rate is off by a factor of roughly $5.646$.

\textbf{Euler characteristic and lattice points of $\mc{M}_{g,n}$.}
Consider the Gaussian Unitary Ensemble (GUE) spectral curve $x(z) = z + z^{-1}$ and $y(z) = -z$. It has two ramification points at $z = \pm 1$. There are three interesting generalised periods \cite{Nor13,Eyn16}: 
\begin{itemize}
	\item $\chi(\mc{M}_g) = F_g$ is the Euler characteristic of the moduli space of genus $g$ curves;

	\item $\mc{N}_{g,n}(\ell_1,\ldots,\ell_n) = \big(\bigotimes_{i = 1}^n \ell_i^{-1} I_{(z,\infty,\ell_i)}\big)[\omega_{g,n}]$ is the number of metric ribbon graphs of genus $g$ with $n$ boundaries of lengths $L_1,\ldots,L_n > 0$, i.e. the number of lattice points in the combinatorial model of the moduli space of curves;

	\item $ \mc{C}_{g,n}(\ell_1,\ldots,\ell_n) = (-1)^n \big(\bigotimes_{i = 1}^n I_{(1/x,\infty,\ell_i)}\big)[\omega_{g,n}]$ is the number of maps of genus $g$ with $n$ boundaries of lengths $\ell_1,\ldots,\ell_n > 0$ and no internal face. 
\end{itemize}
The $R$-matrix in this case was computed in \cite{ACNP15}. The constants in the boundedness condition can be taken to be $M_{\theta} = 2$, $M_{\phi} = \frac{1}{4}$, and $\rho = \rho_t = 4$; the constant in the boundedness property for $I_{(z,\infty,\ell)}$ and $I_{(1/x,\infty,\ell)}$ can be taken to be $v = 4$. Thus: 
\begin{equation} 
	|\chi_{g}|
	\le
	\ms{S} \, \frac{\Gamma(2g-2)}{\ms{A}^{2g-2}}
	\qquad\text{and}\qquad
	\max\big\{\mc{N}_{g,n}(\ell_1,\ldots,\ell_n), \mc{C}_{g,n}(\ell_1,\ldots,\ell_n)\big\}
	\le
	\ms{S}_\ell \, \frac{\Gamma(2g-2+n)}{\ms{A}^{2g-2+n}}
	\,,
\end{equation}
for some $\ms{S}$, $\ms{S}_{\ell} > 0$ and $\ms{A}^{-1} = \frac{27}{10} \cdot 3^{1/3} \simeq 12.314$. It is worth mentioning that the Euler characteristic of the moduli space of curves is explicitly computed by the celebrated Harer--Zagier formula \cite{HZ86}:\
\begin{equation}
	\chi_g = \frac{B_{2g}}{2g(2g-2)} \,,
	\qquad\implies\qquad
	|\chi_g| \sim \frac{g}{\pi^2} \frac{\Gamma(2g-2)}{(2\pi)^{2g-2}} \,.
\end{equation}
Note the optimal exponential growth rate $\ms{A}_{\textup{HZ}}^{-1} = \frac{1}{2\pi} \simeq 0.159$. Thus, our exponential growth rate is off by a factor of roughly $77.372$.

\textbf{Masur--Veech volumes.}
Consider the local spectral curve given by $x(z) = z^2$, $y(z) = -z/2$, and $\omega_{0,2}(z_1,z_2) = \frac{1}{2} \bigl( \frac{1}{(z_1 - z_2)^2} + \frac{\pi^2}{\sin^2(\pi(z_1 - z_2))} \bigr) \dd z_1 \dd z_2$. It has a single ramification point at $z = 0$, so we omit the dependence on $\alpha \in \mf{a}$. The amplitudes $F_{g;0,\ldots,0}$ compute the Masur--Veech volumes of the principal stratum of the moduli space of quadratic differentials \cite{And+23}:
\begin{equation}
	\mathrm{Vol}(\mc{Q}_{g,n})
	=
	2^{4g-2+n} \frac{(4g-4+n)!}{(6g-7+2n)!} \, F_{g;0,\ldots,0} \,.
\end{equation}
The constants in the boundedness condition can be taken to be $M_{\theta} = \frac{1}{16 \sqrt{2}}$, $M_{\phi} = \frac{\pi^2}{12}$, and $\rho = \frac{1}{8}$. Thus:
\begin{equation}
	\mathrm{Vol}(\mc{Q}_{g,n})
	\le
	\frac{\ms{S}}{\ms{A}^{2g-2+n}}
\end{equation}
for $\ms{A} = \tfrac{64(24 + \pi^2)}{5} \cdot 3^{-2/3} \cdot 5^{-1/2} \simeq 93.208$. We can compare the bound with the large genus asymptotic formula proved by Aggarwal in \cite{Agg21}:
\begin{equation}
	\mathrm{Vol}(\mc{Q}_{g,n}) \sim \frac{\frac{4}{\pi} (\frac{3}{4})^{n}}{(\sqrt{6})^{2g-2+n}} \,.
\end{equation}
In other words, our exponential factor is off by a factor of more roughly $228.313$.

\textbf{Maps.}
Consider the spectral curve $x(z) = \alpha + \gamma(z + z^{-1})$ and $y(z) = \sum_{k = 1}^{d} u_k z^{-k}$. It is a deformation of the GUE spectral curve, which appears in the enumeration of maps with internal faces with Boltzmann weights. The parameters $u_k$ are algebraic functions of these Boltzmann weights \cite{Eyn16}. The weighted number of maps genus $g$ with $n$ boundaries of length $\ell_1,\ldots,\ell_n > 0$ is obtained as
\begin{equation}
	\mathcal{T}_{g,n}(\ell_1,\ldots,\ell_n) = (-1)^n\bigotimes_{i = 1}^{n} I_{(1/x,\infty,\ell_i)}[\omega_{g,n}] \,.
\end{equation}
Assume that the parameters are such that $\frac{y(z) - y(z^{-1})}{z - z^{-1}}$ has no zeroes when $|x(z)| \leq \alpha + 2\gamma$ (this implies in particular off-criticality). Then, recycling the boundedness estimates for the GUE spectral curve, we can take for constants in the boundedness property $M_{\theta} = 2|\gamma|^4 \mathfrak{m}$ where $\mathfrak{m} = \max_{|z| = 1} \big|\frac{z - z^{-1}}{y(z) - y(z^{-1})}\big|$, and $M_{\phi} = \frac{1}{4|\gamma|^{2}}$, $\rho = \rho_t = 4|\gamma|^2$. For the boundedness property of the generalised period $I_{(1/x,\infty,\ell)}$ we can take $\nu = 4|\gamma|^{-2}$. Thus,
\begin{equation}
	|\mathcal{T}_{g,n}(\ell_1,\ldots,\ell_n)| \leq \ms{S}_{n,\ell} \frac{\Gamma(2g - 2 + n)}{\ms{A}^{2g - 2 + n}}
\end{equation}
for some $\ms{S}_{\ell} > 0$ and $\ms{A}^{-1}$ specified from the above constants via the formula below \eqref{eq:bound:periods}.

%
%
%
%
%
\textbf{Stationary Gromov--Witten invariants of $\P^1$.}
Consider the spectral curve $x(z) = z + z^{-1}$ and $y(z) = \log(z)$. It has two ramification points at $z = \pm 1$. The generalised periods $I_{(1/x,\infty,k+1)}$ compute the stationary Gromov--Witten invariants of the Riemann sphere \cite{NS14,DOSS14}:
\begin{equation}
	\Braket{ \tau_{k_1}(\omega) \cdots \tau_{k_n}(\omega) }_{g,d}^{\P^1}
	=
	\left(
		\bigotimes_{i=1}^n \frac{I_{(1/x,\infty,k_i+1)}}{(k_i+1)!}
	\right)
	[\omega_{g,n}] \,.
\end{equation}
The degree $d$ is determined by $|k| = 2g-2+2d$. As in the GUE case, the constants in the boundedness condition can be taken to be $M_{\theta} = 2$, $M_{\phi} = \frac{1}{4}$, and $\rho = \rho_t = 4$; the constant in the boundedness property for $I_{(1/x,\infty,k+1)}$ can be taken to be $v = 4$. Thus: 
\begin{equation} 
	\Braket{ \tau_{k_1}(\omega) \cdots \tau_{k_n}(\omega) }_{g,d}^{\P^1}
	\le
	\ms{S}_k \, \frac{\Gamma(2g-2+n)}{\ms{A}^{2g-2+n}}
	\,,
\end{equation}
for some $\ms{S}_{k} > 0$ and $\ms{A}^{-1} = \frac{27}{10} \cdot 3^{1/3} \simeq 12.314$.

\printbibliography

@article{AM22,
	title={A high-genus asymptotic expansion of Weil--Petersson volume polynomials},
	author={Anantharaman, Nalini and Monk, Laura},
	journal = {J. Math. Phys.},
	fjournal={Journal of Mathematical Physics},
	year={2022},
	volume={63},
	number={4},
	pages = {043502},
	doi = {10.1063/5.0039385},
	archivePrefix = {arXiv},
	primaryClass = {math.GT},
	eprint = {2011.14889},
	publisher={AIP Publishing}
}

@book{MS16,
	title={Divergent Series, Summability and Resurgence I},
	subtitle={Monodromy and Resurgence},
	author={Mitschi, Claude and Sauzin, David},
	series={Lecture Notes in Mathematics},
	volume={2153},
	year={2016},
	publisher={Springer},
	doi={10.1007/978-3-319-28736-2}
}

@article{DGZ95,
	title={{2D} gravity and random matrices},
	author={{Di Francesco}, Philippe and Ginsparg, Paul and {Zinn-Justin}, Jean},
	journal={Phys. Rep.},
	fjournal={Physics Reports},
	volume={254},
	number={1-2},
	pages={1--133},
	year={1995},
	publisher={Elsevier},
	doi = {10.1016/0370-1573(94)00084-G},
	eprint={hep-th/9306153},
	archivePrefix={arXiv},
}

@article{GM90,
	author = {Gross, David J. and Migdal, Alexander A.},
	title = {Nonperturbative two-dimensional quantum gravity},
	journal = {Phys. Rev. Lett.},
	volume = {64},
	number = {127},
	year = {1990},
	doi = {10.1103/PhysRevLett.64.127}
}

@inbook{She91,
	author="Shenker, Stephen H.",
	editor="Alvarez, Orlando and Marinari, Enzo and Windey, Paul",
	title="The strength of nonperturbative effects in string theory",
	booktitle="{Random Surfaces and Quantum Gravity}",
	year="1991",
	publisher="Springer US",
	address="Boston, MA",
	pages="191--200",
	doi="10.1007/978-1-4615-3772-4_12",
}

@article{DS90,
	author = {Douglas, Michael R. and Shenker, Stephen H.},
	title = {Strings in less than one dimension},
	journal = {Nucl. Phys. B},
	volume = {335},
	pages = {635--654},
	number = {3},
	year = {1990},
	doi = {10.1016/0550-3213(90)90522-F}
}

@article{BK90,
	author = {Br\'ezin, \'Edouard and Kazakov, Vladimir A.},
	title = {Exactly solvable field theories of closed strings},
	journal = {Nucl. Phys. B},
	volume = {236},
	number = {2},
	year = {1990},
	pages = {144--150},
	doi = {10.1016/0370-2693(90)90818-Q}
}

@article{GZJ90,
	title = {{2D} gravity {+} {1D} matter},
	author = {Ginsparg, Paul and {Zinn-Justin}, Jean},
	journal = {Phys. Lett. B},
	volume = {240},
	number = {3-4},
	pages = {333--340},
	year = {1990},
	doi = {10.1016/0370-2693(90)91108-N}
}

@article{BSSV23,
	title={Resurgent Stokes data for Painlev{\'e} equations and two-dimensional quantum (super) gravity},
	author={Baldino, Salvatore and Schiappa, Ricardo and Schwick, Maximilian and Vega, Roberto},
	journal={Commun. Numb. Th. Physics},
	volume={17},
	number={2},
	pages={385--552},
	year={2023},
	publisher={International Press of Boston},
	eprint={2203.13726},
	archivePrefix={arXiv},
	primaryClass={hep-th},
	doi={10.4310/CNTP.2023.v17.n2.a5}
}

@article{MSW08,
	author = {Mari{\~{n}}o, Marcos and Schiappa, Ricardo and Weiss, Marlene},
	title = {Nonperturbative effects and the large-order behavior of matrix models and topological strings},
	journal = {Commun. Numb. Th. Phys.},
	eprint = {0711.1954},
	archivePrefix = {arXiv},
	primaryClass = {hep-th},
	year = {2008},
	volume = {2},
	number = {2},
	pages = {349--419},
	doi={10.4310/cntp.2008.v2.n2.a3}
}

@inproceedings{GK20,
	author = {Gao, Zhicheng and Kang, Mihyun},
	title = {Counting cubic maps with large genus},
	booktitle =	{31st International Conference on Probabilistic, Combinatorial and Asymptotic Methods for the Analysis of Algorithms},
	year = {2020},
	series = {Leibniz International Proceedings in Informatics},
	volume = {159},
	doi={10.4230/LIPIcs.AofA.2020.13}
}

@article{IZ92,
	author = {Itzykson, Claude and Zuber, Jean-Bernard},
	title = {Combinatorics of the modular group {II}. The {K}ontsevich integrals},
	year={1992},
	archivePrefix = {arXiv},
	eprint = {hep-th/9201001},
	journal = {Int. J. Mod. Phys. A},
	volume = {07},
	number = {23},
	pages = {5661--5705},
	doi = {10.1142/S0217751X92002581}
}

@article{ASV12,
	title={The resurgence of instantons in string theory},
	author={Aniceto, In{\^e}s and Schiappa, Ricardo and Vonk, Marcel},
	journal = {Commun. Numb. Th. Phys.},
	fjournal = {Communications in Number Theory and Physics},
	volume={6},
	number={2},
	pages={339--496},
	year={2012},
	eprint={1106.5922},
	archivePrefix={arXiv},
	primaryClass={hep-th},
	doi={10.4310/CNTP.2012.v6.n2.a3}
}

@article{EGGLS24,
	title={Resurgent asymptotics of Jackiw--Teitelboim gravity and the nonperturbative topological recursion},
	author={Eynard, Bertrand and {Garcia-Failde}, Elba and Gregori, Paolo and Lewański, Danilo and Schiappa, Ricardo},
	journal  = {Ann. Henri Poincar{\'e}},
	fjournal = {Annales Henri Poincar{\'e}},
	pages={1--73},
	year={2024},
	organization={Springer},
	eprint={2305.16940},
	archivePrefix={arXiv},
	primaryClass={hep-th},
	doi={10.1007/s00023-023-01412-z}
}

@article{ABS19,
	title={A primer on resurgent transseries and their asymptotics},
	author={Aniceto, In{\^e}s and Ba{\c{s}}ar, G{\"o}k{\c{c}}e and Schiappa, Ricardo},
	journal={Phys. Rep.},
	fjournal={Physics Reports},
	volume={809},
	pages={1--135},
	year={2019},
	publisher={Elsevier},
	doi={10.1016/j.physrep.2019.02.003},
	eprint={1802.10441},
	archivePrefix={arXiv},
	primaryClass={hep-th}
}

@article{GKKM24,
	title={Non-perturbative topological string theory on compact Calabi--Yau 3-folds},
	author={Gu, Jie and {Kashani-Poor}, Amir-Kian and Klemm, Albrecht and Mariño, Marcos},
	journal={SciPost Phys.},
	volume={16},
	number={3},
	pages={079},
	year={2024},
	eprint={2305.19916},
	archivePrefix={arXiv},
	primaryClass={hep-th},
	doi={10.21468/SciPostPhys.16.3.079}
}

@article{GM23,
	title={Exact multi-instantons in topological string theory},
	author={Gu, Jie and Mariño, Marcos},
	journal={SciPost Phys.},
	year={2023},
	volume={15},
	number={4},
	pages={179},
	doi={10.21468/SciPostPhys.15.4.179},
	eprint={2211.01403},
	archivePrefix={arXiv},
	primaryClass={hep-th}
}

@article{CESV16,
	title={Resurgent transseries and the holomorphic anomaly},
	author={{Couso-Santamar{\'\i}a}, Ricardo and Edelstein, Jos{\'e} D. and Schiappa, Ricardo and Vonk, Marcel},
	journal={Ann. Henri Poincar{\'e}},
	volume={17},
	pages={331--399},
	year={2016},
	organization={Springer},
	doi={10.1007/s00023-015-0407-z},
	eprint={1308.1695},
	archivePrefix={arXiv},
	primaryClass={hep-th}
}

@article{GIKM12,
	title={Asymptotics of the instantons of Painlev{\'e} I},
	author={Garoufalidis, Stavros and Its, Alexander and Kapaev, Andrei and Mariño, Marcos},
	fjournal={International Mathematics Research Notices},
	journal={Int. Math. Res. Not.},
	volume={2012},
	number={3},
	pages={561--606},
	year={2012},
	publisher={OUP},
	archiveprefix = {arXiv},
	eprint = {1002.3634},
	primaryclass = {math.CA},  
	doi={10.1093/imrn/rnr029},
}

@article{GM07,
	author = {Garoufalidis, Stavros and Mari{\~{n}}o, Marcos},
	title = {Universality and asymptotics of graph counting problems in non-orientable surfaces},
	archivePrefix = {arXiv},
	eprint = {0812.1195},
	primaryClass = {math.CO},
	journal = {J. Comb. Th. A},
	volume = {117},
	number = {6},
	year = {2010},
	pages = {715--740},
	doi = {10.1016/j.jcta.2009.10.013}
}

@article{ACNP15,
	title={Models of discretized moduli spaces, cohomological field theories, and Gaussian means},
	author={Andersen, J{\o}rgen Ellegaard and Chekhov, Leonid O and Norbury, Paul and Penner, Robert C},
	journal = {J. Geom. Phys.},
	fjournal = {Journal of Geometry and Physics},
	volume={98},
	pages={312--339},
	year={2015},
	publisher={Elsevier},
	archiveprefix = {arXiv},
	eprint = {1501.05867},
	primaryclass = {hep-th},  
	doi={10.1016/j.geomphys.2015.08.018},
}

@article{NS14,
	title={Gromov--Witten invariants of $\mathbb{P}^1$ and Eynard--Orantin invariants},
	author={Norbury, Paul and Scott, Nick},
	journal={Geom. Topol.},
	fjournal={Geometry \& Topology},
	volume={18},
	number={4},
	pages={1865--1910},
	year={2014},
	publisher={Mathematical Sciences Publishers},
	archiveprefix = {arXiv},
	eprint = {1106.1337},
	primaryclass = {math.AG},  
	doi={10.2140/gt.2014.18.1865},
}

@book{Eyn16,
	Author = {Eynard, Bertrand},
	Number = {70},
	Publisher = {Birkh{\"a}user Basel},
	Series = {Progress in Mathematical Physics},
	Title = {Counting Surfaces},
	Year = {2016},
	doi={10.1007/978-3-7643-8797-6}
}

@unpublished{EO,
	title={Weil--Petersson volume of moduli spaces, Mirzakhani's recursion and matrix models},
	author={Eynard, Bertrand and Orantin, Nicolas},
	archiveprefix = {arXiv},
	eprint = {0705.3600},
	primaryclass = {math-ph},  
}

@article{Mir07b,
	title={Weil--Petersson volumes and intersection theory on the moduli space of curves},
	author={Mirzakhani, Maryam},
	journal={J. Amer. Math. Soc.},
	fjournal={Journal of the American Mathematical Society},
	volume={20},
	number={1},
	pages={1--23},
	year={2007},
	doi={10.1090/S0894-0347-06-00526-1}
}

@article{And+23,
	title = {Topological recursion for Masur--Veech volumes},
	author = {Andersen, J{\o}rgen Ellegaard and Borot, Ga{\"{e}}tan and Charbonnier, S{\'{e}}verin and Delecroix, Vincent and Giacchetto, Alessandro and Lewa{\'{n}}ski, Danilo and Wheeler, Campbell},
	year = {2023},
	journal={J. London Math. Soc.},
	fjournal={Journal of the London Mathematical Society},
	volume={107},
	number={1},
	pages={254--332},
	doi={10.1112/jlms.12686},
	archivePrefix = {arXiv},
	eprint = {1905.10352},
	primaryclass = {math.GT},
}

@article{Nor23,
	title={A new cohomology class on the moduli space of curves},
	author={Norbury, Paul},
	journal = {Geom. Topol.},
	volume = {27},
	pages ={2695--2761},
	year ={2023},
	primaryClass = {math.AG},
	archivePrefix = {arXiv},
	eprint = {1712.03662},
	doi = {10.2140/gt.2023.27.2695}}

@article{Nor13,
	title={String and dilaton equations for counting lattice points in the moduli space of curves},
	author={Norbury, Paul},
	journal={Trans. Amer. Math. Soc.},
	fjournal={Transactions of the American Mathematical Society},
	volume={365},
	pages={1687--1709},
	year={2013},
	doi={10.1090/S0002-9947-2012-05559-0},
	archivePrefix = {arXiv},
	primaryClass = {math.AG},
	eprint = {0905.4141},
}

@article{HZ86,
	title={The Euler characteristic of the moduli space of curves},
	author={Harer, John and Zagier, Don},
	journal={Invent. Math.},
	fjournal={Inventiones mathematicae},
	volume={85},
	number={3},
	pages={457--485},
	year={1986},
	publisher={Springer},
	doi={10.1007/BF01390325}
}

@article{BE13,
	title={Think globally, compute locally},
	author={Bouchard, Vincent and Eynard, Bertrand},
	journal={J. High Energy Phys.},
	fjournal={Journal of High Energy Physics},
	volume={2013},
	number={2},
	pages={1--35},
	year={2013},
	publisher={Springer},
	doi={10.1007/JHEP02(2013)143},
	archivePrefix = {arXiv},
	primaryClass = {math-ph},
	eprint = {1211.2302},
}

@article{BHLMR14,
	title = {A generalized topological recursion for arbitrary ramification},
	author = {Bouchard, Vincent and Hutchinson, Joel and Loliencar, Prachi and Meiers, Michael and Rupert, Matthew},
	year = {2014},
	volume = {15},
	number = {1},
	pages = {143--169},
	archivePrefix = {arXiv},
	primaryClass = {math-ph},
	eprint = {1208.6035},
	doi = {10.1007/s00023-013-0233-0},
	journal = {Ann. Henri Poincar\'e}
}

@article{DVV91,
	title={Loop equations and Virasoro constraints in non-perturbative two-dimensional quantum gravity},
	author={Dijkgraaf, Robbert and Verlinde, Herman and Verlinde, Erik},
	journal = {Nucl. Phys. B},
	fjournal={Nuclear Physics B},
	volume={348},
	number={3},
	pages={435--456},
	year={1991},
	publisher={Elsevier},
	doi = {10.1016/0550-3213(91)90199-8},
}

@article{DN18,
	title={Topological recursion for irregular spectral curves},
	author={Do, Norman and Norbury, Paul},
	journal={J. London Math. Soc.},
	fjournal={Journal of the London Mathematical Society},
	volume={97},
	number={3},
	pages={398--426},
	year={2018},
	publisher={Wiley Online Library},
	doi={10.1112/jlms.12112},
	archivePrefix = {arXiv},
	primaryClass = {math.GT},
	eprint = {1412.8334},
}

@article{ABDKS24,
	title={Log topological recursion through the prism of $x-y$ swap},
	author={Alexandrov, Alexander and Bychkov, Boris and {Dunin-Barkowski}, Petr and Kazarian, Maxim and Shadrin, Sergey},
	fjournal = {International Mathematics Research Notices},
	journal={Int. Math. Res. Not.},
	pages={rnae213},
	archivePrefix = {arXiv},
	primaryClass = {math-ph},
	eprint = {2312.16950},
	doi={10.1093/imrn/rnae213}
}

@article{Poi86,
	title={Sur les int{\'e}grales irr{\'e}guli{\`e}res: Des {\'e}quations lin{\'e}aires},
	author={Poincar{\'e}, Henri},
	journal = {Acta Math.},
	fjournal={Acta Mathematica},
	volume={8},
	number={1},
	pages={295--344},
	year={1886},
	publisher={Springer},
	doi={10.1007/BF02417092}
}

@article{CED18,
	title = {Large {Strebel} graphs and $(3,2)$ {Liouville} {CFT}},
	author = {Charbonnier, S{\'e}verin and Eynard, Bertrand and David, Fran{\c{c}}ois},
	journal  = {Ann. Henri Poincar{\'e}},
	fjournal = {Annales Henri Poincar{\'e}},
	volume = {19},
	pages = {1611--1645},
	year = {2018},
	organization = {Springer},
	doi = {10.1007/s00023-018-0662-x},
	archivePrefix = {arXiv},
	primaryClass = {math-ph},
	eprint = {1709.02709},
}

@article{EM11,
	title = {A holomorphic and background independent partition function for matrix models and topological strings},
	author = {Eynard, Bertrand and Mari{\~{n}}o, Marcos},
	journal = {J. Geom. Phys.},
	volume = {61},
	number = {7},
	year = {2011},
	pages = {1181--1202},
	eprint = {hep-th/0810.4273},
	archivePrefix = {arXiv},
	doi = {10.1016/j.geomphys.2010.11.012}
}

@article{EGMO24,
	author = {Eynard, Bertrand and {Garcia-Failde}, Elba and Marchal, Olivier and Orantin, Nicolas},
	title = {Quantization of classical spectral curves via topological recursion},
	journal = {Commun. Math. Phys.},
	volume = {405},
	number = {116},
	year = {2024},
	primaryClass = {math-ph},
	archivePrefix = {arXiv},
	eprint = {2106.04339},
	doi = {10.1007/s00220-024-04997-6}
}

@article{BE12,
	author = {Borot, Ga\"etan and Eynard, Bertrand},
	title = {Geometry of spectral curves and all order dispersive integrable system},
	year = {2012},
	journal = {Symmetry Integr. Geom.: Methods Appl.},
	fjournal = {SIGMA. Symmetry, Integrability and Geometry: Methods and Applications},
	volume = {8},
	number = {100},
	doi = {10.3842/SIGMA.2012.100},
	archivePrefix = {arXiv},
	eprint = {1110.4936},
	primaryClass = {math-ph},
}

@inbook{Dub96,
	title = {Geometry of {$2d$} topological field theories},
	author = {Dubrovin, Boris},
	series = {Lecture Notes in Mathematics},
	booktitle = {Integrable systems and quantum groups},
	year = {1996},
	volume = {1620},
	publisher = {Springer},
	address = {Berlin},
	editor = {M.~Francaviglia and S.~Greco},
	pages = {120--348},
	doi = {10.1007/BFb0094793},
	eprint = {hep-th/9407018},
	archivePrefix = {arXiv}
}

@article{IS16,
	title = {Quantum curve and the first {Painlev{\'e}} equation},
	author = {Iwaki, Kohei and Saenz, Axel},
	journal = {Symmetry Integr. Geom.: Methods Appl.},
	fjournal = {SIGMA. Symmetry, Integrability and Geometry: Methods and Applications},
	volume = {12},
	number = {011},
	year = {2016},
	pages={1--24},
	publisher = {SIGMA. Symmetry, Integrability and Geometry: Methods and Applications},
	doi = {10.3842/SIGMA.2016.011},
	archivePrefix = {arXiv},
	primaryClass = {math-ph},
	eprint = {1507.06557},
}

@inproceedings{Wit91,
	title = {Two-dimensional gravity and intersection theory on moduli space},
	author = {Witten, Edward},
	booktitle = {{Surveys in Differential Geometry}},
	volume = {1},
	eventtitle = {Conference on geometry and topology},
	venue= {Harvard University, Cambridge, MA, USA},
	eventdate = {1990-04-27/1990-04-29},
	pages = {243--310},
	year = {1991},
	doi = {10.4310/SDG.1990.v1.n1.a5}
}

@article{Kon92,
	title = {Intersection theory on the moduli space of curves and the matrix {Airy} function},
	author = {Kontsevich, Maxim},
	journal = {Commun. Math. Phys.},
	fjournal = {Communications in Mathematical Physics},
	year = {1992},
	volume = {147},
	number = {1},
	pages = {1--23},
	doi = {10.1007/BF02099526},
}

@article{CMSZ20,
	title = {{Masur--Veech} volumes and intersection theory on moduli spaces of {Abelian} differentials},
	author = {Chen, Dawei and M{\"o}ller, Martin and Sauvaget, Adrien and Zagier, Don},
	journal = {Invent. Math.},	
	fjournal = {Inventiones mathematicae},
	volume = {222},
	number = {1},
	pages = {283--373},
	year = {2020},
	publisher = {Springer},
	archivePrefix = {arXiv},
	primaryClass = {math.AG},
	eprint = {1901.01785},
	doi = {10.1007/s00222-020-00969-4}
}

@article{Agg20,
	title = {Large genus asymptotics for volumes of strata of abelian differentials},
	author = {Aggarwal, Amol},
	journal  = {J. Am. Math. Soc.},
	fjournal = {Journal of the American Mathematical Society},
	volume = {33},
	number = {4},
	pages = {941--989},
	note = {With an appendix by A.~Zorich},
	year = {2020},
	archivePrefix = {arXiv},
	primaryClass = {math.AG},
	eprint = {1804.05431},
	doi = {10.1090/jams/947}
}

@unpublished{Eyn,
	title = {Large genus behavior of topological recursion},
	author = {Eynard, Bertrand},
	archivePrefix = {arXiv},
	primaryClass = {math-ph},
	eprint = {1905.11270},
}

@article{MZ15,
	title = {Towards large genus asymptotics of intersection numbers on moduli spaces of curves},
	author = {Mirzakhani, Maryam and Zograf, Peter},
	journal = {Geom. Funct. Anal.},
	fjournal = {Geometric and Functional Analysis},
	volume = {25},
	number = {4},
	pages = {1258--1289},
	year = {2015},
	publisher = {Springer},
	archivePrefix = {arXiv},
	primaryClass = {math.AG},
	eprint = {1112.1151},
	doi = {10.1007/s00039-015-0336-5}
}

@unpublished{Zog,
	title = {On the large genus asymptotics of {Weil--Petersson} volumes},
	author = {Zograf, Peter},
	archivePrefix = {arXiv},
	primaryClass = {math.AG},
	eprint = {0812.0544},
}

@article{ST01,
	title = {Estimates of {Weil--Petersson} volumes via effective divisors},
	author = {Schumacher, Georg and Trapani, Stefano},
	journal = {Commun. Math. Phys.},
	fjournal = {Communications in Mathematical Physics},
	volume = {222},
	pages = {1--7},
	year = {2001},
	publisher = {Springer},
	doi = {10.1007/s002200100493},
	archiveprefix = {arXiv},
	eprint = {math/0005094},
}

@article{Gru01,
	title = {An explicit upper bound for {Weil--Petersson} volumes of the moduli spaces of punctured {Riemann} surfaces},
	author = {Grushevsky, Samuel},
	journal = {Math. Ann.},
	fjournal = {Mathematische Annalen},
	volume = {321},
	number = {1},
	pages = {1--13},
	year = {2001},
	publisher = {Springer},
	doi = {10.1007/PL00004496},
	archiveprefix = {arXiv},
	eprint = {math/0003217},
}

@article{MP19,
	title = {Lengths of closed geodesics on random surfaces of large genus},
	author = {Mirzakhani, Maryam and Petri, Bram},
	journal = {Comment. Math. Helv.},
	fjournal = {Commentarii Mathematici Helvetici},
	volume = {94},
	number = {4},
	pages = {869--889},
	year = {2019},
	doi = {10.4171/CMH/477},
	archivePrefix = {arXiv},
	primaryClass = {math.GT},
	eprint = {1710.09727}
}

@article{Wri20,
	title = {A tour through {Mirzakhani}'s work on moduli spaces of {Riemann} surfaces},
	author = {Wright, Alex},
	journal = {Bull. Am. Math. Soc.},
	fjournal = {Bulletin of the American Mathematical Society},
	volume = {57},
	number = {3},
	pages = {359--408},
	year = {2020},
	doi = {10.1090/bull/1687},
	archiveprefix = {arXiv},
	eprint = {1905.01753},
	primaryclass = {math.DS},
}

@unpublished{EGGGL,
	title = {Resurgent large genus asymtotics of intersection numbers},
	author = {Eynard, Bertrand and {Garcia-Failde}, Elba and Giacchetto, Alessandro and Gregori, Paolo and Lewański, Danilo},
	archivePrefix = {arXiv},
	primaryClass = {math.AG},
	eprint = {2309.03143},
}

@unpublished{Mar,
	author = {{Mari\~{n}o}, Marcos}, 
	title = {An introduction to resurgence in quantum theory},
	note = {Lecture notes available at \url{https://www.marcosmarino.net/}}
}

@article{ABCO24,
	title = {The {ABCD} of topological recursion},
	author = {Andersen, J{\o}rgen E. and Borot, Ga{\"e}tan and Chekhov, Leonid and Orantin, Nicolas},
	journal = {Adv. Math.},
	fjournal = {Advances in Mathematics},
	year = {2024},
	volume = {439},
	pages = {109473},
	archiveprefix = {arXiv},
	eprint = {1703.03307},
	primaryclass = {math-ph},
	doi={10.1016/j.aim.2023.109473}
}

@inproceedings{KS18,
	title = {{Airy} structures and symplectic geometry of topological recursion}, 
	author = {Kontsevich, Maxim and Soibelman, Yan},
	booktitle = {2016 AMS von Neumann Symposium, Topological Recursion and its Influence in Analysis, Geometry and Topology},
	journal = {Proc. Sympos. Pure Math.},
	fjournal = {Proceedings of Symposia in Pure Mathematics},
	volume = {100},
	pages = {433--489},
	year = {2018},
	publisher = {Amer. Math. Soc., Providence, RI},
	eprint = {1701.09137},
	archivePrefix = {arXiv},
	primaryClass = {math.AG}, 
	doi = {10.1090/pspum/100},
}

@unpublished{CGG,
	title = {Relations on {$\overline{\mathcal{M}}_{g,n}$} and the negative {$r$}-spin {Witten} conjecture},
	author = {Chidambaram, Nitin K. and {Garcia-Failde}, Elba and Giacchetto, Alessandro},
	eprinttype = {arXiv},
	eprint = {2205.15621},
	primaryclass = {math.AG},
}

@article{DGZZ21,
	title  = {{Masur--Veech} volumes, frequencies of simple closed geodesics and intersection numbers of moduli spaces of curves},
	author = {Delecroix, Vincent and Goujard, {\'E}lise and Zograf, Peter and Zorich, Anton},
	journal = {Duke Math. J.},
	fjournal = {Duke Mathematical Journal},
	volume = {170},
	number = {12},
	pages = {2633--2718},
	year = {2021},
	doi = {10.1215/00127094-2021-0054},
	archivePrefix = {arXiv},
	primaryClass = {math.GT},
	eprint = {1908.08611},
}

@article{Agg21,
	title = {Large genus asymptotics for intersection numbers and principal strata volumes of quadratic differentials},
	author = {Aggarwal, Amol},
	journal = {Invent. Math.},	
	fjournal = {Inventiones mathematicae},
	publisher = {Springer},
	year = {2021},
	volume = {226},
	number = {3},
	pages = {897--1010},
	doi = {10.1007/s00222-021-01059-9},
	archiveprefix = {arXiv},
	eprint = {2004.05042},
	primaryclass = {math.AG},
}

@article{Eyn14,
	title = {Invariants of spectral curves and intersection theory of moduli spaces of complex curves},
	author = {Eynard, Bertrand},
	journal = {Commun. Numb. Th. Phys.},
	fjournal = {Communications in Number Theory and Physics},
	year = {2014},
	volume = {8},
	number = {3},
	pages = {541--588},
	doi = {10.4310/CNTP.2014.v8.n3.a4},
	archiveprefix = {arXiv},
	eprint = {1110.2949},
	primaryclass = {math-ph},
}

@article{EO07,
	title = {Invariants of algebraic curves and topological expansion},
	author = {Eynard, Bertrand and Orantin, Nicolas},
	journal = {Commun. Numb. Th. Phys.},
	fjournal = {Communications in Number Theory and Physics},
	year = {2007},
	volume = {1},
	number = {2},
	pages = {347--452},
	publisher = {International Press of Boston},
	doi = {10.4310/CNTP.2007.v1.n2.a4},
	archiveprefix = {arXiv},
	eprint = {math-ph/0702045},
}

@article{DOSS14,
	title = {Identification of the {Givental} formula with the spectral curve topological recursion procedure},
	author = {{Dunin-Barkowski}, Petr and Orantin, Nicolas and Shadrin, Sergey and Spitz, Loek},
	journal = {Commun. Math. Phys.},
	fjournal = {Communications in Mathematical Physics},
	volume = {328},
	number = {2},
	pages = {669--700},
	year = {2014},
	publisher = {Springer},
	doi = {10.1007/s00220-014-1887-2},
	archiveprefix = {arXiv},
	eprint = {1211.4021},
	primaryclass = {math-ph},
}

@inproceedings{DNOPS18,
	title = {Primary invariants of {Hurwitz} {Frobenius} manifolds},
	author = {{Dunin-Barkowski}, Petr and Norbury, Paul and Orantin, Nicolas and Popolitov, Alexandr and Shadrin, Sergey},
	booktitle = {2016 AMS von Neumann Symposium, Topological Recursion and its Influence in Analysis, Geometry and Topology},
	journal = {Proc. Sympos. Pure Math.},
	fjournal = {Proceedings of Symposia in Pure Mathematics},
	publisher = {Amer. Math. Soc., Providence, RI},
	year = {2018},
	volume = {100},
	pages = {297--331},
	doi = {10.1090/pspum/100},
	archiveprefix = {arXiv},
	eprint = {1605.07644},
	primaryclass = {math.AG},
}

@inproceedings{Pan19,
	title = {Cohomological field theory calculations},
	author = {Pandharipande, Rahul},
	booktitle = {Proceedings of the International Congress of Mathematicians (ICM 2018)},
	publisher = {World Scientific, River Edge, NJ},
	year = {2019},
	pages = {869--898},
	doi = {10.1142/9789813272880_0031},
	archivePrefix = {arXiv},
	eprint = {1712.02528},
	primaryClass = {math.AG},
}

@article{IMS18,
	title = {{Painlev{\'e}} equations, topological type property and reconstruction by the topological recursion},
	author = {Iwaki, Kohei and Marchal, Olivier and Saenz, Axel},
	journal = {J. Geom. Phys.},
	fjournal = {Journal of Geometry and Physics},
	year = {2018},
	volume = {124},
	pages = {16--54},
	doi = {10.1016/j.geomphys.2017.10.009},
	archivePrefix = {arXiv},
	eprint = {1601.02517},
	primaryClass = {math-ph},
}

@article{Kap04,
	title = {Quasi-linear {Stokes} phenomenon for the {Painlev{\'e}} first equation},
	author = {Kapaev, Andrei A.},
	journal = {J. Phys. A: Math. Gen.},
	fjounral = {Journal of Physics A: Mathematical and General},
	year = {2004},
	volume = {37},
	number = {46},
	pages = {11149--11167},
	doi = {10.1088/0305-4470/37/46/005},
	archivePrefix = {arXiv},
	eprint = {nlin.SI/0404026},
}

@article{BBE15,
	title = {Rational differential systems, loop equations, and application to the {$q$th} reductions of {KP}},
	author = {Berg{\`e}re, Michel and Borot, Ga{\"e}tan and Eynard, Bertrand},
	journal  = {Ann. Henri Poincar{\'e}},
	fjournal = {Annales Henri Poincar{\'e}},
	year = {2015},
	volume = {16},
	number = {12},
	pages = {2713--2782},
	doi = {10.1007/s00023-014-0391-8},
	archivePrefix = {arXiv},
	eprint = {1312.4237},
	primaryClass = {math-ph},
}

\end{document}